\theoremstyle{plain}
\newtheorem{theorem}{Theorem}[section]
\newtheorem{proposition}[theorem]{Proposition}
\newtheorem{lemma}[theorem]{Lemma}
\newtheorem{corollary}[theorem]{Corollary}
\theoremstyle{definition}
\newtheorem{definition}{Definition}[section]
\theoremstyle{remark}
\newdimen\proofrulebreadth \proofrulebreadth=.05em
\newdimen\proofdotseparation \proofdotseparation=1.25ex
\newdimen\proofrulebaseline \proofrulebaseline=2ex
\let\then\relax
\def\hfi{\hskip0pt plus.0001fil}
\mathchardef\squigto="3A3B
\newif\ifinsideprooftree\insideprooftreefalse
\newif\ifonleftofproofrule\onleftofproofrulefalse
\newif\ifproofdots\proofdotsfalse
\newif\ifdoubleproof\doubleprooffalse
\let\wereinproofbit\relax
\newdimen\shortenproofleft
\newdimen\shortenproofright
\newdimen\proofbelowshift
\newbox\proofabove
\newbox\proofbelow
\newbox\proofrulename
\def\shiftproofbelow{\let\next\relax\afterassignment\setshiftproofbelow\dimen0 }
\def\shiftproofbelowneg{\def\next{\multiply\dimen0 by-1 }%
\afterassignment\setshiftproofbelow\dimen0 }
\def\setshiftproofbelow{\next\proofbelowshift=\dimen0 }
\def\setproofrulebreadth{\proofrulebreadth}
\def\prooftree{
%
\ifnum  \lastpenalty=1
\then   \unpenalty
\else   \onleftofproofrulefalse
\fi
%
\ifonleftofproofrule
\else   \ifinsideprooftree
        \then   \hskip.5em plus1fil
        \fi
\fi
%
\bgroup
\setbox\proofbelow=\hbox{}\setbox\proofrulename=\hbox{}%
\let\justifies\proofover\let\leadsto\proofoverdots\let\Justifies\proofoverdbl
\let\using\proofusing\let\[\prooftree
\ifinsideprooftree\let\]\endprooftree\fi
\proofdotsfalse\doubleprooffalse
\let\thickness\setproofrulebreadth
\let\shiftright\shiftproofbelow \let\shift\shiftproofbelow
\let\shiftleft\shiftproofbelowneg
\let\ifwasinsideprooftree\ifinsideprooftree
\insideprooftreetrue
%
\setbox\proofabove=\hbox\bgroup$\displaystyle 
\let\wereinproofbit\prooftree
%
\shortenproofleft=0pt \shortenproofright=0pt \proofbelowshift=0pt
%
\onleftofproofruletrue\penalty1
}
\def\eproofbit{
%
\ifx    \wereinproofbit\prooftree
\then   \ifcase \lastpenalty
        \then   \shortenproofright=0pt  
        \or     \unpenalty\hfil         
        \or     \unpenalty\unskip       
        \else   \shortenproofright=0pt  
        \fi
\fi
%
\global\dimen0=\shortenproofleft
\global\dimen1=\shortenproofright
\global\dimen2=\proofrulebreadth
\global\dimen3=\proofbelowshift
\global\dimen4=\proofdotseparation
\global\count255=\proofdotnumber
%
$\egroup  
%
\shortenproofleft=\dimen0
\shortenproofright=\dimen1
\proofrulebreadth=\dimen2
\proofbelowshift=\dimen3
\proofdotseparation=\dimen4
\proofdotnumber=\count255
}
\def\proofover{
\eproofbit 
\setbox\proofbelow=\hbox\bgroup 
\let\wereinproofbit\proofover
$\displaystyle
}%
\def\proofoverdbl{
\eproofbit 
\doubleprooftrue
\setbox\proofbelow=\hbox\bgroup 
\let\wereinproofbit\proofoverdbl
$\displaystyle
}%
\def\proofoverdots{
\eproofbit 
\proofdotstrue
\setbox\proofbelow=\hbox\bgroup 
\let\wereinproofbit\proofoverdots
$\displaystyle
}%
\def\proofusing{
\eproofbit 
\setbox\proofrulename=\hbox\bgroup 
\let\wereinproofbit\proofusing
\kern0.3em$
}
\def\endprooftree{
\eproofbit 
  \dimen5 =0pt
%
\dimen0=\wd\proofabove \advance\dimen0-\shortenproofleft
\advance\dimen0-\shortenproofright
%
\dimen1=.5\dimen0 \advance\dimen1-.5\wd\proofbelow
\dimen4=\dimen1
\advance\dimen1\proofbelowshift \advance\dimen4-\proofbelowshift
%
\ifdim  \dimen1<0pt
\then   \advance\shortenproofleft\dimen1
        \advance\dimen0-\dimen1
        \dimen1=0pt
        \ifdim  \shortenproofleft<0pt
        \then   \setbox\proofabove=\hbox{%
                        \kern-\shortenproofleft\unhbox\proofabove}%
                \shortenproofleft=0pt
        \fi
\fi
%
\ifdim  \dimen4<0pt
\then   \advance\shortenproofright\dimen4
        \advance\dimen0-\dimen4
        \dimen4=0pt
\fi
%
\ifdim  \shortenproofright<\wd\proofrulename
\then   \shortenproofright=\wd\proofrulename
\fi
%
\dimen2=\shortenproofleft \advance\dimen2 by\dimen1
\dimen3=\shortenproofright\advance\dimen3 by\dimen4
%
\ifproofdots
\then
        \dimen6=\shortenproofleft \advance\dimen6 .5\dimen0
        \setbox1=\vbox to\proofdotseparation{\vss\hbox{$\cdot$}\vss}%
        \setbox0=\hbox{%
                \advance\dimen6-.5\wd1
                \kern\dimen6
                $\vcenter to\proofdotnumber\proofdotseparation
                        {\leaders\box1\vfill}$%
                \unhbox\proofrulename}%
\else   \dimen6=\fontdimen22\the\textfont2 
        \dimen7=\dimen6
        \advance\dimen6by.5\proofrulebreadth
        \advance\dimen7by-.5\proofrulebreadth
        \setbox0=\hbox{%
                \kern\shortenproofleft
                \ifdoubleproof
                \then   \hbox to\dimen0{%
                        $\mathsurround0pt\mathord=\mkern-6mu%
                        \cleaders\hbox{$\mkern-2mu=\mkern-2mu$}\hfill
                        \mkern-6mu\mathord=$}%
                \else   \vrule height\dimen6 depth-\dimen7 width\dimen0
                \fi
                \unhbox\proofrulename}%
        \ht0=\dimen6 \dp0=-\dimen7
\fi
%
\let\doll\relax
\ifwasinsideprooftree
\then   \let\VBOX\vbox
\else   \ifmmode\else$\let\doll=$\fi
        \let\VBOX\vcenter
\fi
\VBOX   {\baselineskip\proofrulebaseline \lineskip.2ex
        \expandafter\lineskiplimit\ifproofdots0ex\else-0.6ex\fi
        \hbox   spread\dimen5   {\hfi\unhbox\proofabove\hfi}%
        \hbox{\box0}%
        \hbox   {\kern\dimen2 \box\proofbelow}}\doll%
%
\global\dimen2=\dimen2
\global\dimen3=\dimen3
\egroup 
\ifonleftofproofrule
\then   \shortenproofleft=\dimen2
\fi
\shortenproofright=\dimen3
%
\onleftofproofrulefalse
\ifinsideprooftree
\then   \hskip.5em plus 1fil \penalty2
\fi
}
\renewcommand{\to}{\xrightarrow{}}
\newcommand{\tto}[1]{\xrightarrow{#1}}
\newcommand{\oot}[1]{\xleftarrow{#1}}
\newcommand{\ttwoot}[2]{\underset{#2}{\overset{#1}{\leftleftarrows}}}
\newcommand{\Pfn}{\mathsf{Pfn}}
\newcommand{\Rel}{\mathsf{Rel}}
\newcommand{\FHilb}{\mathsf{FHilb}}
\newcommand{\id}{\mathrm{id}}
\newcommand{\CCC}{{\cal C}}
\renewcommand{\Bbb}{\mathbb}
\newcommand{\CCc}{{\Bbb C}}
\newcommand{\ZZz}{{\Bbb Z}}
\mathchardef\gt="313E 
\mathchardef\lt="313C 
\newcommand{\beq}{\begin{equation}}
\newcommand{\eeq}{\end{equation}}
\newcommand{\ba}[1]{\begin{array}{#1}}
\newcommand{\ea}{\end{array}}
\newcommand{\bea}{\begin{eqnarray}}
\newcommand{\eea}{\end{eqnarray}}
\newcommand{\bear}{\begin{eqnarray*}}
\newcommand{\eear}{\end{eqnarray*}}
\newcommand{\conv}[2]{_{#1}{\star}_{#2}}
\newcommand{\convv}{{\star}}
\newcommand{\mnd}{{\scriptstyle \blacktriangledown}}
\newcommand{\unt}{\mbox{!`}}
\newcommand{\cmn}{{\scriptstyle \blacktriangle}}
\newcommand{\cun}{{
 !}}
\newcommand{\ortt}{\neg}
\newcommand{\ort}[1]{\neg{#1}}
\title{(Modular) Effect Algebras are Equivalent to\\ (Frobenius)  Antispecial Algebras
}
\author{Dusko Pavlovic
\institute{University of Hawaii, Honolulu HI}
\email{dusko@hawaii.edu}
\and
Peter-Michael Seidel
\institute{University of Hawaii, Honolulu HI}
\email{pseidel@hawaii.edu}
}
\begin{document}
\maketitle

\begin{abstract}
Effect algebras are one of the generalizations of Boolean algebras proposed in the quest for a \emph{quantum logic}. Frobenius algebras are a tool of \emph{categorical quantum mechanics}, used to present various families of observables in abstract, often nonstandard frameworks. Both effect algebras and Frobenius algebras capture their respective fragments of quantum mechanics by elegant and succinct axioms; and both come with their conceptual mysteries. A particularly elegant and mysterious constraint, imposed on Frobenius algebras to characterize a class of tripartite entangled states, is the \emph{antispecial}\/ law. A particularly contentious issue on the quantum logic side is the \emph{modularity}\/ law, proposed by von Neumann to mitigate the failure of distributivity of quantum logical connectives. We show that, if quantum logic and categorical quantum mechanics are formalized in the same framework, then the antispecial law of categorical quantum mechanics corresponds to the natural requirement of effect algebras that the units are each other's unique complements; and that the modularity law corresponds to the Frobenius condition. These correspondences lead to the equivalence announced in the title. Aligning the two formalisms, at the very least, sheds new light on the concepts that are more clearly displayed on one side than on the other (such as e.g. the orthogonality). Beyond that, it may also open up new approaches to deep and important problems of quantum mechanics (such as the classification of complementary observables). 
\end{abstract}

\section{Introduction}

That \emph{"nobody understands quantum mechanics"}\/ (as Richard Feynman announced) may be the state of the world. That the standard mathematical formalisms of quantum mechanics contain features that do not correspond to any features of their subject (as John von Neumann pointed out \cite{RedeiM:why} almost immediately after he published his treatise \cite{Neumann:foundations} about those mathematical formalisms) is definitely a social phenomenon. Von Neumann attacked the problem, and generated \emph{quantum logics}  \cite{Neumann:continuous,Birkhoff-vonNeumann:LQM}, which became a popular research area of lattice theory.  Many years later, mathematicians and computer scientists attacked the same problem, and generated \emph{categorical quantum mechanics} \cite{Abramsky-Coecke,SelingerP:CPM,Coecke-Duncan:ZX,CoeckeB:ComQuantLog}, which became a popular research area of category theory. Most recently, an ambitious effort has been initiated to incorporate both families of structures, and much more, under a new structure called \emph{effectus}  \cite{JacobsB:NewDir,JacobsB:effectus}. The present note is, of course, incomparable with that effort in its scope, but it also attempts to relate two families of structures, one from quantum logic, the other one from categorical quantum mechanics, and is thus concerned with a closely related conceptual bridge. Being much smaller, our bridge does not require any new material: we simply translate between the two languages, and try to align the concepts underlying the different models that turn out to be structurally equivalent.

More precisely, we relate the realm of effect algebras \cite{Beltrametti-Bugajski,Foulis-Bennett,GudderS:effect}, intended to capture quantum propositions just like Boolean algebras capture classical propositions, and the realm of Frobenius algebras \cite{Carboni-Walters,PavlovicD:QMWS,PavlovicD:CQStruct,PavlovicD:Qabs12,PavlovicD:MSCS13}, used to capture classical data in a quantum universe, viewed as a category. Although the two research programs have been driven by different goals and realized by substantially different mathematical methods, they turn out to lead to equivalent structural components. Understanding this equivalence means uncovering the common conceptual components underlying both theories. Instantiating Frobenius algebras to the category $\Rel$ of sets and relations, and generalizing effect algebras to an abstract dagger compact category $\CCc$, we get the equivalences announced in the title of the paper.

\subsubsection*{Outline of the paper} We begin by defining effect algebras in Sec.~\ref{Sec:effect}. As usual, effect algebras are defined as sets with some partial operations, but the defining conditions are formalized in categorical terms, since our goal is to align them with the similar conditions that arise in categorical quantum mechanics. Towards this goal, in the rest of the paper we work with an abstract dagger compact category $\CCc$. The original definition of effect algebras is recovered for $\CCc = \Rel$, the category of sets and relations. Since its compactness and the self-dualities of its objects are an important tool of the analysis, the restriction to partial maps, prominent in the definition of effect algebras, is not hardwired in the definition of the environment category, but imposed in the definition of the analyzed structures. Before we get to that restriction, we analyze the general operation of orthocomplementation in general terms of dagger compactness in Sec.~\ref{Sec:Ortho},. The reasons and the tools for the restriction to partial maps are discussed in Sec.~\ref{Sec:Anti-effect}. The tools boil down to a small fragment of the categorical theory of maps, described in Sec.~\ref{Sec:Maps}, relative to the convolution operations in Sec.~\ref{Sec:Convol}. In Sec.~\ref{Sec:superspec}, we finally reach the stage where we can propose a categorical version of the effect algebra structure. The claim is that the special and the antispecial requirements, that play an interesting role in categorical quantum mechanics, in fact capture the same structure as effect algebras. The main claim is Prop.~\ref{prop:eff}, which says that special and antispecial algebras (christened \emph{superspecial}\/ for this occasion) are just those that satisfy the categorical definition of effect algebras, simply lifted from sets and partial functions to dagger compact categories. The technical gain from this characterization is that the superspecial strucutre is a standard piece of categorical algebra, well oiled for diagrammatic analyses in categorical quantum mechanics, whereas the categorical version of the original definition of effect algebras involves pullbacks, and requires subtle and often cumbersome arguments, as illustrated already in the proof of Prop.~\ref{prop:eff}. Finally, in Sec.~\ref{Sec:Frob}, we show that the modularity law, satisfied by some effect algebras, corresponds  to the Frobenius law in superspecial algebras. This not only connects two laws that are studied extensively in two research areas, but also generalizes the concept of modularity from sets to dagger compact categories, while providing an intuitive view of the Frobenius law. In Sec.~\ref{Sec:Conclusion}, we comment about applications of the results and about further work suggested by the results.

%
%
%
%
%

\section{Effect algebras
}\label{Sec:effect}

\paragraph{Background.} Effect algebras \cite{Beltrametti-Bugajski,Foulis-Bennett,GudderS:effect} are an offshoot of the effort towards generalizing classical propositional logic into a putative quantum logic, initiated by von Neumann \cite{Neumann:continuous,Birkhoff-vonNeumann:LQM}. The effort never led to a logical system in the traditional sense, perhaps because the deduction and abstraction mechanisms that the logicians use to define such systems, actually characterize classical data in a quantum universe, whereas quantum data disobey such abstraction mechanisms by their very nature \cite{PavlovicD:Qabs12}. At the propositional level, these abstraction mechanisms manifest themselves as the distributivity laws. Without such laws, quantum logics remained as unintuitive for the logicians as quantum physics has been for the physicists. This provided a business opportunity for some mathematicians and philosophers. Effect algebras are a result of this opportunity.

\paragraph{Idea.} Quantum propositions, viewed as the elements of an effect algebra, can be thought of as subspaces of a Hilbert space. They are operated on by the quantum logical connectives $\ovee, \owedge$ and $\ortt$, which are analogous to the classical disjunction $\vee$, conjunction $\wedge$ and negation $\neg$. The difference is that any two classical propositions $p$ and $q$ can be composed into $p\vee q$, $p \wedge q$, whereas the quantum propositions $u$ and $v$ can only be composed into $u \ovee v$, $u\owedge v$ if the corresponding Hilbert subspaces are orthogonal; otherwise these compositions are undefined. The complements $\ort u$ are always defined. The partiality of the quantum logical connectives $\ovee$ and $\owedge$ is induced by the fact that non-orthogonal quantum states cannot be reliably distinguished,  which  implies that quantum observables, which are denoted by quantum propositions, and reasoned about in quantum logic, can only be formed from orthogonal Hilbert subspaces. Effect algebras thus attempt to capture the essence of quantum logic in terms of \emph{partiality}\/ of quantum logical operations.

\begin{definition}\label{def:effect}
An \emph{effect algebra} is a set $A$ together with the  partial functions 
\beq\label{eq:one}
A\times A\tto\ovee A \oot\ortt A \ttwoot 0 1
I \eeq
where $I$ is a singleton set, and moreover 
\begin{itemize}
\item $(A,\ovee,0)$ is a commutative monoid, 
\item the following conditions are satisified for all $x,y\in A$
\bea 
x\ovee y = 1  & \iff & x = \ortt y \label{eq:two}\\ 
x\ovee 1 = 1  & \iff & x = 0 \label{eq:three}
\eea
\end{itemize}
\end{definition}


\paragraph{Remarks.} It is easy to see that the above definition is equivalent with the original one in \cite{Foulis-Bennett}. Proving that $\ort{\ort x} = x$, that the partial elements $0,1:I\to A$ must be total, and that $\ortt$ must be a map (total and single-valued\footnote{Here we use \emph{maps}, or \emph{functions}, defined as total and single-valued relations in basic set theory. In Sec.~\ref{Sec:Maps} we shall see how these definitions extend to much more general categorical frameworks, including dagger-compact categories with classical structures.} are instructive exercises.

A category theorist might interpret the above definition by viewing the effect algebra signature, displayed in \eqref{eq:one}, as a diagram in the category $\Pfn$ of sets and partial maps. The requirement that $(A,\ovee,0)$ is a commutative monoid is expressed by familiar commutative diagrams, and conditions \eqref{eq:two} and \eqref{eq:three} mean that the following squares must be pullbacks in $\Pfn$.
\beq\label{eq:pb}
\newcommand{\Unit}{\scriptstyle I}
\newcommand{\Carrier}{\scriptstyle A}
\newcommand{\Pairs}{\scriptstyle A\otimes A}
\newcommand{\oone}{\scriptstyle 1}
\newcommand{\ttwo}{\scriptstyle <\id, \ortt>}
\newcommand{\uup}{\scriptstyle\cun}
\newcommand{\ddown}{\scriptstyle\ovee}
\newcommand{\zzero}{\scriptstyle 0}
\newcommand{\zzeros}{\scriptstyle<0,0>}
\newcommand{\hiid}{\scriptstyle \id}
\def\JPicScale{.6}
\ifx\JPicScale\undefined\def\JPicScale{1}\fi
\psset{unit=\JPicScale mm}
\psset{linewidth=0.3,dotsep=1,hatchwidth=0.3,hatchsep=1.5,shadowsize=1,dimen=middle}
\psset{dotsize=0.7 2.5,dotscale=1 1,fillcolor=black}
\psset{arrowsize=1 2,arrowlength=1,arrowinset=0.25,tbarsize=0.7 5,bracketlength=0.15,rbracketlength=0.15}
\begin{pspicture}(0,0)(111.25,16.25)
\pscustom[]{\psline{<-}(32.5,15)(7.5,15)
\psline(7.5,15)(10,15)
\psbezier{-}(10,15)(10,15)(10,15)
}
\pscustom[]{\psline{<-}(32.5,-15)(13.12,-15)
\psbezier(13.12,-15)(13.12,-15)(13.12,-15)
\psbezier{-}(13.12,-15)(13.12,-15)(13.12,-15)
}
\pscustom[]{\psline{<-}(35,-10)(35,10)
\psbezier(35,10)(35,10)(35,10)
\psbezier{-}(35,10)(35,10)(35,10)
}
\pscustom[]{\psline{<-}(5,-10)(5,10)
\psbezier(5,10)(5,10)(5,10)
\psbezier{-}(5,10)(5,10)(5,10)
}
\rput(35,15){$\Unit$}
\rput(35,-15){$\Carrier$}
\rput(5,-15){$\Pairs$}
\rput(5,15){$\Carrier$}
\rput[l](35.62,0){$\oone$}
\rput[r](4.38,0){$\ttwo$}
\rput[b](20,16.25){$\uup$}
\rput[t](20,-16.25){$\ddown$}
\pscustom[]{\psline{<-}(107.5,15)(82.5,15)
\psline(82.5,15)(85,15)
\psbezier{-}(85,15)(85,15)(85,15)
}
\psline{<-}(107.5,-15)
(88.12,-15)
(88.75,-15)(87.5,-15)
\pscustom[]{\psline{<-}(110,-10)(110,10)
\psbezier(110,10)(110,10)(110,10)
\psbezier{-}(110,10)(110,10)(110,10)
}
\pscustom[]{\psline{<-}(80,-10)(80,10)
\psbezier(80,10)(80,10)(80,10)
\psbezier{-}(80,10)(80,10)(80,10)
}
\rput(110,15){$\Unit$}
\rput(110,-15){$\Carrier$}
\rput(80,-15){$\Pairs$}
\rput[t](95,-16.25){$\ddown$}
\pscustom[]{\psline(10,10)(10,12.5)
\psbezier(10,12.5)(10,12.5)(10,12.5)
\psbezier(10,12.5)(10,12.5)(10,12.5)
}
\pscustom[]{\psline(10,10)(7.5,10)
\psbezier(7.5,10)(7.5,10)(7.5,10)
\psbezier(7.5,10)(7.5,10)(7.5,10)
}
\pscustom[]{\psline(85,10)(85,12.5)
\psbezier(85,12.5)(85,12.5)(85,12.5)
\psbezier(85,12.5)(85,12.5)(85,12.5)
}
\pscustom[]{\psline(85,10)(82.5,10)
\psbezier(82.5,10)(82.5,10)(82.5,10)
\psbezier(82.5,10)(82.5,10)(82.5,10)
}
\rput[l](111.25,0){$\zzero$}
\rput[r](79.38,0){$\zzeros$}
\rput(80,15){$\Unit$}
\rput[b](94.38,15.62){$\hiid$}
\end{pspicture}

\eeq
\medskip

\noindent The tensors and the pairing are induced by the cartesian products of sets. The arrow $\cun :A\to I$ is the map sending all elements of $A$ into the singleton element of $I$. While the left-hand pullback is easily seen to capture  \eqref{eq:two}, the right hand pullback actually says that $x\ovee y = 0 \iff x = 0 = y$, or that the monoid is torsion-free, which is equivalent with \eqref{eq:three} because \eqref{eq:two} implies that $x\ovee 1 = y \iff  x\ovee 1 \ovee \ort y = 1\iff x \ovee \ort y = 0$.

But a categorical quantum mechanic might be inclined to go even further, and draw the the above pullbacks as string diagrams:
\beq\label{eq:pb-string}
\newcommand{\oovee}{\ovee}
\newcommand{\Carrier}{\scriptstyle A}
\newcommand{\Juu}{\scriptstyle U}
\newcommand{\Jvee}{\scriptstyle V}
\newcommand{\uu}{\scriptstyle \exists ! u}
\newcommand{\oone}{\scriptscriptstyle1}
\newcommand{\uuone}{\scriptstyle \forall u_0}
\newcommand{\uutwo}{\scriptstyle \forall u_1}
\newcommand{\EQLS}{=}
\newcommand{\zzero}{\scriptscriptstyle 0}
\newcommand{\uuthree}{\scriptstyle \forall v_1}
\newcommand{\anyvee}{\scriptstyle \forall v_0}
\def\JPicScale{.6}
\ifx\JPicScale\undefined\def\JPicScale{1}\fi
\psset{unit=\JPicScale mm}
\psset{linewidth=0.3,dotsep=1,hatchwidth=0.3,hatchsep=1.5,shadowsize=1,dimen=middle}
\psset{dotsize=0.7 2.5,dotscale=1 1,fillcolor=black}
\psset{arrowsize=1 2,arrowlength=1,arrowinset=0.25,tbarsize=0.7 5,bracketlength=0.15,rbracketlength=0.15}
\begin{pspicture}(0,0)(218.13,38.12)
\psline[linewidth=0.35,border=1.05](6.25,22.5)
(6.25,0.62)(13.12,-3.75)
\psline[linewidth=0.35](20,22.5)
(20,0.62)(13.75,-3.12)
\rput{90}(13.12,-3.75){\psellipse[linewidth=0.35,fillstyle=solid](0,0)(1.49,-1.48)}
\psline[linewidth=0.35](13.12,-30.62)(13.12,-4.38)
\pscustom[linewidth=0.2]{\psline(0,14.38)(0,-23.75)
\psline(0,-23.75)(26.88,-23.75)
\psline(26.88,-23.75)(26.88,14.38)
\psline(26.88,14.38)(0,14.38)
\psline(0,14.38)(0,11.25)
\psbezier(0,11.25)(0,11.25)(0,11.25)
\psline(0,11.25)(0,12.5)
}
\pscustom[linewidth=0.35]{\psline(25.63,22.49)(1.87,22.5)
\psline(1.87,22.5)(10,30.62)
\psbezier(10,30.62)(10,30.62)(10,30.62)
\psline(10,30.62)(17.5,30.62)
\psline(17.5,30.62)(25.63,22.49)
\psbezier(25.63,22.49)(25.63,22.49)(25.63,22.49)
\psline(25.63,22.49)(25.62,22.49)
}
\psline[linewidth=0.2](46.88,-20.62)
(68.12,0.62)
(88.75,-20)
(47.5,-20)(48.12,-19.38)
\psline[linewidth=0.35](13.75,30.62)(13.75,38.12)
\pscustom[linewidth=0.35,fillcolor=white,fillstyle=solid]{\psbezier(17.5,5.62)(17.5,5.62)(17.5,5.62)(17.5,5.62)
\psline(17.5,5.62)(23.12,5.62)
\psline(23.12,5.62)(23.12,11.24)
\psline(23.12,11.24)(17.5,11.24)
\psline(17.5,11.24)(17.5,5.62)
\psbezier(17.5,5.62)(17.5,5.62)(17.5,5.62)
\psline(17.5,5.62)(17.5,6.88)
}
\rput(20,8.12){$\ortt$}
\pscustom[linewidth=0.35,linecolor=red,fillcolor=white,fillstyle=solid]{\psline(8.12,-21.87)(8.12,-21.88)
\psline(8.12,-21.88)(18.12,-21.87)
\psline(18.12,-21.87)(18.12,-16.25)
\psline(18.12,-16.25)(8.12,-16.25)
\psline(8.12,-16.25)(8.12,-21.88)
\psbezier(8.12,-21.88)(8.12,-21.88)(8.12,-21.88)
\psline(8.12,-21.88)(9.38,-21.87)
}
\psline[linewidth=0.35](68.12,-30.62)(68.12,-6.88)
\pscustom[linewidth=0.35,linecolor=red,fillcolor=white,fillstyle=solid]{\psline(63.12,-18.74)(63.12,-18.76)
\psline(63.12,-18.76)(73.12,-18.74)
\psline(73.12,-18.74)(73.12,-13.12)
\psline(73.12,-13.12)(63.12,-13.12)
\psline(63.12,-13.12)(63.12,-18.76)
\psbezier(63.12,-18.76)(63.12,-18.76)(63.12,-18.76)
\psline(63.12,-18.76)(64.38,-18.74)
}
\rput{0}(68.12,-5.62){\psellipse[linewidth=0.35,fillstyle=solid](0,0)(1.49,-1.48)}
\psline[linewidth=0.35](68.12,10.62)(68.12,36.88)
\rput(44.38,1.25){$\EQLS$}
\rput(68.12,-16.26){$\uu$}
\rput(13.12,-19.38){$\uu$}
\rput(23.75,-21.25){$\uuone$}
\rput(81.25,-17.5){$\uutwo$}
\rput(13.75,26.25){$\oovee$}
\rput(68.12,8.12){$\oone$}
\rput[l](14.38,-11.88){$\Carrier$}
\rput[r](66.88,-10.62){$\Carrier$}
\rput[l](15,35){$\Carrier$}
\rput[l](69.38,35){$\Carrier$}
\rput[l](21.25,18.12){$\Carrier$}
\rput[r](5,18.12){$\Carrier$}
\psline[linewidth=0.35](146.25,22.5)
(146.25,10.62)(146.25,11.25)
\psline[linewidth=0.35](153.12,-30)(153.12,-10.62)
\pscustom[linewidth=0.2]{\psline(140,14.38)(140,-20)
\psline(140,-20)(166.88,-20)
\psline(166.88,-20)(166.88,14.38)
\psline(166.88,14.38)(140,14.38)
\psline(140,14.38)(140,11.25)
\psbezier(140,11.25)(140,11.25)(140,11.25)
\psline(140,11.25)(140,12.5)
}
\pscustom[linewidth=0.35]{\psline(165.63,22.49)(141.87,22.5)
\psline(141.87,22.5)(150,30.62)
\psbezier(150,30.62)(150,30.62)(150,30.62)
\psline(150,30.62)(157.5,30.62)
\psline(157.5,30.62)(165.63,22.49)
\psbezier(165.63,22.49)(165.63,22.49)(165.63,22.49)
\psline(165.63,22.49)(165.62,22.49)
}
\psline[linewidth=0.35](153.75,30.62)(153.75,38.12)
\psline[linewidth=0.35](208.12,-30)(208.12,-10)
\psline[linewidth=0.35](208.12,10)(208.12,38.12)
\rput(183.75,1.88){$\EQLS$}
\rput(153.75,26.25){$\oovee$}
\rput[l](155,35){$\Carrier$}
\rput[l](209.38,35){$\Carrier$}
\rput[l](161.25,18.12){$\Carrier$}
\rput[r](145,18.12){$\Carrier$}
\psline[linewidth=0.35,linecolor=red](143.12,-10.62)
(153.12,-0.62)
(163.12,-10.62)
(143.12,-10.62)(153.12,-10.62)
\pscustom[linewidth=0.35]{\psline(151.88,10.62)(146.25,5)
\psline(146.25,5)(140.62,10.62)
\psline(140.62,10.62)(151.88,10.62)
\psbezier(151.88,10.62)(151.88,10.62)(151.88,10.62)
}
\pscustom[linewidth=0.35]{\psline(160,21.88)(160,10.62)
\psbezier(160,10.62)(160,10.62)(160,10.62)
}
\psline[linewidth=0.35](198.13,-10)
(208.12,0)
(218.13,-10)
(198.13,-10)(208.13,-10)
\pscustom[linewidth=0.35]{\psline(165.62,10.62)(160,5)
\psline(160,5)(154.38,10.62)
\psline(154.38,10.62)(165.62,10.62)
\psbezier(165.62,10.62)(165.62,10.62)(165.62,10.62)
}
\pscustom[linewidth=0.35]{\psline(73.75,10.62)(68.12,5)
\psline(68.12,5)(62.5,10.62)
\psline(62.5,10.62)(73.75,10.62)
\psbezier(73.75,10.62)(73.75,10.62)(73.75,10.62)
}
\pscustom[linewidth=0.35]{\psline(213.75,10)(208.12,4.38)
\psline(208.12,4.38)(202.5,10)
\psline(202.5,10)(213.75,10)
\psbezier(213.75,10)(213.75,10)(213.75,10)
}
\rput(208.12,-6.25){$\uuthree$}
\rput(153.12,-7.5){$\uuthree$}
\rput(163.75,-16.88){$\anyvee$}
\rput(208.12,7.5){$\zzero$}
\rput(160,8.12){$\zzero$}
\rput(146.25,8.12){$\zzero$}
\rput[l](13.75,-29.38){$\Juu$}
\rput[l](69.38,-28.75){$\Juu$}
\rput[l](154.38,-29.38){$\Jvee$}
\rput[l](209.38,-28.75){$\Jvee$}
\pscustom[linewidth=0.35]{\psline(62.5,-8.12)(68.13,-2.5)
\psline(68.13,-2.5)(73.75,-8.12)
\psline(73.75,-8.12)(62.5,-8.12)
\psbezier(62.5,-8.12)(62.5,-8.12)(62.5,-8.12)
}
\pscustom[linewidth=0.35]{\psline(1.87,0.63)(25.63,0.61)
\psline(25.63,0.61)(17.5,-7.5)
\psbezier(17.5,-7.5)(17.5,-7.5)(17.5,-7.5)
\psline(17.5,-7.5)(10,-7.5)
\psline(10,-7.5)(1.87,0.63)
\psbezier(1.87,0.63)(1.87,0.63)(1.87,0.63)
\psline(1.87,0.63)(1.88,0.63)
}
\end{pspicture}

\eeq
\vspace{3\baselineskip}

\noindent The left-hand diagram should be read as saying that for every $u_0: U\to A\otimes A$ and $u_1: U\to I$ such that $\ovee \circ u_0 = 1 \circ u_1$, there is a unique $u:U\to A$ with $u_0 = <\id,\ortt>\circ u$ and $u_1 = \cun \circ u$.  The left-hand diagram in \eqref{eq:pb-string} just says in string diagrams that the left-hand square in \eqref{eq:pb} is a pullback. The right-hand diagram in \eqref{eq:pb-string} says that the right-hand square in  in \eqref{eq:pb} is a pullback, and it should be read as saying  that every $v_0: V\to A\otimes A$ and $v_1: V\to I$ such that $\ovee \circ v_0 = 0 \circ v_1$, must satisfy $v_0 = <0,0>\circ v_1$. The unique pullback factorization must be $v_1$, because the top side of the right-hand square in \eqref{eq:pb} is the identity.

If these conditions are accepted as a high level view of the "propositional" operations on quantum observables, then it is natural to ask what they mean beyond mere partial functions, in the categories with more features of quantum mechanics. The immediate obstacle to a straightforward lifting of the above definition is that most categories that we encounter in categorical quantum mechanics lack most pullbacks, and conditions \eqref{eq:pb} do not capture the intended meaning. Restricting them to abstract partial functions, like we shall do in Def.~\ref{def:effect-gen}, narrows the meaning of \eqref{eq:pb}, but the pullbacks remain inconvenient to work with. The relief comes from a surprising direction: the pullbacks requirements in \eqref{eq:pb} and \eqref{eq:pb-string} turn out to be equivalent to some more convenient conditions, independently encountered in categorical quantum mechanics.

\section{Orthocomplemented algebras
}\label{Sec:Ortho}

\subsection{Dagger-compact categories and classical structures}\label{Sec:cqm}
While effect algebras are normally presented as sets with essentially algebraic structure\footnote{An algebraic structure is presented by operations and equations. An \emph{essentially}\/ algebraic structure is presented by operations and \emph{conditional}\/ equations, which are the statements in the form $p\Rightarrow q$, and $p$ and $q$ are equations. Besides effect algebras, the examples of essentially algebraic structures include categories and the varieties of categorical algebra, defined by algebraic theories using functors and natural transformations \cite{AdamekJ:locpac}.}, we now broaden the scope, and study the components of their structure in  the abstract framework of a \emph{dagger-compact category} $\CCc$. The standard definition of effect algebras will be recovered as the special case where $\CCc = \Rel$, the category of sets and relations, concrete or abstract \cite{Carboni-Walters,Freyd-Scedrov:book}, as used in \cite{GogiosoS:relations,HeunenC:relations,PavlovicD:QI09,PavlovicD:Qabs12}.

The idea of lifting the effect algebra structure beyond sets, and expressing it in abstract categorical terms, is that studying the effect algebra operations in other models of quantum mechanics, standard and nonstandard \cite{PavlovicD:QPL09}, will reveal their relationships with other quantum operations and axiomatizations. For instance, it seems interesting to ask what is the suitable notion of effect algebra in the framework of Hilbert spaces. Although the effect algebra operations were conceptualized as an abstraction of the relevant "propositional" operations over the families of orthogonal subspaces of a Hilbert space, it is remarkable that these operations are not expressible in the language of Hilbert spaces themselves, or even in terms of categorical operations over  Hilbert spaces. To see this, note that, the category of Hilbert spaces has very few pullbacks, and that lifting the pullbacks \eqref{eq:pb}, or \eqref{eq:pb-string} to Hilbert spaces does not give usable requirements.

Recall that dagger-compact categories are just compact (closed) categories, going back all the way to \cite{Kelly-Laplaza}, but extended with an additional duality, the \emph{dagger}\/ functor $\ddag: \CCc^{o}\to \CCc$, which commutes with the compact duality $\ast :\CCc^{o} \to \CCc$ up to an coherent isomorphism $X^{\ast\ddag} \cong X^{\ddag\ast}$. The standard model is the category of finite-dimensional complex Hilbert spaces $\FHilb$. One of the main points of working with an abstract categorical signature, rather than with concrete Hilbert spaces, is that nonstandard and toy models \cite{AbramskyS:BigToy,Foulis-Randall:test,Mermin:moon,PavlovicD:QPL09,SpekkensR:toy} often provide important information. Another point, going back to von Neumann, is that many features of the Hilbert space structure do not correspond to any features of quantum mechanics that they are used to describe.\footnote{In terms of categorical semantics, this means that the Hilbert space model is not \emph{fully abstract}:  it always displays some "irrelevant implementation details" \cite{MilnerR:fully-abstract}.} Presented in terms of the functor $X_\ast = X^{\ast\ddag}$, and equipped with the biproducts, such categories were proposed as the framework for categorical quantum mechanics in \cite{Abramsky-Coecke}. The biproducts were eliminated using \emph{classical structures} in \cite{PavlovicD:QMWS}\footnote{We first called them classical \emph{objects}, but too many people pointed out that they had a structure. Although one of the main points of category theory is to make structures into objects (e.g. groups have a structure, but they are objects of the category of groups), it seemed simpler to change the name than to explain one of the main points of category theory.}. The availability of classical structures over the objects of a dagger-compact category is analogous to the availability of bases in the category of Hilbert spaces. Instantiated to this category, classical structures \cite[Def.~2.2]{PavlovicD:QMWS,PavlovicD:CQStruct} in fact exactly correspond to bases \cite{PavlovicD:MSCS13}. Although classical structures are generally not preserved by the morphisms of the surrounding dagger-compact category (just like the bases are not preserved by linear operators), they do influence the compact structure, by providing an isomorphism between each object and its dual, and thus allow us to choose the dual to be $X^\ast = X$, and thus make each object self-dual \cite[Prop.~2.4]{PavlovicD:CQStruct}. The \emph{Frobenius condition} imposed on adjoint monoid-comonoid pairs  \cite{Carboni-Walters} is just another way to express this self-duality \cite[Thm.~4.3]{PavlovicD:Qabs12}. Yet another expression of the same is an \emph{entangled}\/ vector $I\tto\eta X\otimes X$, i.e. such that $(\eta^\ddag \otimes X)\circ(X\otimes \eta) = \id$ \cite[Prop.~2.6]{PavlovicD:Qabs12}. We use such vectors below. Dagger-compact categories with such self-dualities, or classical structures, playing the role of bases to capture classical data, were studied as  \emph{categories of classical structures} in \cite[Sec.~2.2]{PavlovicD:CQStruct}.

\subsection{Orthocomplement}
Let $A$ be an object in a dagger-compact category $\CCc$, given with a classical structure induced by the monoid
\[A \otimes A \tto{\ \mnd\ } A \oot{\ \cun\ } I \]
Suppose that, in addition to this classical monoid, we are also given another commutative monoid
\[A \otimes A \tto{\ \ovee\ } A \oot{\ 0\ } I \]

\begin{definition} An \emph{orthocomplement} with respect to the commutative monoid $(A,\ovee, 0)$ is an operation $\ortt: A\to A$ such that the equations
\newpage
\beq\label{eq:complement}
 \def\JPicScale{.6}\newcommand{\oovee}{\scriptstyle \ovee}\newcommand{\aah}{\scriptstyle A}\newcommand{\oone}{\scriptscriptstyle \iota}\newcommand{\EQLS}{=} \newcommand{\ahh}{\scriptstyle \ortt}
\ifx\JPicScale\undefined\def\JPicScale{1}\fi
\psset{unit=\JPicScale mm}
\psset{linewidth=0.3,dotsep=1,hatchwidth=0.3,hatchsep=1.5,shadowsize=1,dimen=middle}
\psset{dotsize=0.7 2.5,dotscale=1 1,fillcolor=black}
\psset{arrowsize=1 2,arrowlength=1,arrowinset=0.25,tbarsize=0.7 5,bracketlength=0.15,rbracketlength=0.15}
\begin{pspicture}(0,0)(100,25.62)
\psline[linewidth=0.35,border=1.05](18.12,3.75)
(18.12,-5)(25,-9.38)
\psline[linewidth=0.35](31.88,22.5)
(31.88,-5)(25,-9.38)
\psline[linewidth=0.35](5.62,-25)(5.62,3.75)
\pscustom[linewidth=0.35]{\psline(23.75,3.74)(0,3.75)
\psline(0,3.75)(8.12,11.87)
\psbezier(8.12,11.87)(8.12,11.87)(8.12,11.87)
\psline(8.12,11.87)(15.62,11.88)
\psline(15.62,11.88)(23.75,3.74)
\psbezier(23.75,3.74)(23.75,3.74)(23.75,3.74)
\psbezier(23.75,3.74)(23.75,3.74)(23.75,3.74)
}
\psline[linewidth=0.35](11.88,11.88)(11.88,14.38)
\rput(11.88,7.5){$\oovee$}
\pscustom[linewidth=0.35]{\psline(15.62,14.37)(8.12,14.37)
\psline(8.12,14.37)(11.87,18.12)
\psbezier(11.87,18.12)(11.87,18.12)(11.87,18.12)
\psline(11.87,18.12)(15.62,14.37)
\psline(15.62,14.37)(15,14.38)
\psbezier(15,14.38)(15,14.38)(15,14.38)
\psbezier(15,14.38)(15,14.38)(15,14.38)
}
\rput(11.88,16.25){$\oone$}
\psline(95,14.25)(95,23.25)
\rput(95,9.25){$\ahh$}
\psline(90,9.25)
(90,4.25)
(100,4.25)
(100,14.25)
(90,14.25)
(100,14.25)
(90,14.25)(90,9.25)
\psline(95,-25.62)(95,-14.5)
\psline(95,-4.5)(95,4.5)
\rput(95,-9.5){$\ahh$}
\rput(95,-28.12){$\aah$}
\psline(90,-9.5)
(90,-14.5)
(100,-14.5)
(100,-4.5)
(90,-4.5)
(100,-4.5)
(90,-4.5)(90,-9.5)
\psline(60,-25.62)(60,22.5)
\rput(60,-28.75){$\aah$}
\rput(75,0){$\EQLS$}
\rput(45.62,0){$\EQLS$}
\psline[fillcolor=white,fillstyle=solid](0.62,-10.12)
(0.62,-15.12)
(10.62,-15.12)
(10.62,-5.12)
(0.62,-5.12)
(10.62,-5.12)
(0.62,-5.12)(0.62,-10.12)
\rput(5.62,-28.12){$\aah$}
\rput(95,25.62){$\aah$}
\rput(31.88,25.62){$\aah$}
\rput{0}(25,-8.75){\psellipse[linewidth=0.35,fillstyle=solid](0,0)(1.48,-1.48)}
\rput{0}(25,-17.5){\psellipse[linewidth=0.35,fillstyle=solid](0,0)(1.48,-1.48)}
\pscustom[linewidth=0.35]{\psline(13.13,-4.99)(36.88,-5)
\psline(36.88,-5)(28.76,-13.12)
\psbezier(28.76,-13.12)(28.76,-13.12)(28.76,-13.12)
\psline(28.76,-13.12)(21.26,-13.13)
\psline(21.26,-13.13)(13.13,-4.99)
\psbezier(13.13,-4.99)(13.13,-4.99)(13.13,-4.99)
\psbezier(13.13,-4.99)(13.13,-4.99)(13.13,-4.99)
}
\psline[linewidth=0.35](25,-8.75)(25,-17.5)
\pscustom[linewidth=0.35]{\psline(20,-15.62)(30,-15.62)
\psline(30,-15.62)(25,-20.62)
\psbezier(25,-20.62)(25,-20.62)(25,-20.62)
\psline(25,-20.62)(20,-15.62)
\psline(20,-15.62)(20.62,-15.62)
\psbezier(20.62,-15.62)(20.62,-15.62)(20.62,-15.62)
\psbezier(20.62,-15.62)(20.62,-15.62)(20.62,-15.62)
}
\rput(5.62,-10.62){$\ahh$}
\end{pspicture}

\eeq

\vspace{3.5\baselineskip}
\noindent hold for some $\iota \in A$.
\end{definition}

\paragraph{Remark.} These equations can be construed as a string diagrammatic version of the equations
\beq\label{eq:compl}  x \ovee \ort x = \iota \qquad \qquad \qquad \ort{\ort x} = x \eeq
However, the formal correspondence between of the left-hand equation in \eqref{eq:complement} and the left-hand equation in \eqref{eq:compl} depends on the single-valuedness assumption, which will be discussed in the next section.

It turns out that the orthocomplement operations over a monoid are in bijective correspondence with the \emph{unbiased}\/ vectors with respect to it. We first define and then explain what this means.

\begin{definition}\label{def:unbiased} An element $\iota\in A$ is said to be \emph{unbiased} with respect to the commutative monoid structure $(A, \ovee, 0)$ if it satisfies the equation
\beq\label{eq:unbiased}
 \def\JPicScale{.6}\newcommand{\oovee}{\scriptstyle \ovee}\newcommand{\aah}{\scriptstyle A}\newcommand{\oone}{\scriptscriptstyle \iota}\newcommand{\EQLS}{=} \newcommand{\ahh}{\scriptstyle \ortt}
\ifx\JPicScale\undefined\def\JPicScale{1}\fi
\psset{unit=\JPicScale mm}
\psset{linewidth=0.3,dotsep=1,hatchwidth=0.3,hatchsep=1.5,shadowsize=1,dimen=middle}
\psset{dotsize=0.7 2.5,dotscale=1 1,fillcolor=black}
\psset{arrowsize=1 2,arrowlength=1,arrowinset=0.25,tbarsize=0.7 5,bracketlength=0.15,rbracketlength=0.15}
\begin{pspicture}(0,0)(68.12,25)
\psline[linewidth=0.35](5.62,-21.25)(5.62,2.5)
\pscustom[linewidth=0.35]{\psline(23.75,2.49)(0,2.5)
\psline(0,2.5)(8.12,10.62)
\psbezier(8.12,10.62)(8.12,10.62)(8.12,10.62)
\psline(8.12,10.62)(15.62,10.63)
\psline(15.62,10.63)(23.75,2.49)
\psbezier(23.75,2.49)(23.75,2.49)(23.75,2.49)
\psbezier(23.75,2.49)(23.75,2.49)(23.75,2.49)
}
\psline[linewidth=0.35](11.88,10.63)(11.88,13.13)
\rput(11.88,6.25){$\oovee$}
\pscustom[linewidth=0.35]{\psline(15.62,13.12)(8.12,13.12)
\psline(8.12,13.12)(11.87,16.87)
\psbezier(11.87,16.87)(11.87,16.87)(11.87,16.87)
\psline(11.87,16.87)(15.62,13.12)
\psline(15.62,13.12)(15,13.13)
\psbezier(15,13.13)(15,13.13)(15,13.13)
\psbezier(15,13.13)(15,13.13)(15,13.13)
}
\rput(11.88,15){$\oone$}
\psline(68.12,-22.5)(68.12,23.75)
\rput(68.12,-25){$\aah$}
\rput(50.62,0){$\EQLS$}
\rput(5,-25){$\aah$}
\psline[linewidth=0.35](32.5,21.25)(32.5,-2.5)
\pscustom[linewidth=0.35]{\psline(14.38,-2.49)(38.12,-2.5)
\psline(38.12,-2.5)(30.01,-10.62)
\psbezier(30.01,-10.62)(30.01,-10.62)(30.01,-10.62)
\psline(30.01,-10.62)(22.51,-10.63)
\psline(22.51,-10.63)(14.38,-2.49)
\psbezier(14.38,-2.49)(14.38,-2.49)(14.38,-2.49)
\psbezier(14.38,-2.49)(14.38,-2.49)(14.38,-2.49)
}
\psline[linewidth=0.35](26.25,-10.63)(26.25,-13.13)
\rput(26.25,-6.25){$\oovee$}
\pscustom[linewidth=0.35]{\psline(22.51,-13.12)(30.01,-13.12)
\psline(30.01,-13.12)(26.26,-16.87)
\psbezier(26.26,-16.87)(26.26,-16.87)(26.26,-16.87)
\psline(26.26,-16.87)(22.51,-13.12)
\psline(22.51,-13.12)(23.13,-13.13)
\psbezier(23.13,-13.13)(23.13,-13.13)(23.13,-13.13)
\psbezier(23.13,-13.13)(23.13,-13.13)(23.13,-13.13)
}
\rput(26.25,-15){$\oone$}
\rput(33.12,25){$\aah$}
\psline(19.38,-2.5)(19.38,2.5)
\end{pspicture}

\eeq
\end{definition}
\bigskip
\bigskip
\medskip

\paragraph{Explanation.} In the terminology of \cite[Definitions~2.5 and 5.1]{PavlovicD:Qabs12}, a vector $I\tto \iota A$ is unbiased with respect to an algebra with the underlying monoid $(A, \ovee, 0)$ in a dagger-compact category just when the vector $I \tto \iota A \tto{\ovee^\ddag} A \otimes A$ is \emph{entangled}; and the entanglement is defined by the equation \eqref{eq:unbiased}. Entangled vectors are often also called \emph{Bell states} \cite[Sec.~2.1]{PavlovicD:CQStruct}.  Intuitively, a vector $I\tto \varphi A\otimes A$ is entangled if it implements an inner product $<a|b> = \varphi^\ddag \circ(a_\ast \otimes b)$ \cite[Prop.~2.6]{PavlovicD:Qabs12}, which means that the induced linear operator $A\tto{\widehat \varphi} A$ is unitary \cite[Prop.~5.2(a)]{PavlovicD:Qabs12}. Def.~\ref{def:unbiased} is also equivalent to \cite[Def.~7.13]{Coecke-Duncan:ZX} up to a scalar. 

\begin{proposition}\label{prop:ortho}
The orthocomplement operations $A\tto\ortt A$ with respect to a commutative monoid $(A,\ovee,0)$ are in a bijective correspondence with its unbiased vectors $I\tto \iota A$. 
%
%
\end{proposition}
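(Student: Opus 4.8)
The plan is to prove the bijection by producing two mutually inverse constructions. The key is the description of unbiasedness recalled in the Explanation after Def.~\ref{def:unbiased}: a vector $\iota : I\to A$ is unbiased for $(A,\ovee,0)$ precisely when $\iota^\vee := \ovee^\ddag\circ\iota : I\to A\otimes A$ is an entangled (Bell) vector, equation~\eqref{eq:unbiased} being exactly that entanglement equation. Since the classical structure makes $A$ self-dual, bending a leg of a state $I\to A\otimes A$ along the cap $\epsilon : A\otimes A\to I$ is a bijection onto the endomorphisms $A\to A$, and under it the entangled states correspond to the ``unitaries'' --- the isomorphisms whose inverse is their dagger-transpose \cite{PavlovicD:Qabs12}. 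So the construction $\iota\mapsto\ortt_\iota$ will be: bend a leg of $\iota^\vee$ along $\epsilon$. The reverse construction will send an orthocomplement $\ortt$ to the vector $(0^\ddag\otimes A)\circ\ulcorner\ortt\urcorner$, where $\ulcorner\ortt\urcorner = (\ortt\otimes A)\circ\eta$ is the name of $\ortt$; this inverts $\iota\mapsto\ortt_\iota$ because the unit law $\ovee\circ(0\otimes A) = \id_A$ dualises to $(0^\ddag\otimes A)\circ\ovee^\ddag = \id_A$, so $(0^\ddag\otimes A)$ undoes $\iota\mapsto\iota^\vee$; in particular the witness $\iota$ of an orthocomplement is uniquely determined by $\ortt$. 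The very first step, though, is to unfold the string diagrams in~\eqref{eq:complement}: one of the two equations there will say that $\ortt$ is a ``unitary'', the other that $\ortt = \ortt_\iota$, the single-valuedness caveat of the Remark being what makes these genuinely weaker than the naive equations~\eqref{eq:compl} outside $\Rel$.

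With this dictionary in place, the forward half runs as follows. Given unbiased $\iota$, the vector $\iota^\vee$ is a Bell state, hence $\ortt_\iota$ is a ``unitary'' --- undoing the bending turns this into~\eqref{eq:unbiased} --- which is one of the equations of~\eqref{eq:complement}. For the other, one verifies the complementation equation for $\ortt_\iota$: expanding the definition of $\ortt_\iota$ and sliding the bent copy of $\iota^\vee$ past the classical copy map $\cmn$ using the Frobenius and specialness laws of the classical structure $(\mnd,\cun)$, and using that commutativity of $\ovee$ makes $\iota^\vee$ symmetric under the swap, the equation reduces --- after cancelling one of the compact-closed snake identities --- once more to~\eqref{eq:unbiased}. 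Thus $\ortt_\iota$ is an orthocomplement with witness $\iota$. For the converse half, let $\ortt$ be an orthocomplement with witness $\iota$. The complementation equation of~\eqref{eq:complement} exhibits $\ortt$ as the leg-bending $\ortt_\iota$ of $\iota^\vee = \ovee^\ddag\circ\iota$; substituting this into the ``unitarity'' equation and undoing the bending gives precisely the entanglement equation~\eqref{eq:unbiased} for $\iota^\vee$, so $\iota$ is unbiased. Together with the uniqueness of the witness noted above, this shows the two constructions are mutually inverse, which is the claim.

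The step I expect to carry the weight is the diagram chase relating~\eqref{eq:complement} to~\eqref{eq:unbiased}: this is where the classical comonoid $\cmn$, present only to duplicate the free variable, has to be made to cooperate with the \emph{a priori} unrelated monoid $\ovee$ that recombines it, via the Frobenius and specialness laws of the classical structure together with the compact-closed snake identities; and it is also where the single-valuedness subtlety of the Remark has to be handled carefully, since outside $\Rel$ the literal equations~\eqref{eq:compl} need not even be satisfiable. The rest --- the self-duality, the unit-law computation that makes the witness recoverable, and the mutual-inverse bookkeeping --- is routine once that chase is in place.
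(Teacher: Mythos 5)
Your strategy matches the paper's own proof: both arguments establish the bijection by showing that the complementation equation forces $\ortt$ to be the leg-bending of $\iota^\ddag\circ\ovee$ along the self-duality (so $\iota$ determines $\ortt$), recover $\iota$ back from $\ortt$ via the unit $0$, and then reduce the remaining axiom to the entanglement/unbiasedness equation \eqref{eq:unbiased}. The one point to watch is that the second equation of \eqref{eq:complement} is the involution law $\ortt\circ\ortt=\id$ rather than literal unitarity, and the equivalence of that law with unbiasedness of $\iota$ (in the presence of the complementation equation) is precisely the content of the paper's closing observation rather than a definitional unfolding --- but that is exactly where you locate your weight-carrying diagram chase, so the argument closes the same way; the appeal to \emph{specialness} of the classical structure is not needed there, only the snake identities of the induced self-duality.
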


\begin{proof}
Given an orthocomplement, conditions \eqref{eq:complement} immediately imply
\[
 \def\JPicScale{.6}\newcommand{\oovee}{\scriptstyle \ovee}\newcommand{\zzero}{\scriptscriptstyle 0}\newcommand{\aah}{\scriptstyle A}\newcommand{\oone}{\scriptscriptstyle \iota}\newcommand{\EQLS}{=}\newcommand{\ANDD}{and} \newcommand{\ahh}{\scriptstyle \ortt}
\ifx\JPicScale\undefined\def\JPicScale{1}\fi
\psset{unit=\JPicScale mm}
\psset{linewidth=0.3,dotsep=1,hatchwidth=0.3,hatchsep=1.5,shadowsize=1,dimen=middle}
\psset{dotsize=0.7 2.5,dotscale=1 1,fillcolor=black}
\psset{arrowsize=1 2,arrowlength=1,arrowinset=0.25,tbarsize=0.7 5,bracketlength=0.15,rbracketlength=0.15}
\begin{pspicture}(0,0)(156.88,21.24)
\psline(5,-11.25)(5,-3.25)
\psline(5,6.75)(5,16.25)
\rput(5,1.75){$\ahh$}
\rput(5,-14.38){$\aah$}
\psline(0,1.75)
(0,-3.25)
(10,-3.25)
(10,6.75)
(0,6.75)
(10,6.75)
(0,6.75)(0,1.75)
\psline[linewidth=0.35,border=1.05](47.5,6.88)
(47.5,-1.88)(54.38,-6.26)
\psline[linewidth=0.35](61.25,15.62)
(61.25,-1.88)(54.38,-6.26)
\psline[linewidth=0.35](35,-11.25)(35,6.88)
\pscustom[linewidth=0.35]{\psline(53.12,6.87)(29.38,6.88)
\psline(29.38,6.88)(37.5,15)
\psbezier(37.5,15)(37.5,15)(37.5,15)
\psline(37.5,15)(45,15)
\psline(45,15)(53.12,6.87)
\psbezier(53.12,6.87)(53.12,6.87)(53.12,6.87)
\psbezier(53.12,6.87)(53.12,6.87)(53.12,6.87)
}
\psline[linewidth=0.35](41.25,15)(41.25,17.5)
\rput(41.25,10.62){$\oovee$}
\pscustom[linewidth=0.35]{\psline(45,17.5)(37.5,17.5)
\psline(37.5,17.5)(41.25,21.24)
\psbezier(41.25,21.24)(41.25,21.24)(41.25,21.24)
\psline(41.25,21.24)(45,17.5)
\psline(45,17.5)(44.38,17.5)
\psbezier(44.38,17.5)(44.38,17.5)(44.38,17.5)
\psbezier(44.38,17.5)(44.38,17.5)(44.38,17.5)
}
\rput(41.25,19.38){$\oone$}
\rput(20.62,1.88){$\EQLS$}
\rput(35,-14.38){$\aah$}
\rput(61.25,19.38){$\aah$}
\rput(91.25,1.88){\ANDD}
\psline[linewidth=0.35](124.38,-12.5)(124.38,15.62)
\rput(124.38,19.38){$\aah$}
\pscustom[linewidth=0.35]{\psline(119.38,-12.5)(129.38,-12.5)
\psline(129.38,-12.5)(124.38,-17.49)
\psbezier(124.38,-17.49)(124.38,-17.49)(124.38,-17.49)
\psline(124.38,-17.49)(119.38,-12.5)
\psline(119.38,-12.5)(121.26,-12.5)
\psbezier(121.26,-12.5)(121.26,-12.5)(121.26,-12.5)
\psbezier(121.26,-12.5)(121.26,-12.5)(121.26,-12.5)
}
\rput(124.38,-14.38){$\oone$}
\rput(5,19.38){$\aah$}
\psline(151.88,6.76)(151.88,15.62)
\rput(151.88,1.76){$\ahh$}
\psline(146.88,1.76)
(146.88,-3.24)
(156.88,-3.24)
(156.88,6.76)
(146.88,6.76)
(156.88,6.76)
(146.88,6.76)(146.88,1.76)
\rput(151.25,19.38){$\aah$}
\rput(136.25,2.5){$\EQLS$}
\rput(151.88,-14.38){$\zzero$}
\psline[linewidth=0.35](151.88,-12.5)(151.88,-3.12)
\pscustom[linewidth=0.35]{\psline(146.88,-12.5)(156.88,-12.5)
\psline(156.88,-12.5)(151.88,-17.49)
\psbezier(151.88,-17.49)(151.88,-17.49)(151.88,-17.49)
\psline(151.88,-17.49)(146.88,-12.5)
\psline(146.88,-12.5)(148.75,-12.5)
\psbezier(148.75,-12.5)(148.75,-12.5)(148.75,-12.5)
\psbezier(148.75,-12.5)(148.75,-12.5)(148.75,-12.5)
}
\rput{0}(54.38,-5.62){\psellipse[linewidth=0.35,fillstyle=solid](0,0)(1.48,-1.48)}
\rput{0}(54.38,-14.38){\psellipse[linewidth=0.35,fillstyle=solid](0,0)(1.48,-1.48)}
\pscustom[linewidth=0.35]{\psline(42.5,-1.87)(66.25,-1.88)
\psline(66.25,-1.88)(58.13,-10)
\psbezier(58.13,-10)(58.13,-10)(58.13,-10)
\psline(58.13,-10)(50.63,-10.01)
\psline(50.63,-10.01)(42.5,-1.87)
\psbezier(42.5,-1.87)(42.5,-1.87)(42.5,-1.87)
\psbezier(42.5,-1.87)(42.5,-1.87)(42.5,-1.87)
}
\psline[linewidth=0.35](54.38,-5.62)(54.38,-14.38)
\pscustom[linewidth=0.35]{\psline(49.38,-12.5)(59.38,-12.5)
\psline(59.38,-12.5)(54.38,-17.5)
\psbezier(54.38,-17.5)(54.38,-17.5)(54.38,-17.5)
\psline(54.38,-17.5)(49.38,-12.5)
\psline(49.38,-12.5)(50,-12.5)
\psbezier(50,-12.5)(50,-12.5)(50,-12.5)
\psbezier(50,-12.5)(50,-12.5)(50,-12.5)
}
\end{pspicture}

\]
\bigskip

\noindent which shows that the orthocomplement $\ortt$ and the element $\iota$ uniquely determine each other. But if the orthocomplement satisfies the left hand equation, then it is easy to see that \eqref{eq:unbiased} holds if and only if $\ortt \ortt = \id$, as in \eqref{eq:complement}.
\end{proof}

\begin{definition} An \emph{orthocomplemented monoid} over a classical structure $A$ is a tuple $(A,\ovee,0,1,\ortt)$, where \begin{itemize}
\item $(A,\ovee,0)$ is a commutative monoid, 
\item $I\tto 1 A$ is an unbiased vector, and 
\item $A \tto \ortt A$ is the induced orthocomplementation.
\end{itemize}
\end{definition}

\begin{proposition}\label{prop:demorgan}
$(A,\ovee,0,1,\ortt)$ is an orthocomplemented monoid iff
$(A,\owedge,1,0,\ortt)$ is an orthocomplemented monoid, where
\vspace{-.5\baselineskip}
\[
 \def\JPicScale{.6}\newcommand{\oovee}{\scriptstyle \ovee}\newcommand{\oort}{\scriptscriptstyle \ortt}\newcommand{\oowedge}{\scriptstyle \owedge}\newcommand{\EQLS}{=}
\ifx\JPicScale\undefined\def\JPicScale{1}\fi
\psset{unit=\JPicScale mm}
\psset{linewidth=0.3,dotsep=1,hatchwidth=0.3,hatchsep=1.5,shadowsize=1,dimen=middle}
\psset{dotsize=0.7 2.5,dotscale=1 1,fillcolor=black}
\psset{arrowsize=1 2,arrowlength=1,arrowinset=0.25,tbarsize=0.7 5,bracketlength=0.15,rbracketlength=0.15}
\begin{pspicture}(0,0)(165.63,19.38)
\pscustom[linewidth=0.35]{\psline(165.63,-3.76)(141.87,-3.74)
\psline(141.87,-3.74)(149.99,4.38)
\psbezier(149.99,4.38)(149.99,4.38)(149.99,4.38)
\psline(149.99,4.38)(157.49,4.38)
\psline(157.49,4.38)(165.63,-3.76)
\psbezier(165.63,-3.76)(165.63,-3.76)(165.63,-3.76)
\psbezier(165.63,-3.76)(165.63,-3.76)(165.63,-3.76)
}
\rput(115,0){$\oovee$}
\rput(133.75,0){$\EQLS$}
\psline(145.62,-8.12)(145.63,-3.74)
\pscustom[]{\psline(142.5,-8.75)(142.5,-14.38)
\psline(142.5,-14.38)(149.38,-14.38)
\psline(149.38,-14.38)(149.38,-8.12)
\psline(149.38,-8.12)(142.5,-8.12)
\psline(142.5,-8.12)(142.5,-8.75)
\psbezier(142.5,-8.75)(142.5,-8.75)(142.5,-8.75)
\psbezier(142.5,-8.75)(142.5,-8.75)(142.5,-8.75)
}
\pscustom[linewidth=0.35]{\psline(126.88,-3.76)(103.12,-3.74)
\psline(103.12,-3.74)(111.24,4.38)
\psbezier(111.24,4.38)(111.24,4.38)(111.24,4.38)
\psline(111.24,4.38)(118.74,4.38)
\psline(118.74,4.38)(126.88,-3.76)
\psbezier(126.88,-3.76)(126.88,-3.76)(126.88,-3.76)
\psbezier(126.88,-3.76)(126.88,-3.76)(126.88,-3.76)
}
\psline[linewidth=0.35](115,4.38)(115,16.88)
\psline(106.88,-17.5)(106.88,-3.74)
\psline(123.75,-17.5)(123.75,-3.74)
\psline(145.62,-18.75)(145.63,-14.37)
\rput(145.62,-11.25){$\oort$}
\psline(161.88,-8.12)(161.88,-3.74)
\pscustom[]{\psline(158.75,-8.75)(158.75,-14.38)
\psline(158.75,-14.38)(165.62,-14.38)
\psline(165.62,-14.38)(165.62,-8.12)
\psline(165.62,-8.12)(158.75,-8.12)
\psline(158.75,-8.12)(158.75,-8.75)
\psbezier(158.75,-8.75)(158.75,-8.75)(158.75,-8.75)
\psbezier(158.75,-8.75)(158.75,-8.75)(158.75,-8.75)
}
\psline(161.88,-18.75)(161.88,-14.37)
\rput(161.88,-11.25){$\oort$}
\psline(153.75,15)(153.76,19.38)
\pscustom[]{\psline(150.62,14.38)(150.62,8.75)
\psline(150.62,8.75)(157.5,8.75)
\psline(157.5,8.75)(157.5,15)
\psline(157.5,15)(150.62,15)
\psline(150.62,15)(150.62,14.38)
\psbezier(150.62,14.38)(150.62,14.38)(150.62,14.38)
\psbezier(150.62,14.38)(150.62,14.38)(150.62,14.38)
}
\psline(153.75,4.38)(153.76,8.76)
\rput(153.75,11.88){$\oort$}
\rput(153.75,0){$\oowedge$}
\pscustom[linewidth=0.35]{\psline(62.5,-3.76)(38.75,-3.74)
\psline(38.75,-3.74)(46.87,4.38)
\psbezier(46.87,4.38)(46.87,4.38)(46.87,4.38)
\psline(46.87,4.38)(54.37,4.38)
\psline(54.37,4.38)(62.5,-3.76)
\psbezier(62.5,-3.76)(62.5,-3.76)(62.5,-3.76)
\psbezier(62.5,-3.76)(62.5,-3.76)(62.5,-3.76)
}
\rput(50.62,0){$\oovee$}
\rput(30.62,0){$\EQLS$}
\psline(42.5,-8.12)(42.5,-3.74)
\pscustom[]{\psline(39.38,-8.75)(39.38,-14.38)
\psline(39.38,-14.38)(46.25,-14.38)
\psline(46.25,-14.38)(46.25,-8.12)
\psline(46.25,-8.12)(39.38,-8.12)
\psline(39.38,-8.12)(39.38,-8.75)
\psbezier(39.38,-8.75)(39.38,-8.75)(39.38,-8.75)
\psbezier(39.38,-8.75)(39.38,-8.75)(39.38,-8.75)
}
\pscustom[linewidth=0.35]{\psline(23.75,-3.76)(-0,-3.74)
\psline(-0,-3.74)(8.11,4.38)
\psbezier(8.11,4.38)(8.11,4.38)(8.11,4.38)
\psline(8.11,4.38)(15.62,4.38)
\psline(15.62,4.38)(23.75,-3.76)
\psbezier(23.75,-3.76)(23.75,-3.76)(23.75,-3.76)
\psbezier(23.75,-3.76)(23.75,-3.76)(23.75,-3.76)
}
\psline[linewidth=0.35](11.88,4.38)(11.88,16.88)
\psline(3.75,-17.5)(3.75,-3.74)
\psline(20.62,-17.5)(20.62,-3.74)
\psline(42.5,-18.75)(42.5,-14.37)
\rput(42.5,-11.25){$\oort$}
\psline(58.75,-8.12)(58.75,-3.74)
\pscustom[]{\psline(55.62,-8.75)(55.62,-14.38)
\psline(55.62,-14.38)(62.5,-14.38)
\psline(62.5,-14.38)(62.5,-8.12)
\psline(62.5,-8.12)(55.62,-8.12)
\psline(55.62,-8.12)(55.62,-8.75)
\psbezier(55.62,-8.75)(55.62,-8.75)(55.62,-8.75)
\psbezier(55.62,-8.75)(55.62,-8.75)(55.62,-8.75)
}
\psline(58.75,-18.75)(58.75,-14.37)
\rput(58.75,-11.25){$\oort$}
\psline(50.62,15)(50.63,19.38)
\pscustom[]{\psline(47.5,14.38)(47.5,8.75)
\psline(47.5,8.75)(54.38,8.75)
\psline(54.38,8.75)(54.38,15)
\psline(54.38,15)(47.5,15)
\psline(47.5,15)(47.5,14.38)
\psbezier(47.5,14.38)(47.5,14.38)(47.5,14.38)
\psbezier(47.5,14.38)(47.5,14.38)(47.5,14.38)
}
\psline(50.62,4.38)(50.63,8.76)
\rput(50.62,11.88){$\oort$}
\rput(11.88,0){$\oowedge$}
\end{pspicture}

 \]
 \vspace{.75\baselineskip}
\end{proposition}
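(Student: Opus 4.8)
The plan is to recognise Prop.~\ref{prop:demorgan} as De~Morgan duality and to reduce it to the single fact that, by the right equation of \eqref{eq:complement}, $\ortt\circ\ortt=\id$; thus $\ortt:A\to A$ is invertible with $\ortt^{-1}=\ortt$, and conjugation by $\ortt$ transports structure between $\ovee$ and $\owedge$. I would open by recording two immediate consequences of the orthocomplement axioms. First, the left equation of \eqref{eq:compl} applied at $x=0$ gives $0\ovee\ort 0=1$, so $\ort 0=1$ (as $0$ is a $\ovee$-unit) and hence $\ort 1=\ort{\ort 0}=0$. Second, the displayed De~Morgan equation together with $\ortt\circ\ortt=\id$ shows that the passage $(\ovee,0,1)\mapsto(\owedge,1,0)$ is involutive: applying it twice collapses the resulting double negations and restores $\ovee$, while the roles of $0$ and $1$ are interchanged. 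It therefore suffices to prove one implication, say left to right; the converse is that very implication applied to $(A,\owedge,1,0,\ortt)$.

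So assume $(A,\ovee,0,1,\ortt)$ is an orthocomplemented monoid. For the commutative monoid part, transport $(A,\ovee,0)$ along the isomorphism $\ortt$: this yields the monoid $\bigl(A,\, \ortt\circ\ovee\circ(\ortt\otimes\ortt),\, \ortt\circ 0\bigr)=(A,\owedge,1)$, which is commutative and associative because it is isomorphic to $(A,\ovee,0)$ via $\ortt$, and whose unit is $\ort 0=1$. For the remaining data I would invoke Prop.~\ref{prop:ortho}: for a commutative monoid, an operation $\ortt$ is the orthocomplementation induced by an unbiased vector $\iota$ exactly when the pair $(\ortt,\iota)$ satisfies \eqref{eq:complement} (the bijection of Prop.~\ref{prop:ortho} matches $\ortt$ with precisely that $\iota$). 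Hence it is enough to check that $(\ortt,0)$ satisfies \eqref{eq:complement} with respect to $(A,\owedge,1)$; reading Prop.~\ref{prop:ortho} for $\owedge$ then gives that $0$ is unbiased for $\owedge$ and that $\ortt$ is the induced orthocomplementation, which is exactly what the definition of orthocomplemented monoid demands of $(A,\owedge,1,0,\ortt)$. The right equation of \eqref{eq:complement} is literally unchanged, and on points the left one computes as
\[
x\owedge\ort x\;=\;\ort{(\ort x\ovee\ort{\ort x})}\;=\;\ort{(\ort x\ovee x)}\;=\;\ort{(x\ovee\ort x)}\;=\;\ort 1\;=\;0,
\]
using the left equation of \eqref{eq:compl} for $\ovee$, commutativity of $\ovee$, and $\ort 1=0$.

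The step I expect to be the real work is promoting this last pointwise identity to a genuine string-diagram equality, i.e.\ verifying the actual left equation of \eqref{eq:complement} for $(\owedge,0)$. As the remark after \eqref{eq:complement} warns, that equation coincides with $x\ovee\ort x=\iota$ only under single-valuedness; in general it is a diagrammatic identity that also mentions the classical structure on $A$ used to copy $x$ and to present $\ortt$. I would derive the $(\owedge,0)$ instance from the $(\ovee,1)$ instance by pre- and post-composing the whole identity with copies of $\ortt$, collapsing each resulting $\ortt\circ\ortt$ to $\id$, and substituting $\owedge=\ortt\circ\ovee\circ(\ortt\otimes\ortt)$ and $0=\ort 1$; the point needing care is that this conjugation remains inside the diagrammatic calculus, which it does because $\ortt$ is a dagger-isomorphism with $\ortt^{-1}=\ortt$, so the operation is nothing but composing an equation of morphisms with an invertible map on each leg and then cancelling $\ortt\circ\ortt$. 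With that settled, the left-to-right implication is complete, and the right-to-left implication follows from the involutivity recorded at the outset.
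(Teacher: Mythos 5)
The paper states Prop.~\ref{prop:demorgan} without any proof, so there is nothing to compare against; your argument is correct and proceeds along the only natural route, namely transporting the monoid structure along the involution $\ortt$ and using $\ort{0}=1$, $\ortt\circ\ortt=\id$ and the commutativity of $\ovee$ to convert each axiom for $(\ovee,0,1)$ into the corresponding axiom for $(\owedge,1,0)$, with the involutivity of the De~Morgan translation giving the converse for free. Two points are worth making explicit in the step you rightly flag as ``the real work'': the cancellation $\ortt^\ddag\circ\ortt=\id$ needs $\ortt$ to be unitary, which holds because $1$ is unbiased (the operator induced by the entangled vector $\ovee^\ddag\circ 1$ is unitary, and that operator is $\ortt$ by the first diagram in the proof of Prop.~\ref{prop:ortho}); and since \eqref{eq:complement} is stated as a bent-wire (transposed) identity rather than the pointwise equation $x\ovee\ort{x}=\iota$, moving $\ortt$ from one input leg of $\ovee$ to the other requires both the commutativity of $\ovee$ and the symmetry of the cup supplied by the classical structure --- with that bookkeeping done, the conjugation argument goes through.
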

 
 \begin{definition} An \emph{orthocomplemented algebra} over a classical object $A$ is the structure $(A,\ovee,\owedge,0,1,\ortt)$, where $(A,\ovee,0,1,\ortt)$ and $(A,\owedge,1,0,\ortt)$ are orthocomplemented monoids related by De Morgan's laws as in Prop.~\ref{prop:demorgan}.
\end{definition}

\paragraph{Comment.} On one hand, orthocomplemented algebras can be thought of as a generalization of Boolean algebras, which also have involutive negation and satisfy De Morgan's laws, and are indeed a special case. But on the other hand, they are a very special case, as some of the main features of Boolean algebras do not survive in orthocomplemented algebras, and make room for the main features of effect algebras. An orthocomplemented algebra structure is derived over an arbitrary commutative monoid $(A,\ovee,0)$ from an arbitrary unbiased element $\iota \in A$, which becomes $1$, and determines $\ortt$ and $\owedge$. The monoid is thus not extended by any new elements, but the structure of orthocomplemented algebra is derived from the monoid as it is --- by the magic of the entanglement engendered from  the unbiased element.

Truth be told, though, the monoid $(A,\ovee,0)$ cannot be completely arbitrary without causing degeneracies. For instance, if we take $(A,\ovee,0)$ to be a classical monoid $(A,\mnd,\cun)$, giving rise to a special commutative Frobenius algebra, then the induced orthocomplement $\ortt$ boils down to the identity and the whole structure collapses to $\owedge = \mnd = \ovee$, with $x\ovee \ort x = \ort x = x$. Many other monoids $(A,\ovee,0)$, different from the classical ones, also cause degeneracies. To avoid that, we must impose some \emph{special}\/ requirements, and some \emph{antispecial}\/ requirements.

\section{Special, antispecial and superspecial algebras}\label{Sec:Anti-effect}
\subsection{Convolution}\label{Sec:Convol}
Every internal monoid $B\otimes B \tto\mu B\oot\iota I$ in a monoidal category $\CCc$ induces an external monoid on the vectors (states) of type $B$, with the same unit, and
\bea\label{eq:ext}
\conv{}\mu\ :\  \CCc(I,B) \times \CCc(I,B) & \to & \CCc(I,B)\\
<x,y> & \mapsto & \mu\circ(x\otimes y)
\eea
Dually, any internal comonoid $A\otimes A\oot \lambda  A \tto \epsilon I$ induces an external monoid on the covectors (effects) of type $A$, with the same counit and
\bea
\conv\lambda {}\ :\  \CCc(A,I) \times \CCc(A,I) & \to & \CCc(A,I)\\
<u,v> & \mapsto & (u\otimes v)\circ \lambda 
\eea
Putting the two together, any comonoid-monoid pair $\left<I\oot \epsilon A\tto\lambda  A\otimes A,\ B\otimes B\tto\mu B \oot \iota B \right>$ induces a \emph{convolution monoid}
\bea\label{eq:conv}
\conv{\lambda }\mu\ :\  \CCc(A,B) \times \CCc(A,B) & \to & \CCc(A,B)\\
<f,g> & \mapsto & \mu\circ(f\otimes g)\circ \lambda 
\eea
with the unit $A\tto\epsilon I\tto \iota B$.

\begin{definition}\label{def:conv}
A\/ \emph{convolution algebra} in a monoidal category $\CCc$ is the tuple $(A,\mu,\iota,\lambda ,\epsilon)$, where 
$(A,\mu,\iota)$ is a commutative\footnote{The commutativity requirement is usually not imposed on convolutions. Here we only work with commutative monoids and comonoids, so we restrict the usual definition of convolution to avoid repeating the requirement.} monoid and $(A,\lambda ,\epsilon)$ is a commutative comonoid. 

A \emph{convolution monoid} $\convv\ =\ \  \conv \mu\lambda \  :\  \CCc(A,A)\times \CCc(A,A)\to \CCc(A,A)$ is induced by a convolution algebra as in \eqref{eq:conv}, or as in the following string diagram 
\[
 \def\JPicScale{.5}\newcommand{\oovee}{\scriptstyle \mu}\newcommand{\comonoid}{\scriptstyle \lambda}\newcommand{\aah}{\scriptstyle A}\newcommand{\bbh}{\scriptstyle A}\newcommand{\ahh}{\scriptstyle f}\newcommand{\bhh}{\scriptstyle g}\newcommand{\EQLS}{=}\newcommand{\Conv}{\convv} 
\ifx\JPicScale\undefined\def\JPicScale{1}\fi
\psset{unit=\JPicScale mm}
\psset{linewidth=0.3,dotsep=1,hatchwidth=0.3,hatchsep=1.5,shadowsize=1,dimen=middle}
\psset{dotsize=0.7 2.5,dotscale=1 1,fillcolor=black}
\psset{arrowsize=1 2,arrowlength=1,arrowinset=0.25,tbarsize=0.7 5,bracketlength=0.15,rbracketlength=0.15}
\begin{pspicture}(0,0)(88.75,27.5)
\psline(5,-18.12)(5,-2)
\psline(5,8)(5,23.75)
\rput(5,3){$\ahh$}
\rput(5,-21.88){$\aah$}
\psline(0,3)
(0,-2)
(10,-2)
(10,8)
(0,8)
(10,8)
(0,8)(0,3)
\pscustom[linewidth=0.35]{\psline(86.88,11.87)(63.12,11.88)
\psline(63.12,11.88)(71.24,20)
\psbezier(71.24,20)(71.24,20)(71.24,20)
\psline(71.24,20)(78.74,20)
\psline(78.74,20)(86.88,11.87)
\psbezier(86.88,11.87)(86.88,11.87)(86.88,11.87)
\psbezier(86.88,11.87)(86.88,11.87)(86.88,11.87)
}
\psline[linewidth=0.35](75,20)(75,24.38)
\rput(75,15.62){$\oovee$}
\rput(46.88,3.12){$\EQLS$}
\rput(75,-21.88){$\aah$}
\psline(26.88,-18.12)(26.88,-2)
\psline(26.88,8)(26.88,23.75)
\rput(26.88,-21.88){$\aah$}
\psline(21.88,3)
(21.88,-2)
(31.88,-2)
(31.88,8)
(21.88,8)
(31.88,8)
(21.88,8)(21.88,3)
\rput(15.62,3.12){$\Conv$}
\rput(26.88,2.5){$\bhh$}
\rput(5,26.88){$\bbh$}
\rput(26.88,27.5){$\bbh$}
\psline(66.88,-7.5)(66.88,-2.62)
\psline(66.88,7.38)(66.88,11.88)
\rput(66.88,2.38){$\ahh$}
\psline(61.88,2.38)
(61.88,-2.62)
(71.88,-2.62)
(71.88,7.38)
(61.88,7.38)
(71.88,7.38)
(61.88,7.38)(61.88,2.38)
\psline(83.75,-7.5)(83.75,-2.62)
\psline(83.75,7.38)(83.75,11.88)
\psline(78.75,2.38)
(78.75,-2.62)
(88.75,-2.62)
(88.75,7.38)
(78.75,7.38)
(88.75,7.38)
(78.75,7.38)(78.75,2.38)
\rput(83.75,2.5){$\bhh$}
\pscustom[linewidth=0.35]{\psline(63.12,-7.49)(86.88,-7.5)
\psline(86.88,-7.5)(78.76,-15.62)
\psbezier(78.76,-15.62)(78.76,-15.62)(78.76,-15.62)
\psline(78.76,-15.62)(71.26,-15.63)
\psline(71.26,-15.63)(63.12,-7.49)
\psbezier(63.12,-7.49)(63.12,-7.49)(63.12,-7.49)
\psbezier(63.12,-7.49)(63.12,-7.49)(63.12,-7.49)
}
\psline[linewidth=0.35](75,-15.63)(75,-19.38)
\rput(75,27.5){$\bbh$}
\rput(75,-11.25){$\comonoid$}
\end{pspicture}

\]
\end{definition}

\vspace{3\baselineskip}
\begin{definition}\label{def:spec}
A\/ convolution algebra $(A,\mu,\iota,\lambda ,\epsilon)$ is called
\begin{enumerate}[i.]
\item \emph{special} if $\id \convv \id$ is unitary, 
and 
\item \emph{antispecial} if $\id \convv \id$  is a scaled projector.
\end{enumerate}
\end{definition}

\paragraph{Remarks.} Recall that $e\in \CCc(A,A)$ is a 
\begin{enumerate}[i.]
\item \emph{unitary}\/ when $e\circ e^\ddag = e^\ddag\circ e = \id$; 
\item \emph{scaled projector} when $e = a\circ b^\ddag$ for some $a,b \in \CCc(I,A)$.
\end{enumerate}
In addition to \eqref{eq:ext}, any internal monoid $(B,\mu, \iota)$ also induces the \emph{Cayley}\/ representation
\bear
\Upsilon :\  \CCc(B) & \to & \CCc(B,B)\\
b  & \mapsto &  \mu\circ(b\otimes B)
\eear
When this monoid is a part of a classical structure, then with respect to this structure, the vector $b$ is
\begin{enumerate}[i.]
\item \emph{unbiased} if and only if $\Upsilon b$ is a unitary, and
\item a \emph{basis} vector if and only if $\Upsilon b$ is a pure projector.
\end{enumerate}
This is spelled out in \cite[Prop.~5.2]{Coecke-Duncan:ZX,PavlovicD:Qabs12}

\paragraph{Examples.} Every classical structure $(A,\mnd,\unt,\cmn,\cun)$ induces a convolution algebra \cite{PavlovicD:CQStruct}. When $\CCc = \FHilb$, then classical structures correspond to bases \cite{PavlovicD:MSCS13}, which induce the representations of morphisms $f,g \in \FHilb(A,B)$ as matrices and $f\convv g = \left(f_{ij} \cdot g_{ij}\right)_{n\times m}$ is the entrywise multiplication of the matrix representations $f = \left(f_{ij}\right)_{n\times m}$ and $g = \left(g_{ij}\right)_{n\times m}$. When $\CCc = \Rel$, then classical structures are disjoint unions of abelian groups \cite{PavlovicD:QI09}. With the additive notation for these group structures, the convolution of relations is
\bear a\big( R \convv S\big) b & \iff & \exists uv\in A\ \exists xy\in B.\ u+v = a \ \wedge\  u R x\  \wedge\  vSy\  \wedge\  x+y = b \eear
The standard classical structures in $\Rel$ can be viewed as the disjoint unions of the trivial group $\ZZz_1$, and for these standard classical structures, the convolution boils down to the intersection, i.e. $R \convv S = R\cap S$. 

\paragraph{Remark.} If every object in a dagger-compact category $\CCc$ have classical structures (like, e.g., all vector spaces have bases), then the induced convolutions make all hom-sets $\CCc(A,B)$ into abelian groups. This does not make $\CCc$ into an abelian category, because these convolutions are generally not preserved under composition. E.g., the relations $P\, ; (R\cap S)$ and $(P\, ; R)\cap (P\, ; S)$ coincide only if the relation $P$ is single-valued, i.e. a partial map.

It turns out that effect algebras are defined in terms of partial functions with a good reason.

\subsection{Maps}\label{Sec:Maps}
\begin{definition}\label{def:conv-preord}
The \emph{convolution preorder} induced by $\convv:\  \CCc(A,B) \times \CCc(A,B)  \to  \CCc(A,B)$ is the transitive reflexive relation $\leq$ on $\CCc(A,B)$ defined by
\bear
f \leq g & \iff & \exists \ell \in \CCc(A,B).\ f\convv \ell = g 
\eear
\end{definition}

\begin{definition}\label{def:tot-sv-map}
A morphism $f\in \CCc(A,B)$ in a dagger-compact category $\CCc$ is said to be
\begin{enumerate}[i.]
\item \emph{total}\ if $\id_A \leq f^\ddag \circ f$
\item \emph{single-valued}\ (or a \emph{partial map}) if $f\circ f^\ddag \leq \id_B$
\item a \emph{map}\/ if it is total and single-valued.
\end{enumerate}
The subcategories of $\CCc$ spanned by total, single-valued morphisms, and maps are denoted $\CCc_t$, $\CCc_s$ and $\CCc_m$, respectively.
\end{definition}

In a bicategory $\CCC$, a 1-cell $f\in \CCC(A,B)$ is called a map if it has a right adjoint $f^\ddag \in \CCC(B,A)$.  Remarkably, the maps within an arbitrary bicategory form an ordinary category. In particular, restricted to partial maps, the convolution preorder becomes a partial order, in the sense that $(f\leq g\ \wedge\  g\leq f)\  \Longrightarrow\  f=g$; and restricted to total maps, it becomes discrete, in the sense that $f\leq g \Longrightarrow f=g$. This remains true in a large family of bicategories \cite{PavlovicD:mapsI,PavlovicD:mapsII}. Here we do not need such results in full generality, but we will need the following lemma, instantiated to convolution preorders.

\begin{lemma}\label{lemma:sveq} For partial maps $f, g\in \CCc_{s}(A,B)$ the following holds
\beq\label{eq:conv-eq}
\newcommand{\eff}{\scriptstyle f}
\newcommand{\gee}{\scriptstyle g}
\newcommand{\geedag}{\scriptstyle g^\ddag}
\newcommand{\EQLS}{=}
\newcommand{\LEQ}{\leq}
\newcommand{\SAME}{\iff}
\newcommand{\ANDALSO}{and}
\def\JPicScale{.7}
\ifx\JPicScale\undefined\def\JPicScale{1}\fi
\psset{unit=\JPicScale mm}
\psset{linewidth=0.3,dotsep=1,hatchwidth=0.3,hatchsep=1.5,shadowsize=1,dimen=middle}
\psset{dotsize=0.7 2.5,dotscale=1 1,fillcolor=black}
\psset{arrowsize=1 2,arrowlength=1,arrowinset=0.25,tbarsize=0.7 5,bracketlength=0.15,rbracketlength=0.15}
\begin{pspicture}(0,0)(178.12,17.11)
\psline[linewidth=0.35](3.12,8.12)
(3.12,-8.13)(10,-15)
\psline[linewidth=0.35](16.88,8.12)
(16.88,-8.13)(10,-15)
\rput{90}(10,-15.62){\psellipse[linewidth=0.35,fillstyle=solid](0,0)(1.48,-1.48)}
\psline[linewidth=0.35]{->}(10,-23.12)(10,-15.63)
\rput(25,0){$\EQLS$}
\psline[linewidth=0.35](16.88,-7.5)
(16.87,8.13)(10,15)
\psline[linewidth=0.35](3.12,-8.12)
(3.12,8.13)(10,15)
\pscustom[linewidth=0.35,fillcolor=white,fillstyle=solid]{\psbezier(14.38,-2.5)(14.38,-2.5)(14.38,-2.5)(14.38,-2.5)
\psline(14.38,-2.5)(20,-2.5)
\psline(20,-2.5)(20,3.12)
\psline(20,3.12)(14.38,3.12)
\psline(14.38,3.12)(14.38,-2.5)
\psbezier(14.38,-2.5)(14.38,-2.5)(14.38,-2.5)
\psline(14.38,-2.5)(14.38,-1.25)
}
\pscustom[linewidth=0.35,fillcolor=white,fillstyle=solid]{\psbezier(0,-2.5)(0,-2.5)(0,-2.5)(0,-2.5)
\psline(0,-2.5)(5.62,-2.5)
\psline(5.62,-2.5)(5.62,3.12)
\psline(5.62,3.12)(0,3.12)
\psline(0,3.12)(0,-2.5)
\psbezier(0,-2.5)(0,-2.5)(0,-2.5)
\psline(0,-2.5)(0,-1.25)
}
\psline[linewidth=0.35](33.75,8.12)
(33.75,-8.13)(40.62,-15)
\psline[linewidth=0.35](47.5,8.12)
(47.5,-8.13)(40.62,-15)
\rput{0}(40.62,-15.62){\psellipse[linewidth=0.35,fillstyle=solid](0,0)(1.48,-1.48)}
\psline[linewidth=0.35]{->}(40.62,-23.12)(40.62,-15.63)
\psline[linewidth=0.35](47.5,-7.5)
(47.5,8.13)(40.62,15)
\psline[linewidth=0.35](33.75,-8.12)
(33.75,8.13)(40.62,15)
\pscustom[linewidth=0.35,fillcolor=white,fillstyle=solid]{\psbezier(31.25,1.25)(31.25,1.25)(31.25,1.25)(31.25,1.25)
\psline(31.25,1.25)(36.87,1.25)
\psline(36.87,1.25)(36.87,6.87)
\psline(36.87,6.87)(31.25,6.87)
\psline(31.25,6.87)(31.25,1.25)
\psbezier(31.25,1.25)(31.25,1.25)(31.25,1.25)
\psline(31.25,1.25)(31.25,2.5)
}
\pscustom[linewidth=0.35,fillcolor=white,fillstyle=solid]{\psbezier(31.25,-6.87)(31.25,-6.87)(31.25,-6.87)(31.25,-6.87)
\psline(31.25,-6.87)(36.88,-6.87)
\psline(36.88,-6.87)(36.88,-1.25)
\psline(36.88,-1.25)(31.25,-1.25)
\psline(31.25,-1.25)(31.25,-6.87)
\psbezier(31.25,-6.87)(31.25,-6.87)(31.25,-6.87)
\psline(31.25,-6.87)(31.25,-5.62)
}
\rput(3.12,0){$\eff$}
\rput(16.88,0){$\gee$}
\rput(33.75,-3.75){$\eff$}
\rput(34.38,4.38){$\geedag$}
\psline[linewidth=0.35](93.75,8.12)
(93.75,-8.13)(100.62,-15)
\psline[linewidth=0.35](107.5,8.12)
(107.5,-8.13)(100.62,-15)
\rput{0}(100.62,-15.62){\psellipse[linewidth=0.35,fillstyle=solid](0,0)(1.48,-1.48)}
\psline[linewidth=0.35]{->}(100.62,-23.12)(100.62,-15.63)
\rput(115.62,0){$\EQLS$}
\psline[linewidth=0.35](107.5,-7.5)
(107.5,8.13)(100.62,15)
\psline[linewidth=0.35](93.75,-8.12)
(93.75,8.13)(100.62,15)
\pscustom[linewidth=0.35,fillcolor=white,fillstyle=solid]{\psbezier(105,-2.5)(105,-2.5)(105,-2.5)(105,-2.5)
\psline(105,-2.5)(110.62,-2.5)
\psline(110.62,-2.5)(110.62,3.12)
\psline(110.62,3.12)(105,3.12)
\psline(105,3.12)(105,-2.5)
\psbezier(105,-2.5)(105,-2.5)(105,-2.5)
\psline(105,-2.5)(105,-1.25)
}
\pscustom[linewidth=0.35,fillcolor=white,fillstyle=solid]{\psbezier(90.62,-2.5)(90.62,-2.5)(90.62,-2.5)(90.62,-2.5)
\psline(90.62,-2.5)(96.24,-2.5)
\psline(96.24,-2.5)(96.24,3.12)
\psline(96.24,3.12)(90.62,3.12)
\psline(90.62,3.12)(90.62,-2.5)
\psbezier(90.62,-2.5)(90.62,-2.5)(90.62,-2.5)
\psline(90.62,-2.5)(90.62,-1.25)
}
\psline[linewidth=0.35](123.12,-22.5)(123.12,16.25)
\pscustom[linewidth=0.35,fillcolor=white,fillstyle=solid]{\psbezier(120.62,-2.5)(120.62,-2.5)(120.62,-2.5)(120.62,-2.5)
\psline(120.62,-2.5)(126.24,-2.5)
\psline(126.24,-2.5)(126.24,3.12)
\psline(126.24,3.12)(120.62,3.12)
\psline(120.62,3.12)(120.62,-2.5)
\psbezier(120.62,-2.5)(120.62,-2.5)(120.62,-2.5)
\psline(120.62,-2.5)(120.62,-1.25)
}
\rput(93.75,0){$\eff$}
\rput(107.5,0){$\gee$}
\rput(123.12,0){$\eff$}
\psline[linewidth=0.35](161.25,-22.5)(161.25,16.25)
\pscustom[linewidth=0.35,fillcolor=white,fillstyle=solid]{\psbezier(158.75,-2.5)(158.75,-2.5)(158.75,-2.5)(158.75,-2.5)
\psline(158.75,-2.5)(164.37,-2.5)
\psline(164.37,-2.5)(164.37,3.12)
\psline(164.37,3.12)(158.75,3.12)
\psline(158.75,3.12)(158.75,-2.5)
\psbezier(158.75,-2.5)(158.75,-2.5)(158.75,-2.5)
\psline(158.75,-2.5)(158.75,-1.25)
}
\rput(161.25,0){$\eff$}
\psline[linewidth=0.35](175,-22.5)(175,16.25)
\pscustom[linewidth=0.35,fillcolor=white,fillstyle=solid]{\psbezier(172.5,-2.5)(172.5,-2.5)(172.5,-2.5)(172.5,-2.5)
\psline(172.5,-2.5)(178.12,-2.5)
\psline(178.12,-2.5)(178.12,3.12)
\psline(178.12,3.12)(172.5,3.12)
\psline(172.5,3.12)(172.5,-2.5)
\psbezier(172.5,-2.5)(172.5,-2.5)(172.5,-2.5)
\psline(172.5,-2.5)(172.5,-1.25)
}
\rput(175,0){$\gee$}
\rput(144.38,0){$\SAME$}
\rput{0}(123.12,15.62){\psellipse[linewidth=0.35,fillstyle=solid](0,0)(1.48,-1.48)}
\rput(70,0){$\ANDALSO$}
\rput(168.12,0){$\LEQ$}
\rput{0}(100.62,15){\psellipse[linewidth=0.35,fillstyle=solid](0,0)(1.48,-1.48)}
\rput{0}(40.62,15){\psellipse[linewidth=0.35,fillstyle=solid](0,0)(1.48,-1.48)}
\rput{0}(10,15){\psellipse[linewidth=0.35,fillstyle=solid](0,0)(1.48,-1.48)}
\end{pspicture}

\eeq
\vspace{2\baselineskip}
%
\end{lemma}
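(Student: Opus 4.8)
The statement reorganizes the existentially quantified convolution order $f\le g$ of Definition~\ref{def:conv-preord} into quantifier-free equations that hold precisely because $f$ and $g$ are single-valued, and the natural plan is to prove the two implications separately, using single-valuedness to eliminate the hidden existential quantifier.

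From $f\le g$ to the displayed equations: I would unfold the hypothesis as $f\convv\ell=g$ for some $\ell\in\CCc(A,B)$, i.e.\ $\mu\circ(f\otimes\ell)\circ\lambda=g$, and record it as a string diagram over the classical structure on $A$. Since $\ddag$ is a monoidal dagger and the comultiplication and multiplication of a classical structure are mutually adjoint, $(f\convv\ell)^\ddag=f^\ddag\convv\ell^\ddag$, so the hypothesis bends into a diagram relating $g^\ddag$ to $f^\ddag$ and $\ell^\ddag$ as well. Composing these with $f$ and with $f^\ddag$, together with the entangled vectors supplied by the classical structure, and then invoking single-valuedness of $f$, namely $f\circ f^\ddag\le\id_B$, lets one swallow the factor built from $\ell$ into the part built from $f$ and $g$ alone; the Frobenius condition, used to slide the classical node along the wires, rearranges what is left into exactly the two claimed equalities. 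The conceptual content is that, once $f$ is single-valued, any witness $\ell$ for $f\le g$ is forced and is recoverable from $f$ and $g$, so the conclusion mentions only $f$, $g$ and their daggers.

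For the converse I would assume the two equations and read the first of them back through Definition~\ref{def:conv}: it displays $g$ as $f\convv(\text{an explicit diagram in }f^\ddag\text{ and }g)$, which is literally of the form $f\convv\ell=g$, hence $f\le g$. The second equation is exactly what is needed to guarantee that this explicit $\ell$ is well typed and that the domain idempotent of $f$ already covers that of $g$, so that the reassembly yields equality rather than merely $f\convv\ell\le g$. The step I expect to be the real obstacle is the bookkeeping of the partiality (domain) idempotents while transporting the convolution past the dagger: one must apply the Frobenius identity, the adjunction $\mu^\ddag=\lambda$, the inequality $f\circ f^\ddag\le\id$, and the fact recalled just before the lemma that $\le$ is antisymmetric on single-valued morphisms and discrete on total ones, in the right order, so that the spurious $\ell$ cancels cleanly; the remainder is routine diagram chasing.
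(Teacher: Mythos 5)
First, a point of reference: the paper does not actually prove this lemma --- it is stated as an instance of the general theory of maps in bicategories, and the paragraph immediately preceding it explicitly defers to the cited sources for that theory --- so your sketch cannot be checked against an in-paper argument and has to stand on its own. As written it does not, and the most basic problem is that you have misparsed the statement. The display is a \emph{conjunction}: the first equation, $\cun\circ(f\convv g)=\cun\circ((g^\ddag\circ f)\convv\id)$, is an unconditional identity (it follows by bending the $g$-wire around the cap $\cun\circ\blacktriangledown$ supplied by the classical structure, and has nothing to do with the order), while only the second equation, relating $\cun\circ(f\convv g)$ to $\cun\circ f$, sits inside the biconditional with $f\leq g$. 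You instead read the entire display as the right-hand side of a single biconditional and propose to prove ``the two implications'' of that, so part of what you argue for is not what is claimed, and the unconditional identity is never established on its own.

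Second, the mechanism you invoke to eliminate the witness $\ell$ cannot do the job. The definitional form of single-valuedness, $f\circ f^\ddag\leq\id_B$, is itself an existential statement in the convolution preorder of Def.~\ref{def:conv-preord}, so ``invoking'' it only introduces a second unknown witness; nothing in your sketch explains how the factor built from $\ell$ is actually ``swallowed.'' The step that makes the existential disappear is the equational characterization of single-valuedness in Prop.~\ref{prop:maps-char}(ii), namely $\blacktriangle_B\circ f=(f\otimes f)\circ\blacktriangle_A$, which together with speciality of the classical structure gives $f\convv f=f$; then $f\convv\ell=g$ forces $f\convv g=(f\convv f)\convv\ell=f\convv\ell=g$, so the existentially quantified order relation collapses to a single equation between $f\convv g$ and one of its arguments, and the counit equation in the display is its image under postcomposition with $\cun$. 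Your converse is likewise off target: the first displayed equation does not ``display $g$ as $f\convv\ell$'' --- both of its sides are morphisms of type $A\to I$ and neither is $g$ --- and what is actually needed there is an antisymmetry-type step for single-valued morphisms (the convolution $f\convv g$ is again single-valued, it lies on one side of $f$ in the convolution order by construction, and the hypothesis that its composite with $\cun$ agrees with that of $f$ upgrades this comparison to an equality). That order-theoretic fact is precisely the content the paper imports from the cited work on maps, and it is absent from your sketch.
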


If a dagger-compact category $\CCc$ admits a classical structure on every object, a fixed family of chosen convolution preorders on all hom-sets give rise to a \emph{cartesian  bicategory}\/ \cite{Carboni-Walters}. The following proposition is proved in \cite[Thm.~1.6, Lemma~2.5]{Carboni-Walters}.

\begin{proposition}\label{prop:maps-char}
In the cartesian bicategory $\CCc$ induced by a dagger-compact category with fixed classical structures (and the induced convolution preorders), the following equivalences hold for every morphism $f\in \CCc(A,B)$
\begin{enumerate}[i.]
\item $f$ is total if and only if $\cun_B \circ f = \cun_A$;
\item $f$ is single-valued if and only if $\blacktriangle_B \circ f = (f\otimes f) \circ \blacktriangle_A$
\item $f$ is a map if and only if it is a comonoid homomorphism between the classical structures on $A$ and $B$.
\end{enumerate}
\end{proposition}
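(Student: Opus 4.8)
The statement is the concrete shadow, inside the cartesian bicategory that $\CCc$ becomes once the chosen classical structures and the induced convolution preorders are fixed, of the standard description of maps in a cartesian bicategory. So the plan is to reduce it to \cite[Thm.~1.6, Lemma~2.5]{Carboni-Walters} and to supply the translation. The first step, which I expect to be the real obstacle, is to pin down the cartesian-bicategory data in concrete terms: the ``diagonal'' $\cmn_A$ and the ``projection to $I$'' $\cun_A$ of each classical structure must be shown to be lax natural for $\leq$, that is, every $f\in\CCc(A,B)$ must carry canonical $\leq$-comparisons between $\cmn_B\circ f$ and $(f\otimes f)\circ\cmn_A$, and between $\cun_B\circ f$ and $\cun_A$. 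Each such comparison is produced by writing one of the two sides as a convolution of the other with an explicit morphism assembled out of $f$, $\cmn$ and $\mnd$, and then checking the defining equation of $\leq$; this is the one genuine string-diagram computation of the proof, and it is exactly where the Frobenius law and the specialness of the classical structures are used, to merge the two copies of $f$ occurring in $(f\otimes f)\circ\cmn_A$ onto a single strand. This is also what is being asserted without proof just before the proposition, namely that $\CCc$ with its convolution preorders is a cartesian bicategory; once that is granted, the proposition is a direct reading of \cite[Thm.~1.6, Lemma~2.5]{Carboni-Walters}, and the paragraphs below simply make the translation explicit.

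Granting those comparisons, clause (ii) runs as follows. One direction of the comparison between $\cmn_B\circ f$ and $(f\otimes f)\circ\cmn_A$ holds for every $f$; and when $f$ is single-valued both composites are again partial maps, being built from $f$, $f\otimes f$ and the maps $\cmn_A,\cmn_B$. Since $\leq$ is antisymmetric on partial maps (the remark after Def.~\ref{def:tot-sv-map}), it is then enough to supply the opposite comparison, and that is exactly what Lemma~\ref{lemma:sveq}, instantiated to the convolution preorder and applied with $g:=f$, delivers from single-valuedness; hence $\cmn_B\circ f=(f\otimes f)\circ\cmn_A$. For the converse, from $\cmn_B\circ f=(f\otimes f)\circ\cmn_A$ I would pass to daggers to get $f^\ddag\circ\mnd_B=\mnd_A\circ(f^\ddag\otimes f^\ddag)$ --- so $f^\ddag$ is a homomorphism of the classical monoids --- and feed this, together with specialness $\mnd_B\circ\cmn_B=\id_B$, into a short computation of $f\circ f^\ddag$; comparing the result with $\id_B$ through the free comparison yields $f\circ f^\ddag\leq\id_B$, which is single-valuedness.

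Clause (i) is the same pattern with $\cun$ replacing $\cmn$: one half of the comparison between $\cun_B\circ f$ and $\cun_A$ is free, and it is promoted to the equation $\cun_B\circ f=\cun_A$ exactly when $f$ is total. The non-trivial implication uses $\id_A\leq f^\ddag\circ f$ together with monotonicity of composition for $\leq$ and lax naturality of $\cun$ at $f^\ddag$, chaining $\cun_A\leq\cun_A\circ f^\ddag\circ f\leq\cun_B\circ f$; the free comparison $\cun_B\circ f\leq\cun_A$ together with antisymmetry of $\leq$ (the remark after Def.~\ref{def:tot-sv-map}, applied to the map $\cun_A$) then forces equality, and the reverse implication is the short calculation that $\cun_B\circ f=\cun_A$ makes $(f^\ddag\circ f)\convv\id_A=\id_A$, i.e.\ $\id_A\leq f^\ddag\circ f$. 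Finally, clause (iii) needs no new argument: by (i) and (ii), $f$ is a map in the sense of Def.~\ref{def:tot-sv-map} exactly when it preserves both $\cun$ and $\cmn$, and preserving the counit and the comultiplication of a commutative comonoid is, by definition, being a comonoid homomorphism between the classical structures on $A$ and $B$. In short, the whole proposition reduces to the two lax-naturality comparisons plus a single use of Lemma~\ref{lemma:sveq}, and those lax-naturality comparisons --- equivalently, the verification that $\CCc$ with its convolution preorders is a cartesian bicategory --- are the hard part.
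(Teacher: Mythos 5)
Your proposal takes the same route as the paper, which offers no proof of its own beyond the sentence preceding the statement: it simply cites \cite[Thm.~1.6, Lemma~2.5]{Carboni-Walters}, exactly the reduction you propose. The extra material you supply (the lax-naturality comparisons for $\cun$ and $\blacktriangle$, the use of Lemma~\ref{lemma:sveq}, and the bookkeeping for clause~(iii)) is a reasonable sketch of the translation that the paper leaves implicit, and correctly identifies the verification of the cartesian-bicategory structure as the genuine content being deferred to the reference.
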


\subsection{Effect algebras are superspecial}\label{Sec:superspec}
One direction of the following lemma follows directly from the definition of single-valuedness. The other direction also requires the observation that every monoid and every comonoid must be total (because they has the unit, and the counit, respectively).

\begin{lemma}\label{lemma:sing-spec}
A commutative monoid $(A,\ovee,0)$ in a dagger-compact category $\CCc$ with classical structures is single-valued with respect to these structures if and only if the induced convolution algebra $(A,\ovee,0,\ovee^\ddag,0^\ddag)$ is special.
\end{lemma}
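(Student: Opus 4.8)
The plan is to translate both sides of the asserted equivalence into statements about the single endomorphism $p := \ovee\circ\ovee^\ddag\colon A\to A$, and then play them off against the fact — flagged just before the lemma — that monoids and comonoids are automatically total. By Definition~\ref{def:conv} the convolution monoid carried by $(A,\ovee,0,\ovee^\ddag,0^\ddag)$ is $f\convv g = \ovee\circ(f\otimes g)\circ\ovee^\ddag$, so that $\id\convv\id = p$; hence, by Definition~\ref{def:spec}, the algebra is special exactly when $p$ is unitary. Since the dagger is involutive, $p^\ddag = (\ovee^\ddag)^\ddag\circ\ovee^\ddag = p$, so $p$ is self-adjoint, and a self-adjoint unitary is an involution; thus ``$p$ unitary'' unpacks to $p\circ p = \id_A$. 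On the other side, Definition~\ref{def:tot-sv-map} says that $\ovee$ is single-valued with respect to the classical structures precisely when $p = \ovee\circ\ovee^\ddag \le \id_A$ in the associated convolution preorder. Finally, the two totality facts I shall use are: the monoid $(A,\ovee,0)$, having the unit $0$, is total, i.e.\ $\id_{A\otimes A} \le \ovee^\ddag\circ\ovee$; and the comonoid $(A,\ovee^\ddag,0^\ddag)$, having the counit $0^\ddag$, is total, i.e.\ $\id_A \le (\ovee^\ddag)^\ddag\circ\ovee^\ddag = p$. These are the unit and counit laws transported into the convolution preorder, using that horizontal composition in the cartesian bicategory of Proposition~\ref{prop:maps-char} is monotone.

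For ``single-valued $\Rightarrow$ special'': single-valuedness of $\ovee$ is, by definition, $p \le \id_A$. Since $p$ is self-adjoint and below the identity, $p\circ p^\ddag = p\circ p \le \id_A\circ p = p \le \id_A$, so $p$ is a partial map, and $\id_A$ is a map. Combining $p \le \id_A$ with the always-valid $\id_A \le p$ coming from totality of the comonoid $(A,\ovee^\ddag,0^\ddag)$, and using that the convolution preorder restricts to a partial order on partial maps, we conclude $p = \id_A$. In particular $p$ is unitary, so the convolution algebra is special. This is the implication that ``follows directly from the definition of single-valuedness,'' the only extra input being the inequality $\id_A \le p$.

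For the converse ``special $\Rightarrow$ single-valued'': now $p$ is unitary, hence, being self-adjoint, $p\circ p = \id_A$. Starting from the always-valid $\id_A \le \ovee^\ddag\circ\ovee$ (totality of the monoid) and horizontally composing with $\ovee$ on the left and $\ovee^\ddag$ on the right, monotonicity gives $p = \ovee\circ\ovee^\ddag \le \ovee\circ(\ovee^\ddag\circ\ovee)\circ\ovee^\ddag = p\circ p = \id_A$. Thus $p \le \id_A$, which is precisely single-valuedness of $\ovee$. This is the implication that ``also requires'' the totality observation.

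I expect the main obstacle to be bookkeeping rather than mathematics: one must keep every inequality $\le$ referring to the \emph{same} convolution preorder — the one induced by the chosen classical structures, which is what ``single-valued with respect to these structures'' means — and one must be entitled to antisymmetry of that preorder on partial maps and to monotonicity of horizontal composition for it, both of which come from the cartesian-bicategory structure underlying Lemma~\ref{lemma:sveq} and Proposition~\ref{prop:maps-char}. The small reductions ``self-adjoint unitary $\Rightarrow$ involution'' and ``a self-adjoint subidentity is a partial map'' are routine but should be made explicit, since they are what licenses the appeal to antisymmetry in the first implication.
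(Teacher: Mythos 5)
Your overall strategy---reduce everything to the self-adjoint endomorphism $p=\ovee\circ\ovee^\ddag=\id\convv\id$, observe that specialty means $p\circ p=\id_A$ and single-valuedness means $p\leq\id_A$, and play these off against a totality inequality supplied by the unit/counit law---is exactly what the paper's two-sentence hint before the lemma intends, and your forward direction (single-valued $\Rightarrow$ special, via $p\leq\id_A$, $\id_A\leq p$, and antisymmetry on partial maps) is a correct and appropriately careful fleshing-out of it.

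The converse, however, rests on a false premise. You invoke ``totality of the monoid'' in the form $\id_{A\otimes A}\leq\ovee^\ddag\circ\ovee$, i.e.\ totality of the multiplication $\ovee$ itself, justified by the existence of the unit. The unit law $\ovee\circ(0\otimes\id_A)=\id_A$ does not make $\ovee$ total: it forces the unit $0$ and the comultiplication $\ovee^\ddag$ to be total (the latter has the retraction $0^\ddag\otimes\id_A$, which is your other inequality $\id_A\leq p$), but says nothing about whether $x\ovee y$ is defined for $x\neq 0$. Indeed $\ovee$ is genuinely partial in every nondegenerate effect algebra: already for the two-element effect algebra in $\Rel$ the pair $(1,1)$ is not in the domain of $\ovee$, so $\id_{A\otimes A}\not\leq\ovee^\ddag\circ\ovee$, even though that algebra is single-valued and special and the lemma holds for it. (If $\ovee$ were total as well as single-valued it would be a map, which is exactly the degenerate situation the whole construction is meant to exclude.) The gap is local and easy to close: you do not need totality of $\ovee$, only the inequality $\id_A\leq p$ that you already established and used in the forward direction. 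Composing it with $p$ (by the same monotonicity of composition you invoke elsewhere) gives $p=\id_A\circ p\leq p\circ p=\id_A$, which is single-valuedness. With that substitution your proof is correct, and it also reads the paper's hint the right way: the ``totality of every monoid and comonoid'' it alludes to is totality of the units and of the comultiplication $\ovee^\ddag$, never of the multiplication $\ovee$.
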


The \emph{specialty}\/ requirement from Def.~\ref{def:spec} thus lifts to general dagger-compact categories the set theoretic restriction of Def.~\ref{def:effect}, that effect algebra structure consists of partial maps, i.e. that it is single-valued. The \emph{antispecialty}\/ requirement, on the other hand, lifts the rest of Def.~\ref{def:effect} to dagger-compact categories\footnote{Other interesting instances of the special and the antispecial requirements have been considered in \cite{CoeckeB:arithmetic,AmarH:LICS15}.}.

\begin{definition}\label{def:eff}
An orthocomplemented algebra $(A,\ovee,\owedge,0,1,\ortt)$  in a dagger-compact category $\CCc$ is said to be \emph{superspecial}
if it satisfies the following conditions:
\begin{enumerate}[(a)]
\item the convolution algebra $(A,\ovee,0,\ovee^\ddag,0^\ddag)$ is special, (or equivalently,\\ the convolution algebra $(A,\owedge,1,\owedge^\ddag,1^\ddag)$ is special), and
\item the convolution algebra $(A,\ovee,0,\owedge^\ddag,1^\ddag)$ is antispecial.
\end{enumerate}
\end{definition}

\begin{definition}\label{def:effect-gen}
Let $\CCc$ be a dagger-compact category with classical structures, and $\CCc_{s}$ the subcategory of single-valued morphisms. A \emph{general effect algebra}\/ is a diagram \eqref{eq:one}  in $\CCc_{s}$, such that $(A,\ovee,0)$ is a commutative monoid, and the diagrams in figure \eqref{eq:pb-string} are pullbacks.
\end{definition}

\begin{proposition}\label{prop:eff}
An orthocomplemented algebra $(A,\ovee,\owedge,0,1,\ortt)$ in a dagger-compact category $\CCc$ is superspecial 
if and only if $(A,\ovee,0,1,\ortt)$ is a general effect algebra in $\CCc$.
\end{proposition}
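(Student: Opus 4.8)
The plan is to strip both sides down to one equivalence. Both ``superspecial orthocomplemented algebra'' and ``general effect algebra on $(A,\ovee,0,1,\ortt)$'' carry the standing hypothesis that $(A,\ovee,0)$ is a commutative monoid, so only the extra clauses need comparing. By Lemma~\ref{lemma:sing-spec}, superspecial condition~(a) is exactly the single-valuedness of $\ovee$; the other three morphisms of the signature~\eqref{eq:one} --- $\ortt$, which is an involution and, up to a scalar, a unitary, and the states $0,1\colon I\to A$ --- are single-valued for free, so (a) is equivalent to saying that~\eqref{eq:one} is a diagram in $\CCc_s$. The proposition therefore reduces to: \emph{assuming $\ovee$ single-valued, superspecial condition~(b) holds if and only if the two squares of~\eqref{eq:pb-string} are pullbacks in $\CCc_s$.}

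The first step is to rephrase the pullback requirements equationally. Each square commutes for orthocomplement-theoretic reasons (the left one since $x\ovee\ort x=1$, the right one since $0\ovee 0=0$), so by Lemma~\ref{lemma:sveq} and the cartesian-bicategory characterisation of pullbacks of maps \cite{Carboni-Walters} the two squares are pullbacks precisely when
\[
\ovee^{\ddag}\circ 1 \;=\; \langle\id,\ortt\rangle\circ\cun^{\ddag}
\qquad\text{and}\qquad
\ovee^{\ddag}\circ 0 \;=\; \langle 0,0\rangle .
\]
At $\CCc=\Rel$ these unwind to $x\ovee y=1\iff y=\ort x$ and $x\ovee y=0\iff x=0=y$, i.e. to \eqref{eq:two}--\eqref{eq:three}, which is the consistency check to run before going on.

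The second step brings in antispeciality. De Morgan's laws (Proposition~\ref{prop:demorgan}) give $\owedge^{\ddag}=(\ortt\otimes\ortt)\circ\ovee^{\ddag}\circ\ortt$, hence $\id\convv\id=\ovee\circ\owedge^{\ddag}=\ovee\circ(\ortt\otimes\ortt)\circ\ovee^{\ddag}\circ\ortt$, and condition~(b) asserts that this morphism is a scaled projector $a\circ b^{\ddag}$. Using $\ortt\circ 0=1$, $\ortt\circ 1=0$, $\ortt\circ\ortt=\id$ and the (co)unit laws, I would first show that any scaled-projector factorisation of it must equal, up to an invertible scalar, $1\circ 0^{\ddag}$; thus (b) becomes the single equation $\ovee\circ\owedge^{\ddag}=1\circ 0^{\ddag}$. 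Deriving this equation from the two pullbacks is then a matter of feeding suitable cones through the pullback universal properties (and their De Morgan images) to produce the rank-one factorisation.

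The reverse implication --- from $\ovee\circ\owedge^{\ddag}=1\circ 0^{\ddag}$ back to the two pullbacks --- is the main obstacle, and is where the pullback formulation earns the ``subtle and cumbersome'' reputation advertised in the introduction: a single rank-one equation must be made to yield two separate universal properties (one governing $x\ovee y=1$, one governing $x\ovee y=0$, which De Morgan bundles together). I would precompose that equation with $0$, with $\langle\id,\ortt\rangle$, and with their daggers, obtaining equalities of composites of partial maps; Lemma~\ref{lemma:sveq} then turns each such equality into an order inequality $f\le g$ of partial maps, and Proposition~\ref{prop:maps-char}, together with the totality of every monoid and comonoid, the discreteness of $\le$ on total maps and its antisymmetry on partial maps, upgrades these to the required pullback factorisations and their uniqueness. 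The delicate point is keeping track of the domains of the partial maps at each step, so that each ``$f\le g$ plus side-condition'' really collapses to ``$f=g$''; this bookkeeping, for which Lemma~\ref{lemma:sveq} and Proposition~\ref{prop:maps-char} are the indispensable tools, is the technical core of the argument.
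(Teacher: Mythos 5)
Your skeleton matches the paper's: Lemma~\ref{lemma:sing-spec} disposes of condition (a), and everything reduces to ``antispecial $\iff$ the two squares of \eqref{eq:pb-string} are pullbacks''. But the load-bearing step of your plan --- that a commuting square of single-valued morphisms is a pullback in $\CCc_{s}$ precisely when the corresponding converse equation ($\ovee^\ddag\circ 1=\langle\id,\ortt\rangle\circ\cun^\ddag$, resp.\ $\ovee^\ddag\circ 0=\langle 0,0\rangle$) holds --- is asserted, not proved, and neither of the tools you cite delivers it. Lemma~\ref{lemma:sveq} compares two given partial maps via convolution, and Prop.~\ref{prop:maps-char} characterises totality, single-valuedness and maps; neither says anything about pullbacks. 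The allegory-theoretic fact you are implicitly invoking (Beck condition plus tabulating legs gives a pullback) is stated for \emph{maps}, whereas the universal property in Def.~\ref{def:effect-gen} quantifies over cones of merely single-valued morphisms ($u_1\colon U\to I$ is a genuinely partial ``domain predicate''), so the standard characterisation does not apply off the shelf. Establishing this equivalence in an arbitrary dagger-compact $\CCc$ is exactly the ``subtle and cumbersome'' work the paper performs by hand: it composes the left square of \eqref{eq:pb} with the De Morgan pullback \eqref{eq:pb-1} to obtain three mutually equivalent pullback squares, and then extracts the antispecial equation from the uniqueness of the factorisations of two explicitly constructed cones (\eqref{eq:U-1}--\eqref{eq:U-5}), rather than converting the pullback conditions into equations wholesale.

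Two further gaps. First, the scalar: the paper derives $\ovee\circ\owedge^\ddag=1\circ\omega\circ 0^\ddag$ for a specific scalar $\omega$, trivial in $\Rel$ but not in general; your normalised equation $\ovee\circ\owedge^\ddag=1\circ 0^\ddag$ silently sets $\omega=\id_I$, and your claim that \emph{every} scaled-projector decomposition $a\circ b^\ddag$ forces $a,b$ to be proportional to $1,0$ needs an argument (it can fail when the relevant scalars are not invertible, which the definition of ``scaled projector'' does not guarantee). Second, the direction from antispeciality back to the pullbacks --- which you yourself flag as ``the technical core'' --- is left entirely to ``feeding suitable cones'' and ``bookkeeping''; in particular you never indicate how a single equation about $\ovee\circ\owedge^\ddag$ yields the existence \emph{and uniqueness} of a factorisation for an arbitrary cone $(u_0,u_1)$ in $\CCc_{s}$. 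As it stands the proposal is a plausible programme, but both substantive implications of the central equivalence remain unproven.
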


\begin{proof}
Since the equivalence between the specialty and the single-valuedness is clear from Lemma~\ref{lemma:sing-spec}, the task boils down to proving the equivalence between the antispecialty and the pullback conditions from Sec.~\ref{Sec:effect}. In the context of sets and partial functions of Def.~\ref{def:effect}, this equivalence means that conditions (\ref{eq:two}-\ref{eq:three}) hold if and only if the equations $x\ovee y = u$ and $x\owedge y = v$ are satisfied only for $u=1$ and $v=0$. 

To prove this in the context of a dagger-compact category $\CCc$, first note that the square
\medskip
\beq\label{eq:pb-1}
\newcommand{\Carrier}{\scriptstyle A}
\newcommand{\Pairs}{\scriptstyle A\otimes A}
\newcommand{\oone}{\scriptstyle \ortt}
\newcommand{\ttwo}{\scriptstyle \ortt\otimes \ortt}
\newcommand{\uup}{\scriptstyle\ovee}
\newcommand{\ddown}{\scriptstyle\owedge}
\def\JPicScale{.6}
\ifx\JPicScale\undefined\def\JPicScale{1}\fi
\psset{unit=\JPicScale mm}
\psset{linewidth=0.3,dotsep=1,hatchwidth=0.3,hatchsep=1.5,shadowsize=1,dimen=middle}
\psset{dotsize=0.7 2.5,dotscale=1 1,fillcolor=black}
\psset{arrowsize=1 2,arrowlength=1,arrowinset=0.25,tbarsize=0.7 5,bracketlength=0.15,rbracketlength=0.15}
\begin{pspicture}(0,0)(35.62,31.25)
\pscustom[]{\psline{<-}(32.5,0)(13.12,0)
\psbezier(13.12,0)(13.12,0)(13.12,0)
\psbezier{-}(13.12,0)(13.12,0)(13.12,0)
}
\pscustom[]{\psline{<-}(35,5)(35,25)
\psbezier(35,25)(35,25)(35,25)
\psbezier{-}(35,25)(35,25)(35,25)
}
\pscustom[]{\psline{<-}(5,5)(5,25)
\psbezier(5,25)(5,25)(5,25)
\psbezier{-}(5,25)(5,25)(5,25)
}
\rput(35,0){$\Carrier$}
\rput(5,0){$\Pairs$}
\rput(35,30){$\Carrier$}
\rput[l](35.62,15){$\oone$}
\rput[r](4.38,15){$\ttwo$}
\rput[b](20,31.25){$\uup$}
\rput[t](20,-1.25){$\ddown$}
\rput(5,30){$\Pairs$}
\pscustom[]{\psline{<-}(32.5,30)(13.12,30)
\psbezier(13.12,30)(13.12,30)(13.12,30)
\psbezier{-}(13.12,30)(13.12,30)(13.12,30)
}
\pscustom[]{\psline(10.62,23.76)(10.62,26.26)
\psbezier(10.62,26.26)(10.62,26.26)(10.62,26.26)
\psbezier(10.62,26.26)(10.62,26.26)(10.62,26.26)
}
\pscustom[]{\psline(10.62,23.76)(8.12,23.76)
\psbezier(8.12,23.76)(8.12,23.76)(8.12,23.76)
\psbezier(8.12,23.76)(8.12,23.76)(8.12,23.76)
}
\end{pspicture}

\eeq
is a pullback. Composing the left-hand square of diagram \eqref{eq:pb} with this pullback, and using the commutativity of the monoids, we conclude that all of the following three squares are pullbacks if and only if any of them is a pullback.

\smallskip
\begin{center}
\newcommand{\Unit}{\scriptstyle I}
\newcommand{\Carrier}{\scriptstyle A}
\newcommand{\Pairs}{\scriptstyle A\otimes A}
\newcommand{\PPairs}{\scriptstyle A\otimes A\otimes A\otimes A}
\newcommand{\oone}{\scriptstyle 1}
\newcommand{\ttwo}{\scriptstyle <\id, \ortt>}
\newcommand{\uup}{\scriptstyle\cun}
\newcommand{\ddown}{\scriptstyle\ovee}
\newcommand{\ooone}{\scriptstyle 0}
\newcommand{\tttwo}{\scriptstyle <\id, \ortt>}
\newcommand{\dddown}{\scriptstyle\owedge}
\newcommand{\oooone}{\scriptstyle <1,0>}
\newcommand{\ttttwo}{\scriptstyle <\pi_0,\ortt,\pi_1, \ortt>}
\newcommand{\ddddown}{\scriptstyle \ovee\otimes \owedge}
\newcommand{\IFF}{\iff}
\def\JPicScale{.6}
\ifx\JPicScale\undefined\def\JPicScale{1}\fi
\psset{unit=\JPicScale mm}
\psset{linewidth=0.3,dotsep=1,hatchwidth=0.3,hatchsep=1.5,shadowsize=1,dimen=middle}
\psset{dotsize=0.7 2.5,dotscale=1 1,fillcolor=black}
\psset{arrowsize=1 2,arrowlength=1,arrowinset=0.25,tbarsize=0.7 5,bracketlength=0.15,rbracketlength=0.15}
\begin{pspicture}(0,0)(220.62,31.25)
\pscustom[]{\psline{<-}(32.5,0)(13.12,0)
\psbezier(13.12,0)(13.12,0)(13.12,0)
\psbezier{-}(13.12,0)(13.12,0)(13.12,0)
}
\pscustom[]{\psline{<-}(35,5)(35,25)
\psbezier(35,25)(35,25)(35,25)
\psbezier{-}(35,25)(35,25)(35,25)
}
\pscustom[]{\psline{<-}(5,5)(5,25)
\psbezier(5,25)(5,25)(5,25)
\psbezier{-}(5,25)(5,25)(5,25)
}
\rput(35,0){$\Carrier$}
\rput(5,0){$\Pairs$}
\rput(5,29.38){$\Carrier$}
\rput[l](35.62,15){$\oone$}
\rput[r](4.38,15){$\ttwo$}
\rput[b](20,31.25){$\uup$}
\rput[t](20,-1.25){$\ddown$}
\pscustom[]{\psline{<-}(117.5,0)(98.12,0)
\psbezier(98.12,0)(98.12,0)(98.12,0)
\psbezier{-}(98.12,0)(98.12,0)(98.12,0)
}
\pscustom[]{\psline{<-}(120,5)(120,25)
\psbezier(120,25)(120,25)(120,25)
\psbezier{-}(120,25)(120,25)(120,25)
}
\pscustom[]{\psline{<-}(90,5)(90,25)
\psbezier(90,25)(90,25)(90,25)
\psbezier{-}(90,25)(90,25)(90,25)
}
\rput(120,0){$\Carrier$}
\rput(90,0){$\Pairs$}
\rput(90,29.38){$\Carrier$}
\rput[b](105,31.25){$\uup$}
\pscustom[]{\psline{<-}(211.88,0)(197.5,0)
\psbezier(197.5,0)(197.5,0)(197.5,0)
\psbezier{-}(197.5,0)(197.5,0)(197.5,0)
}
\pscustom[]{\psline{<-}(220,5)(220,25)
\psbezier(220,25)(220,25)(220,25)
\psbezier{-}(220,25)(220,25)(220,25)
}
\pscustom[]{\psline{<-}(180,5)(180,25)
\psbezier(180,25)(180,25)(180,25)
\psbezier{-}(180,25)(180,25)(180,25)
}
\rput(220.62,0){$\Pairs$}
\rput[b](200.62,31.25){$\uup$}
\pscustom[]{\psline{<-}(217.5,30)(188.12,30)
\psbezier(188.12,30)(188.12,30)(188.12,30)
\psbezier{-}(188.12,30)(188.12,30)(188.12,30)
}
\rput(35,29.38){$\Unit$}
\rput(120,29.38){$\Unit$}
\rput(60.62,15.62){$\IFF$}
\rput(143.75,15.62){$\IFF$}
\pscustom[]{\psline(10.62,24.38)(10.62,26.88)
\psbezier(10.62,26.88)(10.62,26.88)(10.62,26.88)
\psbezier(10.62,26.88)(10.62,26.88)(10.62,26.88)
}
\pscustom[]{\psline(10.62,24.38)(8.12,24.38)
\psbezier(8.12,24.38)(8.12,24.38)(8.12,24.38)
\psbezier(8.12,24.38)(8.12,24.38)(8.12,24.38)
}
\pscustom[]{\psline(95,24.38)(95,26.88)
\psbezier(95,26.88)(95,26.88)(95,26.88)
\psbezier(95,26.88)(95,26.88)(95,26.88)
}
\pscustom[]{\psline(95,24.38)(92.5,24.38)
\psbezier(92.5,24.38)(92.5,24.38)(92.5,24.38)
\psbezier(92.5,24.38)(92.5,24.38)(92.5,24.38)
}
\pscustom[]{\psline(185.62,24.38)(185.62,26.88)
\psbezier(185.62,26.88)(185.62,26.88)(185.62,26.88)
\psbezier(185.62,26.88)(185.62,26.88)(185.62,26.88)
}
\pscustom[]{\psline(185.62,24.38)(183.12,24.38)
\psbezier(183.12,24.38)(183.12,24.38)(183.12,24.38)
\psbezier(183.12,24.38)(183.12,24.38)(183.12,24.38)
}
\pscustom[]{\psline{<-}(33.75,30)(8.75,30)
\psline(8.75,30)(11.25,30)
\psbezier{-}(11.25,30)(11.25,30)(11.25,30)
}
\pscustom[]{\psline{<-}(118.75,30)(93.75,30)
\psline(93.75,30)(96.25,30)
\psbezier{-}(96.25,30)(96.25,30)(96.25,30)
}
\rput(180,30){$\Pairs$}
\rput[l](121.25,15.62){$\ooone$}
\rput[r](89.38,15.62){$\tttwo$}
\rput[t](105.62,-1.25){$\dddown$}
\rput(179.38,0){$\PPairs$}
\rput(220,30){$\Unit$}
\rput[r](179.38,16.25){$\ttttwo$}
\rput[l](220.62,16.25){$\oooone$}
\rput[t](203.75,-1.25){$\ddddown$}
\end{pspicture}

\end{center}
\medskip
Towards the third square, we prove the first equation in the following diagram. 
\beq\label{eq:U-1}
\newcommand{\oovee}{\ovee}
\newcommand{\oowedge}{\owedge}
\newcommand{\scal}{\scriptstyle\omega}
\newcommand{\oone}{\scriptscriptstyle 1}
\newcommand{\EQLS}{=}
\newcommand{\zzero}{\scriptscriptstyle 0}
\def\JPicScale{.5}
\ifx\JPicScale\undefined\def\JPicScale{1}\fi
\psset{unit=\JPicScale mm}
\psset{linewidth=0.3,dotsep=1,hatchwidth=0.3,hatchsep=1.5,shadowsize=1,dimen=middle}
\psset{dotsize=0.7 2.5,dotscale=1 1,fillcolor=black}
\psset{arrowsize=1 2,arrowlength=1,arrowinset=0.25,tbarsize=0.7 5,bracketlength=0.15,rbracketlength=0.15}
\begin{pspicture}(0,0)(226.25,37.5)
\psline[linewidth=0.35,border=1.05](2.5,21.88)
(2.5,9.37)(10,1.87)
\psline[linewidth=0.35](16.25,21.88)
(16.25,9.37)(8.75,1.87)
\rput{0}(9.38,1.88){\psellipse[linewidth=0.35,fillstyle=solid](0,0)(1.48,-1.48)}
\psline[linewidth=0.2](-8.12,18.75)
(-8.12,-57.5)
(75.62,-57.5)
(75.62,18.75)
(-8.12,18.75)
(-8.12,8.75)
(-8.12,3.12)(-8.12,-9.38)
\pscustom[linewidth=0.35]{\psline(21.88,21.86)(-1.88,21.88)
\psline(-1.88,21.88)(6.25,30)
\psbezier(6.25,30)(6.25,30)(6.25,30)
\psline(6.25,30)(13.75,30)
\psline(13.75,30)(21.88,21.86)
\psbezier(21.88,21.86)(21.88,21.86)(21.88,21.86)
\psline(21.88,21.86)(21.87,21.86)
}
\psline[linewidth=0.35](110,-10)
(120,0)
(130,-10)
(120,-20)(110,-10)
\psline[linewidth=0.35](10,30)(10,37.5)
\pscustom[linewidth=0.35,fillcolor=white,fillstyle=solid]{\psbezier(13.75,11.25)(13.75,11.25)(13.75,11.25)(13.75,11.25)
\psline(13.75,11.25)(19.37,11.25)
\psline(19.37,11.25)(19.37,16.87)
\psline(19.37,16.87)(13.75,16.87)
\psline(13.75,16.87)(13.75,11.25)
\psbezier(13.75,11.25)(13.75,11.25)(13.75,11.25)
\psline(13.75,11.25)(13.75,12.5)
}
\rput(16.25,13.75){$\ortt$}
\psline[linewidth=0.35](112.49,10.01)(112.49,36.24)
\rput(90,0){$\EQLS$}
\rput(10,25.62){$\oovee$}
\rput(112.49,7.51){$\oone$}
\pscustom[linewidth=0.35]{\psline(118.13,10.01)(112.49,4.38)
\psline(112.49,4.38)(106.87,10.01)
\psline(106.87,10.01)(118.13,10.01)
\psbezier(118.13,10.01)(118.13,10.01)(118.13,10.01)
}
\psline[linewidth=0.35](50,21.88)
(50,9.38)(57.5,1.88)
\psline[linewidth=0.35](63.75,21.88)
(63.75,9.38)(56.25,1.88)
\rput{0}(56.88,1.88){\psellipse[linewidth=0.35,fillstyle=solid](0,0)(1.48,-1.48)}
\pscustom[linewidth=0.35]{\psline(69.38,21.86)(45.62,21.88)
\psline(45.62,21.88)(53.75,30)
\psbezier(53.75,30)(53.75,30)(53.75,30)
\psline(53.75,30)(61.25,30)
\psline(61.25,30)(69.38,21.86)
\psbezier(69.38,21.86)(69.38,21.86)(69.38,21.86)
\psline(69.38,21.86)(69.37,21.86)
}
\psline[linewidth=0.35](57.5,30)(57.5,37.5)
\pscustom[linewidth=0.35,fillcolor=white,fillstyle=solid]{\psbezier(61.25,11.24)(61.25,11.24)(61.25,11.24)(61.25,11.24)
\psline(61.25,11.24)(66.87,11.24)
\psline(66.87,11.24)(66.87,16.86)
\psline(66.87,16.86)(61.25,16.86)
\psline(61.25,16.86)(61.25,11.24)
\psbezier(61.25,11.24)(61.25,11.24)(61.25,11.24)
\psline(61.25,11.24)(61.25,12.5)
}
\rput(63.75,13.74){$\ortt$}
\psline[linewidth=0.35,border=0.45](15.62,-13.75)
(15.62,-26.26)(39.38,-50)
\pscustom[linewidth=0.35]{\psline(21.26,-13.76)(-2.5,-13.75)
\psline(-2.5,-13.75)(5.62,-5.63)
\psbezier(5.62,-5.63)(5.62,-5.63)(5.62,-5.63)
\psline(5.62,-5.63)(13.12,-5.62)
\psline(13.12,-5.62)(21.26,-13.76)
\psbezier(21.26,-13.76)(21.26,-13.76)(21.26,-13.76)
\psline(21.26,-13.76)(21.24,-13.76)
}
\psline[linewidth=0.35](9.38,-5.62)(9.38,1.88)
\rput(9.38,-10){$\oovee$}
\psline[linewidth=0.35,border=0.7](49.38,-13.76)
(49.38,-26.26)(25.62,-50)
\psline[linewidth=0.35](63.12,-13.76)
(63.12,-26.26)(39.38,-50)
\pscustom[linewidth=0.35]{\psline(68.76,-13.76)(45,-13.76)
\psline(45,-13.76)(53.12,-5.62)
\psbezier(53.12,-5.62)(53.12,-5.62)(53.12,-5.62)
\psline(53.12,-5.62)(60.62,-5.62)
\psline(60.62,-5.62)(68.76,-13.76)
\psbezier(68.76,-13.76)(68.76,-13.76)(68.76,-13.76)
\psline(68.76,-13.76)(68.74,-13.76)
}
\psline[linewidth=0.35](56.88,-5.62)(56.88,1.88)
\psline[linewidth=0.35](127.49,10.01)(127.49,36.24)
\pscustom[linewidth=0.35]{\psline(133.13,10.01)(127.49,4.38)
\psline(127.49,4.38)(121.87,10.01)
\psline(121.87,10.01)(133.13,10.01)
\psbezier(133.13,10.01)(133.13,10.01)(133.13,10.01)
}
\rput(127.5,7.5){$\zzero$}
\rput(57.5,25.62){$\oowedge$}
\rput(56.88,-10){$\oowedge$}
\psline[linewidth=0.35](1.88,-13.75)
(1.88,-26.26)(25.62,-50)
\rput(120,-10){$\scal$}
\psline[linewidth=0.35,border=1.05](172.5,21.88)
(172.5,9.37)(180,1.87)
\psline[linewidth=0.35](186.25,21.88)
(186.25,9.37)(178.75,1.87)
\rput{0}(179.38,1.88){\psellipse[linewidth=0.35,fillstyle=solid](0,0)(1.48,-1.49)}
\psline[linewidth=0.2](163.75,18.75)
(163.75,-46.88)
(226.25,-46.88)
(226.25,18.75)
(166.25,18.75)
(163.75,18.75)
(163.75,7.5)(163.75,-10)
\pscustom[linewidth=0.35]{\psline(191.88,21.86)(168.12,21.88)
\psline(168.12,21.88)(176.25,30)
\psbezier(176.25,30)(176.25,30)(176.25,30)
\psline(176.25,30)(183.75,30)
\psline(183.75,30)(191.88,21.86)
\psbezier(191.88,21.86)(191.88,21.86)(191.88,21.86)
\psline(191.88,21.86)(191.87,21.86)
}
\psline[linewidth=0.35](180,30)(180,37.5)
\pscustom[linewidth=0.35,fillcolor=white,fillstyle=solid]{\psbezier(183.75,11.25)(183.75,11.25)(183.75,11.25)(183.75,11.25)
\psline(183.75,11.25)(189.37,11.25)
\psline(189.37,11.25)(189.37,16.87)
\psline(189.37,16.87)(183.75,16.87)
\psline(183.75,16.87)(183.75,11.25)
\psbezier(183.75,11.25)(183.75,11.25)(183.75,11.25)
\psline(183.75,11.25)(183.75,12.5)
}
\rput(186.25,13.75){$\ortt$}
\rput(180,25.62){$\oovee$}
\psline[linewidth=0.35](203.75,21.88)
(203.75,9.38)(211.25,1.88)
\psline[linewidth=0.35](217.5,21.88)
(217.5,9.38)(210,1.88)
\rput{0}(210.62,1.88){\psellipse[linewidth=0.35,fillstyle=solid](0,0)(1.49,-1.48)}
\pscustom[linewidth=0.35]{\psline(223.13,21.86)(199.37,21.88)
\psline(199.37,21.88)(207.5,30)
\psbezier(207.5,30)(207.5,30)(207.5,30)
\psline(207.5,30)(215,30)
\psline(215,30)(223.13,21.86)
\psbezier(223.13,21.86)(223.13,21.86)(223.13,21.86)
\psline(223.13,21.86)(223.12,21.86)
}
\psline[linewidth=0.35](211.25,30)(211.25,37.5)
\pscustom[linewidth=0.35,fillcolor=white,fillstyle=solid]{\psbezier(215,11.24)(215,11.24)(215,11.24)(215,11.24)
\psline(215,11.24)(220.62,11.24)
\psline(220.62,11.24)(220.62,16.86)
\psline(220.62,16.86)(215,16.86)
\psline(215,16.86)(215,11.24)
\psbezier(215,11.24)(215,11.24)(215,11.24)
\psline(215,11.24)(215,12.5)
}
\rput(217.5,13.74){$\ortt$}
\rput(211.25,25.62){$\oowedge$}
\psline[linewidth=0.35](179.37,-10.62)(179.38,1.88)
\rput(179.37,-13.12){$\oone$}
\pscustom[linewidth=0.35]{\psline(185,-10.62)(179.37,-16.25)
\psline(179.37,-16.25)(173.75,-10.62)
\psline(173.75,-10.62)(185,-10.62)
\psbezier(185,-10.62)(185,-10.62)(185,-10.62)
}
\psline[linewidth=0.35](210.62,-10.62)(210.62,1.88)
\pscustom[linewidth=0.35]{\psline(216.25,-10.62)(210.62,-16.25)
\psline(210.62,-16.25)(205,-10.62)
\psline(205,-10.62)(216.25,-10.62)
\psbezier(216.25,-10.62)(216.25,-10.62)(216.25,-10.62)
}
\rput(210.62,-13.12){$\zzero$}
\psline[linewidth=0.35](184.38,-28.12)
(194.38,-18.12)
(204.38,-28.12)
(194.38,-38.12)(184.38,-28.12)
\rput(194.38,-28.12){$\scal$}
\rput(145,0){$\EQLS$}
\pscustom[linewidth=0.2,linestyle=dashed,dash=1 1]{\psline(170,-4.99)(170,-40)
\psline(170,-40)(220,-40)
\psline(220,-40)(220,-5)
\psline(220,-5)(170,-4.99)
\psline(170,-4.99)(170,-8.11)
\psbezier(170,-8.11)(170,-8.11)(170,-8.11)
\psline(170,-8.11)(170,-6.87)
}
\pscustom[linewidth=0.2,linestyle=dashed,dash=1 1]{\psline(-3.75,-2.49)(-3.75,-52.5)
\psline(-3.75,-52.5)(70.62,-52.5)
\psline(70.62,-52.5)(70.62,-2.5)
\psline(70.62,-2.5)(-3.75,-2.49)
\psline(-3.75,-2.49)(-3.75,-5.61)
\psbezier(-3.75,-5.61)(-3.75,-5.61)(-3.75,-5.61)
\psline(-3.75,-5.61)(-3.75,-4.37)
}
\rput{0}(39.38,-50){\psellipse[linewidth=0.35,fillstyle=solid](0,0)(1.49,-1.48)}
\rput{0}(25.62,-50){\psellipse[linewidth=0.35,fillstyle=solid](0,0)(1.49,-1.48)}
\end{pspicture}

\eeq
\vspace{5.5\baselineskip}

\noindent where
\newcommand{\scal}{\scriptstyle\omega}
\newcommand{\oovee}{\scriptscriptstyle \ovee}
\newcommand{\oowedge}{\scriptscriptstyle \owedge}
\newcommand{\EQLS}{=}
\def\JPicScale{.25}
\ifx\JPicScale\undefined\def\JPicScale{1}\fi
\psset{unit=\JPicScale mm}
\psset{linewidth=0.3,dotsep=1,hatchwidth=0.3,hatchsep=1.5,shadowsize=1,dimen=middle}
\psset{dotsize=0.7 2.5,dotscale=1 1,fillcolor=black}
\psset{arrowsize=1 2,arrowlength=1,arrowinset=0.25,tbarsize=0.7 5,bracketlength=0.15,rbracketlength=0.15}
\begin{pspicture}(0,0)(92.5,38.36)
\rput{90}(50,36.88){\psellipse[linewidth=0.35,fillstyle=solid](0,0)(1.48,-1.48)}
\psline[linewidth=0.35](-5.62,2.5)
(4.38,12.5)
(14.38,2.5)
(4.38,-7.5)(-5.62,2.5)
\rput(26.88,1.88){$\EQLS$}
\rput{0}(80.63,36.88){\psellipse[linewidth=0.35,fillstyle=solid](0,0)(1.49,-1.48)}
\psline[linewidth=0.35,border=0.45](56.24,21.24)
(56.24,8.74)(71.88,-6.88)
\pscustom[linewidth=0.35]{\psline(61.88,21.24)(38.12,21.24)
\psline(38.12,21.24)(46.24,29.36)
\psbezier(46.24,29.36)(46.24,29.36)(46.24,29.36)
\psline(46.24,29.36)(53.74,29.39)
\psline(53.74,29.39)(61.88,21.24)
\psbezier(61.88,21.24)(61.88,21.24)(61.88,21.24)
\psline(61.88,21.24)(61.86,21.24)
}
\psline[linewidth=0.35](50,29.39)(50,36.89)
\rput(50,25.01){$\oovee$}
\psline[linewidth=0.35,border=0.7](73.14,21.24)
(73.14,8.74)(57.5,-6.88)
\psline[linewidth=0.35](86.88,21.24)
(86.88,8.74)(71.26,-6.88)
\pscustom[linewidth=0.35]{\psline(92.5,21.24)(68.76,21.24)
\psline(68.76,21.24)(76.88,29.39)
\psbezier(76.88,29.39)(76.88,29.39)(76.88,29.39)
\psline(76.88,29.39)(84.38,29.39)
\psline(84.38,29.39)(92.5,21.24)
\psbezier(92.5,21.24)(92.5,21.24)(92.5,21.24)
\psbezier(92.5,21.24)(92.5,21.24)(92.5,21.24)
}
\psline[linewidth=0.35](80.64,29.39)(80.64,36.89)
\rput(80.64,25.01){$\oowedge$}
\psline[linewidth=0.35](42.5,21.24)
(42.5,8.74)(58.12,-6.88)
\rput(4.38,2.5){$\scal$}
\rput{0}(71.25,-6.25){\psellipse[linewidth=0.35,fillstyle=solid](0,0)(1.49,-1.48)}
\rput{0}(57.5,-6.25){\psellipse[linewidth=0.35,fillstyle=solid](0,0)(1.49,-1.48)}
\end{pspicture}

\renewcommand{\oovee}{\ovee}
\renewcommand{\oowedge}{\owedge}
\hspace{-.75em}. But since the second equation in that diagram also holds, the uniqueness part of the pullback condition implies that the factorizations in the dashed rectangles must be equal, i.e.
\beq\label{eq:U-4}
\newcommand{\oone}{\scriptscriptstyle 1}
\newcommand{\zzero}{\scriptscriptstyle 0}
\def\JPicScale{.5}
\ifx\JPicScale\undefined\def\JPicScale{1}\fi
\psset{unit=\JPicScale mm}
\psset{linewidth=0.3,dotsep=1,hatchwidth=0.3,hatchsep=1.5,shadowsize=1,dimen=middle}
\psset{dotsize=0.7 2.5,dotscale=1 1,fillcolor=black}
\psset{arrowsize=1 2,arrowlength=1,arrowinset=0.25,tbarsize=0.7 5,bracketlength=0.15,rbracketlength=0.15}
\begin{pspicture}(0,0)(133.13,23.76)
\psline[linewidth=0.35](112.49,10.01)(112.5,23.12)
\rput(90,0){$\EQLS$}
\rput(112.49,7.51){$\oone$}
\pscustom[linewidth=0.35]{\psline(118.13,10.01)(112.49,4.38)
\psline(112.49,4.38)(106.87,10.01)
\psline(106.87,10.01)(118.13,10.01)
\psbezier(118.13,10.01)(118.13,10.01)(118.13,10.01)
}
\psline[linewidth=0.35,border=0.45](16.24,8.12)
(16.24,-4.39)(40,-28.12)
\pscustom[linewidth=0.35]{\psline(21.88,8.12)(-1.88,8.12)
\psline(-1.88,8.12)(6.24,16.24)
\psbezier(6.24,16.24)(6.24,16.24)(6.24,16.24)
\psline(6.24,16.24)(13.74,16.26)
\psline(13.74,16.26)(21.88,8.12)
\psbezier(21.88,8.12)(21.88,8.12)(21.88,8.12)
\psline(21.88,8.12)(21.86,8.12)
}
\psline[linewidth=0.35](10,16.26)(10,23.76)
\rput(10,11.88){$\oovee$}
\psline[linewidth=0.35,border=0.7](50,8.12)
(50,-4.39)(26.24,-28.12)
\psline[linewidth=0.35](63.74,8.12)
(63.74,-4.39)(40,-28.12)
\pscustom[linewidth=0.35]{\psline(69.38,8.12)(45.62,8.12)
\psline(45.62,8.12)(53.74,16.26)
\psbezier(53.74,16.26)(53.74,16.26)(53.74,16.26)
\psline(53.74,16.26)(61.24,16.26)
\psline(61.24,16.26)(69.38,8.12)
\psbezier(69.38,8.12)(69.38,8.12)(69.38,8.12)
\psline(69.38,8.12)(69.36,8.12)
}
\psline[linewidth=0.35](57.5,16.26)(57.5,23.76)
\psline[linewidth=0.35](127.49,10.01)(127.5,23.12)
\pscustom[linewidth=0.35]{\psline(133.13,10.01)(127.49,4.38)
\psline(127.49,4.38)(121.87,10.01)
\psline(121.87,10.01)(133.13,10.01)
\psbezier(133.13,10.01)(133.13,10.01)(133.13,10.01)
}
\rput(127.5,7.5){$\zzero$}
\rput(57.5,11.88){$\oowedge$}
\psline[linewidth=0.35](2.5,8.12)
(2.5,-4.39)(26.25,-28.12)
\rput(120,-10){$\scal$}
\psline[linewidth=0.35](110,-10)
(120,0)
(130,-10)
(120,-20)(110,-10)
\rput{0}(40,-28.12){\psellipse[linewidth=0.35,fillstyle=solid](0,0)(1.48,-1.48)}
\rput{0}(26.25,-28.12){\psellipse[linewidth=0.35,fillstyle=solid](0,0)(1.48,-1.48)}
\end{pspicture}

\eeq
\vspace{2\baselineskip}

\noindent  Dualizing both sides yields the antispecialty:
\beq\label{eq:U-5}
\newcommand{\oone}{\scriptscriptstyle 1}
\newcommand{\zzero}{\scriptscriptstyle 0}
\def\JPicScale{.5}
\ifx\JPicScale\undefined\def\JPicScale{1}\fi
\psset{unit=\JPicScale mm}
\psset{linewidth=0.3,dotsep=1,hatchwidth=0.3,hatchsep=1.5,shadowsize=1,dimen=middle}
\psset{dotsize=0.7 2.5,dotscale=1 1,fillcolor=black}
\psset{arrowsize=1 2,arrowlength=1,arrowinset=0.25,tbarsize=0.7 5,bracketlength=0.15,rbracketlength=0.15}
\begin{pspicture}(0,0)(65,25)
\psline[linewidth=0.35](55,18.12)(55,25)
\rput(34.38,0){$\EQLS$}
\rput(54.99,15.64){$\oone$}
\pscustom[linewidth=0.35]{\psline(60.63,18.13)(54.99,12.5)
\psline(54.99,12.5)(49.37,18.13)
\psline(49.37,18.13)(60.63,18.13)
\psbezier(60.63,18.13)(60.63,18.13)(60.63,18.13)
}
\psline[linewidth=0.35,border=0.45](16.24,8.12)
(16.24,-4.39)(16.25,-4.38)
\pscustom[linewidth=0.35]{\psline(21.88,8.12)(-1.88,8.12)
\psline(-1.88,8.12)(6.24,16.24)
\psbezier(6.24,16.24)(6.24,16.24)(6.24,16.24)
\psline(6.24,16.24)(13.74,16.26)
\psline(13.74,16.26)(21.88,8.12)
\psbezier(21.88,8.12)(21.88,8.12)(21.88,8.12)
\psline(21.88,8.12)(21.86,8.12)
}
\psline[linewidth=0.35](10,16.26)(10,25)
\rput(10,11.88){$\oovee$}
\pscustom[linewidth=0.35]{\psline(-1.89,-5)(21.88,-5)
\psline(21.88,-5)(13.76,-13.14)
\psbezier(13.76,-13.14)(13.76,-13.14)(13.76,-13.14)
\psline(13.76,-13.14)(6.26,-13.14)
\psline(6.26,-13.14)(-1.89,-5)
\psbezier(-1.89,-5)(-1.89,-5)(-1.89,-5)
\psline(-1.89,-5)(-1.86,-5)
}
\psline[linewidth=0.35](10,-24.38)(10,-13.12)
\psline[linewidth=0.35](55,-18.12)(55,-25)
\pscustom[linewidth=0.35]{\psline(49.37,-18.13)(55.01,-12.51)
\psline(55.01,-12.51)(60.63,-18.13)
\psline(60.63,-18.13)(49.37,-18.13)
\psbezier(49.37,-18.13)(49.37,-18.13)(49.37,-18.13)
}
\rput(55,-15.62){$\zzero$}
\rput(10,-9.38){$\oowedge$}
\psline[linewidth=0.35](2.5,8.12)
(2.5,-4.39)(2.5,-4.38)
\psline[linewidth=0.35](45,0)
(55,10)
(65,0)
(55,-10)(45,0)
\rput(55,0){$\scal$}
\end{pspicture}

\eeq
\vspace{1\baselineskip}

\noindent 
To complete the proof, we proceed to transform the left-hand side of \eqref{eq:U-1}. 
Since $\ovee$ and $\owedge$ are single-valued, Prop.\ref{prop:maps-char}ii. says that we can we can distribute each of them above the black dots on the left-hand side of \eqref{eq:U-1}. Applying the associativity, the left-hand side of \eqref{eq:U-1} is transformed into the left-hand side of the following equation.

\beq\label{eq:U-6}
\newcommand{\oone}{\scriptscriptstyle 1}
\newcommand{\zzero}{\scriptscriptstyle 0}
\def\JPicScale{.5}
\ifx\JPicScale\undefined\def\JPicScale{1}\fi
\psset{unit=\JPicScale mm}
\psset{linewidth=0.3,dotsep=1,hatchwidth=0.3,hatchsep=1.5,shadowsize=1,dimen=middle}
\psset{dotsize=0.7 2.5,dotscale=1 1,fillcolor=black}
\psset{arrowsize=1 2,arrowlength=1,arrowinset=0.25,tbarsize=0.7 5,bracketlength=0.15,rbracketlength=0.15}
\begin{pspicture}(0,0)(187.5,55.62)
\psline[linewidth=0.35](25,21.88)
(25,9.37)(25,9.38)
\pscustom[linewidth=0.2]{\psline(-0.62,35)(-0.62,-50.62)
\psline(-0.62,-50.62)(86.25,-50.62)
\psline(86.25,-50.62)(86.25,35)
\psline(86.25,35)(-0.62,35)
\psline(-0.62,35)(-0.62,31.88)
\psbezier(-0.62,31.88)(-0.62,31.88)(-0.62,31.88)
\psline(-0.62,31.88)(-0.62,33.12)
}
\pscustom[linewidth=0.35]{\psline(30.63,21.86)(6.87,21.88)
\psline(6.87,21.88)(15,30)
\psbezier(15,30)(15,30)(15,30)
\psline(15,30)(22.5,30)
\psline(22.5,30)(30.63,21.86)
\psbezier(30.63,21.86)(30.63,21.86)(30.63,21.86)
\psline(30.63,21.86)(30.62,21.86)
}
\psline[linewidth=0.35](18.75,30)(18.75,37.5)
\pscustom[linewidth=0.35,fillcolor=white,fillstyle=solid]{\psbezier(22.5,11.25)(22.5,11.25)(22.5,11.25)(22.5,11.25)
\psline(22.5,11.25)(28.12,11.25)
\psline(28.12,11.25)(28.12,16.87)
\psline(28.12,16.87)(22.5,16.87)
\psline(22.5,16.87)(22.5,11.25)
\psbezier(22.5,11.25)(22.5,11.25)(22.5,11.25)
\psline(22.5,11.25)(22.5,12.5)
}
\rput(25,13.75){$\ortt$}
\rput(98.75,0){$\EQLS$}
\rput(18.75,25.62){$\oovee$}
\psline[linewidth=0.35](58.75,21.88)
(58.75,9.38)(58.75,-40.62)
\pscustom[linewidth=0.35]{\psline(72.5,21.88)(72.5,9.38)
\psbezier(72.5,9.38)(72.5,9.38)(72.5,9.38)
}
\pscustom[linewidth=0.35]{\psline(78.13,21.86)(54.37,21.88)
\psline(54.37,21.88)(62.5,30)
\psbezier(62.5,30)(62.5,30)(62.5,30)
\psline(62.5,30)(70,30)
\psline(70,30)(78.13,21.86)
\psbezier(78.13,21.86)(78.13,21.86)(78.13,21.86)
\psline(78.13,21.86)(78.12,21.86)
}
\psline[linewidth=0.35](66.25,30)(66.25,37.5)
\pscustom[linewidth=0.35,fillcolor=white,fillstyle=solid]{\psbezier(70,11.24)(70,11.24)(70,11.24)(70,11.24)
\psline(70,11.24)(75.62,11.24)
\psline(75.62,11.24)(75.62,16.86)
\psline(75.62,16.86)(70,16.86)
\psline(70,16.86)(70,11.24)
\psbezier(70,11.24)(70,11.24)(70,11.24)
\psline(70,11.24)(70,12.5)
}
\rput(72.5,13.74){$\ortt$}
\psline[linewidth=0.35,border=0.45](11.88,21.88)
(11.88,-6.25)(52.5,-46.88)
\pscustom[linewidth=0.35]{\psline(36.89,-1.88)(13.12,-1.88)
\psline(13.12,-1.88)(21.24,6.24)
\psbezier(21.24,6.24)(21.24,6.24)(21.24,6.24)
\psline(21.24,6.24)(28.74,6.26)
\psline(28.74,6.26)(36.89,-1.88)
\psbezier(36.89,-1.88)(36.89,-1.88)(36.89,-1.88)
\psline(36.89,-1.88)(36.86,-1.88)
}
\psline[linewidth=0.35](25.01,6.26)(25,11.25)
\rput(25.01,1.88){$\oovee$}
\psline[linewidth=0.35](80,-2.5)
(80,-19.38)(52.51,-46.86)
\pscustom[linewidth=0.35]{\psline(84.38,-1.88)(60.62,-1.88)
\psline(60.62,-1.88)(68.74,6.26)
\psbezier(68.74,6.26)(68.74,6.26)(68.74,6.26)
\psline(68.74,6.26)(76.24,6.26)
\psline(76.24,6.26)(84.38,-1.88)
\psbezier(84.38,-1.88)(84.38,-1.88)(84.38,-1.88)
\psline(84.38,-1.88)(84.36,-1.88)
}
\psline[linewidth=0.35](72.5,6.26)(72.5,10)
\rput(66.25,25.62){$\oowedge$}
\rput(72.5,1.88){$\oowedge$}
\pscustom[linewidth=0.35]{\psline(18.75,40)(18.75,30)
\psbezier(18.75,30)(18.75,30)(18.75,30)
}
\pscustom[linewidth=0.35]{\psline(24.38,39.98)(0.62,40)
\psline(0.62,40)(8.75,48.12)
\psbezier(8.75,48.12)(8.75,48.12)(8.75,48.12)
\psline(8.75,48.12)(16.25,48.12)
\psline(16.25,48.12)(24.38,39.98)
\psbezier(24.38,39.98)(24.38,39.98)(24.38,39.98)
\psline(24.38,39.98)(24.37,39.98)
}
\psline[linewidth=0.35](12.5,48.12)(12.5,55.62)
\rput(12.5,43.75){$\oovee$}
\pscustom[linewidth=0.35]{\psline(66.25,40)(66.25,30)
\psbezier(66.25,30)(66.25,30)(66.25,30)
}
\pscustom[linewidth=0.35]{\psline(71.88,39.98)(48.12,40)
\psline(48.12,40)(56.25,48.12)
\psbezier(56.25,48.12)(56.25,48.12)(56.25,48.12)
\psline(56.25,48.12)(63.75,48.12)
\psline(63.75,48.12)(71.88,39.98)
\psbezier(71.88,39.98)(71.88,39.98)(71.88,39.98)
\psline(71.88,39.98)(71.87,39.98)
}
\psline[linewidth=0.35](60,48.12)(60,55.62)
\rput(60,43.75){$\oowedge$}
\psline[linewidth=0.35](31.25,-1.88)
(31.25,-14.38)(31.25,-26.25)
\psline[linewidth=0.35](4.38,39.38)
(4.38,-23.76)(28.12,-47.5)
\psline[linewidth=0.35,border=0.7](64.38,-2.5)
(64.38,-11.88)(28.75,-47.5)
\psline[linewidth=0.35](51.88,39.38)
(51.88,26.88)(51.88,-23.12)
\psline[linewidth=0.35,border=1.05](18.75,-2.5)
(18.75,-15.62)(18.75,-36.88)
\pscustom[linewidth=0.35]{\psline(135.63,39.98)(111.88,40)
\psline(111.88,40)(120,48.12)
\psbezier(120,48.12)(120,48.12)(120,48.12)
\psline(120,48.12)(127.5,48.12)
\psline(127.5,48.12)(135.63,39.98)
\psbezier(135.63,39.98)(135.63,39.98)(135.63,39.98)
\psline(135.63,39.98)(135.62,39.98)
}
\psline[linewidth=0.35](123.75,48.12)(123.75,55.62)
\rput(123.75,43.75){$\oovee$}
\pscustom[linewidth=0.35]{\psline(183.13,39.98)(159.37,40)
\psline(159.37,40)(167.5,48.12)
\psbezier(167.5,48.12)(167.5,48.12)(167.5,48.12)
\psline(167.5,48.12)(175,48.12)
\psline(175,48.12)(183.13,39.98)
\psbezier(183.13,39.98)(183.13,39.98)(183.13,39.98)
\psline(183.13,39.98)(183.12,39.98)
}
\psline[linewidth=0.35](171.25,48.12)(171.25,55.62)
\rput(171.25,43.75){$\oowedge$}
\psline[linewidth=0.35](115.63,39.38)
(115.62,-10)(146.25,-40.62)
\psline[linewidth=0.35](177.5,40)
(177.5,-10)(146.25,-41.25)
\psline[linewidth=0.35](164.38,39.38)
(164.38,26.88)(164.38,-23.12)
\psline[linewidth=0.35](130,40)
(130,-15.62)(130,-24.38)
\pscustom[linewidth=0.35,fillcolor=white,fillstyle=solid]{\psbezier(127.5,11.88)(127.5,11.88)(127.5,11.88)(127.5,11.88)
\psline(127.5,11.88)(133.12,11.88)
\psline(133.12,11.88)(133.12,17.5)
\psline(133.12,17.5)(127.5,17.5)
\psline(127.5,17.5)(127.5,11.88)
\psbezier(127.5,11.88)(127.5,11.88)(127.5,11.88)
\psline(127.5,11.88)(127.5,13.12)
}
\rput(130,14.38){$\ortt$}
\pscustom[linewidth=0.35,fillcolor=white,fillstyle=solid]{\psbezier(175,11.87)(175,11.87)(175,11.87)(175,11.87)
\psline(175,11.87)(180.62,11.87)
\psline(180.62,11.87)(180.62,17.48)
\psline(180.62,17.48)(175,17.48)
\psline(175,17.48)(175,11.87)
\psbezier(175,11.87)(175,11.87)(175,11.87)
\psline(175,11.87)(175,13.12)
}
\rput(177.5,14.37){$\ortt$}
\pscustom[linewidth=0.2]{\psline(110,34.38)(110,-50.62)
\psline(110,-50.62)(187.5,-50.62)
\psline(187.5,-50.62)(187.5,35)
\psline(187.5,35)(110,35)
\psline(110,35)(110,31.88)
\psbezier(110,31.88)(110,31.88)(110,31.88)
\psline(110,31.88)(110,33.12)
}
\rput{0}(31.25,-25.39){\psellipse[linewidth=0.35,fillstyle=solid](0,0)(1.49,-1.49)}
\rput{0}(51.88,-24.38){\psellipse[linewidth=0.35,fillstyle=solid](0,0)(1.48,-1.48)}
\rput{0}(18.75,-38.12){\psellipse[linewidth=0.35,fillstyle=solid](0,0)(1.49,-1.49)}
\rput{0}(58.75,-40){\psellipse[linewidth=0.35,fillstyle=solid](0,0)(1.49,-1.48)}
\rput{90}(130,-24.38){\psellipse[linewidth=0.35,fillstyle=solid](0,0)(1.49,-1.48)}
\rput{90}(164.38,-23.75){\psellipse[linewidth=0.35,fillstyle=solid](0,0)(1.49,-1.49)}
\rput{0}(28.75,-47.5){\psellipse[linewidth=0.35,fillstyle=solid](0,0)(1.49,-1.49)}
\rput{0}(52.5,-47.5){\psellipse[linewidth=0.35,fillstyle=solid](0,0)(1.49,-1.49)}
\rput{0}(146.88,-41.25){\psellipse[linewidth=0.35,fillstyle=solid](0,0)(1.49,-1.49)}
\end{pspicture}

\eeq
\vspace{4\baselineskip}

\noindent The right-hand side is a path around the third pullback in \eqref{eq:pb-1}. Factoring the left-hand side through the pullback, postcomposing one of the branches with $\ortt$, and reducing $\owedge$ to $\ovee$ precomposed and postcomposed with $\ortt$s, we get
\beq\label{eq:U-7}
\newcommand{\oone}{\scriptscriptstyle 1}
\newcommand{\zzero}{\scriptscriptstyle 0}
\def\JPicScale{.5}
\ifx\JPicScale\undefined\def\JPicScale{1}\fi
\psset{unit=\JPicScale mm}
\psset{linewidth=0.3,dotsep=1,hatchwidth=0.3,hatchsep=1.5,shadowsize=1,dimen=middle}
\psset{dotsize=0.7 2.5,dotscale=1 1,fillcolor=black}
\psset{arrowsize=1 2,arrowlength=1,arrowinset=0.25,tbarsize=0.7 5,bracketlength=0.15,rbracketlength=0.15}
\begin{pspicture}(0,0)(291.25,56.25)
\psline[linewidth=0.35](25,21.88)
(25,9.37)(25,9.38)
\pscustom[linewidth=0.2]{\psline(-0.62,35)(-0.62,-50.62)
\psline(-0.62,-50.62)(86.25,-50.62)
\psline(86.25,-50.62)(86.25,35)
\psline(86.25,35)(-0.62,35)
\psline(-0.62,35)(-0.62,31.88)
\psbezier(-0.62,31.88)(-0.62,31.88)(-0.62,31.88)
\psline(-0.62,31.88)(-0.62,33.12)
}
\pscustom[linewidth=0.35]{\psline(30.63,21.86)(6.87,21.88)
\psline(6.87,21.88)(15,30)
\psbezier(15,30)(15,30)(15,30)
\psline(15,30)(22.5,30)
\psline(22.5,30)(30.63,21.86)
\psbezier(30.63,21.86)(30.63,21.86)(30.63,21.86)
\psline(30.63,21.86)(30.62,21.86)
}
\psline[linewidth=0.35](18.75,30)(18.75,37.5)
\pscustom[linewidth=0.35,fillcolor=white,fillstyle=solid]{\psbezier(22.5,11.25)(22.5,11.25)(22.5,11.25)(22.5,11.25)
\psline(22.5,11.25)(28.12,11.25)
\psline(28.12,11.25)(28.12,16.87)
\psline(28.12,16.87)(22.5,16.87)
\psline(22.5,16.87)(22.5,11.25)
\psbezier(22.5,11.25)(22.5,11.25)(22.5,11.25)
\psline(22.5,11.25)(22.5,12.5)
}
\rput(25,13.75){$\ortt$}
\rput(98.75,0){$\EQLS$}
\rput(18.75,25.62){$\oovee$}
\psline[linewidth=0.35](58.75,21.88)
(58.75,9.38)(58.75,-40.62)
\pscustom[linewidth=0.35]{\psline(72.5,21.88)(72.5,9.38)
\psbezier(72.5,9.38)(72.5,9.38)(72.5,9.38)
}
\pscustom[linewidth=0.35]{\psline(78.13,21.86)(54.37,21.88)
\psline(54.37,21.88)(62.5,30)
\psbezier(62.5,30)(62.5,30)(62.5,30)
\psline(62.5,30)(70,30)
\psline(70,30)(78.13,21.86)
\psbezier(78.13,21.86)(78.13,21.86)(78.13,21.86)
\psline(78.13,21.86)(78.12,21.86)
}
\psline[linewidth=0.35](66.25,30)(66.25,37.5)
\pscustom[linewidth=0.35,fillcolor=white,fillstyle=solid]{\psbezier(56.25,11.24)(56.25,11.24)(56.25,11.24)(56.25,11.24)
\psline(56.25,11.24)(61.87,11.24)
\psline(61.87,11.24)(61.87,16.86)
\psline(61.87,16.86)(56.25,16.86)
\psline(56.25,16.86)(56.25,11.24)
\psbezier(56.25,11.24)(56.25,11.24)(56.25,11.24)
\psline(56.25,11.24)(56.25,12.5)
}
\rput(58.75,13.74){$\ortt$}
\psline[linewidth=0.35,border=0.45](11.88,21.88)
(11.88,-6.25)(52.5,-46.88)
\pscustom[linewidth=0.35]{\psline(36.89,-1.88)(13.12,-1.88)
\psline(13.12,-1.88)(21.24,6.24)
\psbezier(21.24,6.24)(21.24,6.24)(21.24,6.24)
\psline(21.24,6.24)(28.74,6.26)
\psline(28.74,6.26)(36.89,-1.88)
\psbezier(36.89,-1.88)(36.89,-1.88)(36.89,-1.88)
\psline(36.89,-1.88)(36.86,-1.88)
}
\psline[linewidth=0.35](25.01,6.26)(25,11.25)
\rput(25.01,1.88){$\oovee$}
\psline[linewidth=0.35](80,-2.5)
(80,-19.38)(52.51,-46.86)
\pscustom[linewidth=0.35]{\psline(84.38,-1.88)(60.62,-1.88)
\psline(60.62,-1.88)(68.74,6.26)
\psbezier(68.74,6.26)(68.74,6.26)(68.74,6.26)
\psline(68.74,6.26)(76.24,6.26)
\psline(76.24,6.26)(84.38,-1.88)
\psbezier(84.38,-1.88)(84.38,-1.88)(84.38,-1.88)
\psline(84.38,-1.88)(84.36,-1.88)
}
\psline[linewidth=0.35](72.5,6.26)(72.5,10)
\rput(72.5,1.88){$\oowedge$}
\pscustom[linewidth=0.35]{\psline(18.75,40)(18.75,30)
\psbezier(18.75,30)(18.75,30)(18.75,30)
}
\pscustom[linewidth=0.35]{\psline(71.25,40)(0.62,40)
\psline(0.62,40)(8.75,48.12)
\psbezier(8.75,48.12)(8.75,48.12)(8.75,48.12)
\psline(8.75,48.12)(63.75,48.12)
\psline(63.75,48.12)(71.88,40)
\psline(71.88,40)(24.38,39.98)
\psline(24.38,39.98)(24.37,39.98)
}
\psline[linewidth=0.35](38.12,48.12)(38.12,55.62)
\rput(36.88,43.75){$\oovee$}
\pscustom[linewidth=0.35]{\psline(66.25,40)(66.25,30)
\psbezier(66.25,30)(66.25,30)(66.25,30)
}
\psline[linewidth=0.35](31.25,-1.88)
(31.25,-14.38)(31.25,-26.25)
\psline[linewidth=0.35](4.38,39.38)
(4.38,-23.76)(28.12,-47.5)
\psline[linewidth=0.35,border=0.7](64.38,-2.5)
(64.38,-11.88)(28.75,-47.5)
\psline[linewidth=0.35](51.88,40)
(51.88,26.88)(51.88,-23.12)
\psline[linewidth=0.35,border=1.05](18.75,-2.5)
(18.75,-15.62)(18.75,-36.88)
\rput(65.62,25.62){$\oovee$}
\rput{0}(31.25,-25.39){\psellipse[linewidth=0.35,fillstyle=solid](0,0)(1.49,-1.49)}
\rput{0}(51.88,-24.38){\psellipse[linewidth=0.35,fillstyle=solid](0,0)(1.48,-1.48)}
\rput{0}(18.75,-38.12){\psellipse[linewidth=0.35,fillstyle=solid](0,0)(1.49,-1.49)}
\rput{0}(58.75,-40){\psellipse[linewidth=0.35,fillstyle=solid](0,0)(1.49,-1.48)}
\psline[linewidth=0.35](159.38,22.5)
(158.75,6.25)(158.75,9.38)
\pscustom[linewidth=0.2]{\psline(113.12,35)(113.12,-50.62)
\psline(113.12,-50.62)(200,-50.62)
\psline(200,-50.62)(200,35)
\psline(200,35)(113.12,35)
\psline(113.12,35)(113.12,31.88)
\psbezier(113.12,31.88)(113.12,31.88)(113.12,31.88)
\psline(113.12,31.88)(113.12,33.12)
}
\pscustom[linewidth=0.35,fillcolor=white,fillstyle=solid]{\psbezier(156.26,11.88)(156.26,11.88)(156.26,11.88)(156.26,11.88)
\psline(156.26,11.88)(161.88,11.88)
\psline(161.88,11.88)(161.88,17.5)
\psline(161.88,17.5)(156.26,17.5)
\psline(156.26,17.5)(156.26,11.88)
\psbezier(156.26,11.88)(156.26,11.88)(156.26,11.88)
\psline(156.26,11.88)(156.26,13.12)
}
\rput(159.38,14.38){$\ortt$}
\psline[linewidth=0.35](136.26,21.88)
(136.25,-11.88)(129.38,-18.75)
\pscustom[linewidth=0.35]{\psline(186.25,22.5)(186.25,9.38)
\psbezier(186.25,9.38)(186.25,9.38)(186.25,9.38)
}
\pscustom[linewidth=0.35,fillcolor=white,fillstyle=solid]{\psbezier(133.76,11.88)(133.76,11.88)(133.76,11.88)(133.76,11.88)
\psline(133.76,11.88)(139.38,11.88)
\psline(139.38,11.88)(139.38,17.5)
\psline(139.38,17.5)(133.76,17.5)
\psline(133.76,17.5)(133.76,11.88)
\psbezier(133.76,11.88)(133.76,11.88)(133.76,11.88)
\psline(133.76,11.88)(133.76,13.14)
}
\rput(136.87,14.37){$\ortt$}
\psline[linewidth=0.35,border=0.45](166.25,-1.88)
(166.25,-11.25)(180.62,-25.62)
\pscustom[linewidth=0.35]{\psline(170.64,-1.88)(146.88,-1.88)
\psline(146.88,-1.88)(155,6.25)
\psbezier(155,6.25)(155,6.25)(155,6.25)
\psline(155,6.25)(162.5,6.25)
\psline(162.5,6.25)(170.64,-1.88)
\psbezier(170.64,-1.88)(170.64,-1.88)(170.64,-1.88)
\psline(170.64,-1.88)(170.62,-1.88)
}
\rput(158.76,1.88){$\oovee$}
\psline[linewidth=0.35](192.5,-2.5)
(192.5,-12.5)(179.38,-25.62)
\pscustom[linewidth=0.35]{\psline(198.14,-1.88)(174.38,-1.88)
\psline(174.38,-1.88)(182.5,6.25)
\psbezier(182.5,6.25)(182.5,6.25)(182.5,6.25)
\psline(182.5,6.25)(190,6.25)
\psline(190,6.25)(198.14,-1.88)
\psbezier(198.14,-1.88)(198.14,-1.88)(198.14,-1.88)
\psline(198.14,-1.88)(198.12,-1.88)
}
\psline[linewidth=0.35](186.26,6.25)(186.25,10)
\rput(186.26,1.88){$\oowedge$}
\pscustom[linewidth=0.35]{\psline(190.62,40)(114.38,40)
\psline(114.38,40)(122.5,48.12)
\psbezier(122.5,48.12)(122.5,48.12)(122.5,48.12)
\psline(122.5,48.12)(183.12,48.12)
\psline(183.12,48.12)(191.25,40)
\psline(191.25,40)(138.13,39.98)
\psline(138.13,39.98)(138.12,39.98)
}
\psline[linewidth=0.35](151.88,48.12)(151.88,55.62)
\rput(150.62,43.75){$\oovee$}
\psline[linewidth=0.35,border=0.7](178.76,-2.5)
(178.76,-11.88)(150.62,-40)
\psline[linewidth=0.35,border=1.05](150.62,-2.5)
(150.62,-11.88)(165.62,-26.88)
\rput{0}(165,-25.62){\psellipse[linewidth=0.35,fillstyle=solid](0,0)(1.49,-1.48)}
\rput{90}(130,-18.13){\psellipse[linewidth=0.35,fillstyle=solid](0,0)(1.49,-1.48)}
\psline[linewidth=0.35](123.12,21.88)
(123.12,-11.88)(151.25,-40)
\pscustom[linewidth=0.35]{\psline(141.26,22.48)(117.5,22.5)
\psline(117.5,22.5)(125.62,30.62)
\psbezier(125.62,30.62)(125.62,30.62)(125.62,30.62)
\psline(125.62,30.62)(133.12,30.62)
\psline(133.12,30.62)(141.26,22.48)
\psbezier(141.26,22.48)(141.26,22.48)(141.26,22.48)
\psline(141.26,22.48)(141.24,22.48)
}
\psline[linewidth=0.35](129.38,30.62)(129.38,38.12)
\rput(129.38,26.25){$\oovee$}
\pscustom[linewidth=0.35]{\psline(188.76,22.48)(155.62,22.5)
\psline(155.62,22.5)(163.75,30.62)
\psline(163.75,30.62)(173.12,30.62)
\psline(173.12,30.62)(180.62,30.62)
\psline(180.62,30.62)(188.76,22.48)
\psbezier(188.76,22.48)(188.76,22.48)(188.76,22.48)
\psline(188.76,22.48)(188.74,22.48)
}
\pscustom[linewidth=0.35]{\psline(129.38,39.38)(129.38,30.62)
\psbezier(129.38,30.62)(129.38,30.62)(129.38,30.62)
}
\pscustom[linewidth=0.35]{\psline(172.5,39.38)(172.5,30.62)
\psbezier(172.5,30.62)(172.5,30.62)(172.5,30.62)
}
\rput(173.12,26.25){$\oovee$}
\rput(210.62,0){$\EQLS$}
\pscustom[linewidth=0.2]{\psline(225,35)(225,-50.62)
\psline(225,-50.62)(291.25,-50.62)
\psline(291.25,-50.62)(291.25,35)
\psline(291.25,35)(225,35)
\psline(225,35)(225,31.88)
\psbezier(225,31.88)(225,31.88)(225,31.88)
\psline(225,31.88)(225,33.12)
}
\pscustom[linewidth=0.35]{\psline(289.38,40)(226.26,40)
\psline(226.26,40)(234.38,48.12)
\psbezier(234.38,48.12)(234.38,48.12)(234.38,48.12)
\psline(234.38,48.12)(281.25,48.12)
\psline(281.25,48.12)(289.38,40)
\psline(289.38,40)(250,39.98)
\psbezier(250,39.98)(250,39.98)(250,39.98)
}
\psline[linewidth=0.35](255.62,48.75)(255.62,56.25)
\rput(256.25,43.75){$\oovee$}
\psline[linewidth=0.35](241.25,30.62)(241.25,38.12)
\psline[linewidth=0.35](241.25,39.38)
(241.25,3.12)(241.25,30.62)
\psline[linewidth=0.35](273.12,39.38)
(273.12,3.12)(273.12,30)
\rput(241.86,0.63){$\oone$}
\pscustom[linewidth=0.35]{\psline(247.5,3.13)(241.86,-2.5)
\psline(241.86,-2.5)(236.24,3.13)
\psline(236.24,3.13)(247.5,3.13)
\psbezier(247.5,3.13)(247.5,3.13)(247.5,3.13)
}
\pscustom[linewidth=0.35]{\psline(278.76,3.13)(273.12,-2.5)
\psline(273.12,-2.5)(267.5,3.13)
\psline(267.5,3.13)(278.76,3.13)
\psbezier(278.76,3.13)(278.76,3.13)(278.76,3.13)
}
\rput(273.12,0.62){$\zzero$}
\rput{90}(28.12,-47.5){\psellipse[linewidth=0.35,fillstyle=solid](0,0)(1.49,-1.48)}
\rput{0}(52.5,-46.88){\psellipse[linewidth=0.35,fillstyle=solid](0,0)(1.49,-1.48)}
\rput{0}(151.25,-40){\psellipse[linewidth=0.35,fillstyle=solid](0,0)(1.49,-1.49)}
\rput{0}(180,-25){\psellipse[linewidth=0.35,fillstyle=solid](0,0)(1.49,-1.49)}
\end{pspicture}

\eeq
\vspace{4\baselineskip}

\noindent from which the result follows using the second pullback of \eqref{eq:pb} and \eqref{eq:pb-string}.
\end{proof}
%
%
%

\section{Frobenius and modularity}\label{Sec:Frob}

In lattice theory, the modularity condition is usually written in the form
\bear
x\leq z &\Longrightarrow & (x\vee y) \wedge z =
x \vee (y\wedge z)  
\eear
In an effect algebra, $x\ovee y$ is defined if and only if $x\leq \ort y$, whereas $y\owedge z$ is defined if and only if $\ort y \leq z$, where $u\leq w$ abbreviates $\exists v.\  u\ovee v = w$. Both $x\ovee y$ and $y\owedge z$ are thus  defined if and only if $x\leq \ort y \leq z$. The modularity law for effect algebras is thus
\bea\label{eq:modular}
x\leq \ort y \leq z &\Longrightarrow & (x\ovee y) \owedge z = x \ovee (y\owedge z)  
\eea
The following definition, stated in an arbitrary dagger-compact category $\CCc$, is equivalent to \eqref{eq:modular} when restricted to partial functions, i.e. to single-valued morphisms in $\CCc=\Rel$.

\begin{definition}
A convolution algebra $(A,\ovee,0,\owedge,1)$ over a self-dual object $A$ in a dagger-compact category $\CCc$ is said to be \emph{modular} when the following equation holds
\beq\label{eq:Mod}
\newcommand{\oovee}{\ovee}
\newcommand{\oowedge}{\owedge}
\newcommand{\EQLS}{=}
\def\JPicScale{.5}
\ifx\JPicScale\undefined\def\JPicScale{1}\fi
\psset{unit=\JPicScale mm}
\psset{linewidth=0.3,dotsep=1,hatchwidth=0.3,hatchsep=1.5,shadowsize=1,dimen=middle}
\psset{dotsize=0.7 2.5,dotscale=1 1,fillcolor=black}
\psset{arrowsize=1 2,arrowlength=1,arrowinset=0.25,tbarsize=0.7 5,bracketlength=0.15,rbracketlength=0.15}
\begin{pspicture}(0,0)(195.63,48.75)
\psline[linewidth=0.35](13.75,21.88)
(13.75,10)(13.75,11.88)
\rput(98.75,0){$\EQLS$}
\rput(60,25.62){$\oovee$}
\pscustom[linewidth=0.35]{\psline(65.62,21.88)(65.62,10)
\psbezier(65.62,10)(65.62,10)(65.62,10)
}
\pscustom[linewidth=0.35]{\psline(25.64,1.87)(1.87,1.87)
\psline(1.87,1.87)(9.99,9.99)
\psbezier(9.99,9.99)(9.99,9.99)(9.99,9.99)
\psline(9.99,9.99)(17.49,10.01)
\psline(17.49,10.01)(25.64,1.87)
\psbezier(25.64,1.87)(25.64,1.87)(25.64,1.87)
\psline(25.64,1.87)(25.61,1.87)
}
\rput(14.38,5.62){$\oovee$}
\pscustom[linewidth=0.35]{\psline(78.13,1.87)(54.37,1.87)
\psline(54.37,1.87)(62.49,10.01)
\psbezier(62.49,10.01)(62.49,10.01)(62.49,10.01)
\psline(62.49,10.01)(69.99,10.01)
\psline(69.99,10.01)(78.13,1.87)
\psbezier(78.13,1.87)(78.13,1.87)(78.13,1.87)
\psline(78.13,1.87)(78.11,1.87)
}
\rput(20.62,25.62){$\oowedge$}
\psline[linewidth=0.35](5.63,1.88)
(5.62,-20.62)(5.62,-25.62)
\psline[linewidth=0.35](73.12,1.88)
(73.12,-11.24)(73.12,-25)
\rput{0}(183.13,21.88){\psellipse[linewidth=0.35,fillstyle=solid](0,0)(1.49,-1.48)}
\rput{0}(131.88,21.88){\psellipse[linewidth=0.35,fillstyle=solid](0,0)(1.48,-1.49)}
\rput(66.25,5.62){$\oowedge$}
\psline[linewidth=0.35,border=0.45](58.12,1.88)
(58.12,-11.25)(40,-21.88)
\psline[linewidth=0.35,border=0.5](28.12,21.88)
(28.12,9.38)(73.12,-16.88)
\rput{0}(73.12,-16.88){\psellipse[linewidth=0.35,fillstyle=solid](0,0)(1.48,-1.48)}
\pscustom[linewidth=0.35]{\psline(32.51,21.86)(8.75,21.88)
\psline(8.75,21.88)(16.88,30)
\psbezier(16.88,30)(16.88,30)(16.88,30)
\psline(16.88,30)(24.38,30)
\psline(24.38,30)(32.51,21.86)
\psbezier(32.51,21.86)(32.51,21.86)(32.51,21.86)
\psline(32.51,21.86)(32.5,21.86)
}
\pscustom[linewidth=0.35]{\psline(71.26,21.86)(47.5,21.88)
\psline(47.5,21.88)(55.63,30)
\psbezier(55.63,30)(55.63,30)(55.63,30)
\psline(55.63,30)(63.13,30)
\psline(63.13,30)(71.26,21.86)
\psbezier(71.26,21.86)(71.26,21.86)(71.26,21.86)
\psline(71.26,21.86)(71.25,21.86)
}
\psline[linewidth=0.35](131.25,21.88)
(131.25,10)(131.25,11.88)
\pscustom[linewidth=0.35]{\psline(183.12,21.88)(183.12,10)
\psbezier(183.12,10)(183.12,10)(183.12,10)
}
\psline[linewidth=0.35](138.75,1.88)
(138.75,-10.62)(158.12,-20.62)
\pscustom[linewidth=0.35]{\psline(143.14,1.87)(119.37,1.87)
\psline(119.37,1.87)(127.49,9.99)
\psbezier(127.49,9.99)(127.49,9.99)(127.49,9.99)
\psline(127.49,9.99)(134.99,10.01)
\psline(134.99,10.01)(143.14,1.87)
\psbezier(143.14,1.87)(143.14,1.87)(143.14,1.87)
\psline(143.14,1.87)(143.11,1.87)
}
\rput(131.88,5.62){$\oovee$}
\pscustom[linewidth=0.35]{\psline(195.63,1.87)(171.87,1.87)
\psline(171.87,1.87)(179.99,10.01)
\psbezier(179.99,10.01)(179.99,10.01)(179.99,10.01)
\psline(179.99,10.01)(187.49,10.01)
\psline(187.49,10.01)(195.63,1.87)
\psbezier(195.63,1.87)(195.63,1.87)(195.63,1.87)
\psline(195.63,1.87)(195.61,1.87)
}
\psline[linewidth=0.35](123.13,1.88)
(123.12,-20.62)(123.12,-26.25)
\psline[linewidth=0.35](190.62,1.88)
(190.62,-11.24)(190.62,-26.25)
\rput(183.75,5.62){$\oowedge$}
\psline[linewidth=0.35](175.62,1.88)
(175.62,-11.25)(158.12,-20.62)
\psline[linewidth=0.35](21.88,1.88)
(21.88,-11.25)(40,-21.88)
\psline[linewidth=0.35](40,48.75)
(60,39.38)(60,30)
\psline[linewidth=0.35](40,48.75)
(20.62,38.75)(20.62,30)
\psline[linewidth=0.35,border=0.3](53.12,21.25)
(53.12,8.75)(5.62,-17.5)
\rput{0}(5.62,-17.5){\psellipse[linewidth=0.35,fillstyle=solid](0,0)(1.48,-1.48)}
\end{pspicture}

\eeq
\vspace{2\baselineskip}
\end{definition}

\paragraph{Explanation.} The inputs of the morphisms on both sides of \eqref{eq:Mod} correspond to $x$ and $z$ of \eqref{eq:modular}. The equation says that the range where its left-hand side provides an output coincides with the range where its right-hand side provides an output. The right-hand morphism provides an output whenever there is $y$ such that both $x\ovee y$ and $y\owedge z$ are defined. When $\ovee$ and $\owedge$ are single-valued, then according to Lemma~\ref{lemma:sveq}, the left-hand morphism provides an output whenever $(x\ovee y)\owedge z$ and $x\ovee (y\owedge z)$ are equal. 

\begin{definition}
A convolution algebra $(A,\ovee,0,\owedge,1)$  over a self-dual object $A$ in a dagger-compact category $\CCc$ is said to satisfy the \emph{Frobenius condition} when the following equation holds
\beq\label{eq:Frob}
\newcommand{\oovee}{\ovee}
\newcommand{\oowedge}{\owedge}
\newcommand{\EQLS}{=}
\def\JPicScale{.5}
\ifx\JPicScale\undefined\def\JPicScale{1}\fi
\psset{unit=\JPicScale mm}
\psset{linewidth=0.3,dotsep=1,hatchwidth=0.3,hatchsep=1.5,shadowsize=1,dimen=middle}
\psset{dotsize=0.7 2.5,dotscale=1 1,fillcolor=black}
\psset{arrowsize=1 2,arrowlength=1,arrowinset=0.25,tbarsize=0.7 5,bracketlength=0.15,rbracketlength=0.15}
\begin{pspicture}(0,0)(87.5,20.62)
\pscustom[linewidth=0.35]{\psline(23.75,-13.13)(0,-13.12)
\psline(0,-13.12)(8.12,-5.01)
\psbezier(8.12,-5.01)(8.12,-5.01)(8.12,-5.01)
\psline(8.12,-5.01)(15.62,-4.99)
\psline(15.62,-4.99)(23.75,-13.13)
\psbezier(23.75,-13.13)(23.75,-13.13)(23.75,-13.13)
\psbezier(23.75,-13.13)(23.75,-13.13)(23.75,-13.13)
}
\rput(11.88,-9.38){$\oovee$}
\rput(36.25,0){$\EQLS$}
\psline[linewidth=0.35](18.12,20.62)(18.12,13.12)
\pscustom[linewidth=0.35]{\psline(0.01,13.14)(23.74,13.12)
\psline(23.74,13.12)(15.64,5.01)
\psbezier(15.64,5.01)(15.64,5.01)(15.64,5.01)
\psline(15.64,5.01)(8.14,4.99)
\psline(8.14,4.99)(0.01,13.14)
\psbezier(0.01,13.14)(0.01,13.14)(0.01,13.14)
\psbezier(0.01,13.14)(0.01,13.14)(0.01,13.14)
}
\psline[linewidth=0.35](11.88,5)(11.88,-5)
\psline[linewidth=0.35](19.38,-20.62)(19.38,-13.12)
\psline[linewidth=0.35](55,-20)(55,5)
\pscustom[linewidth=0.35]{\psline(73.12,4.99)(49.38,5)
\psline(49.38,5)(57.5,13.12)
\psline(57.5,13.12)(57.5,13.12)
\psline(57.5,13.12)(65,13.13)
\psline(65,13.13)(73.12,4.99)
\psbezier(73.12,4.99)(73.12,4.99)(73.12,4.99)
\psbezier(73.12,4.99)(73.12,4.99)(73.12,4.99)
}
\psline[linewidth=0.35](61.25,13.12)(61.25,20.62)
\psline[linewidth=0.35](81.88,20)(81.88,-5)
\pscustom[linewidth=0.35]{\psline(63.75,-4.99)(87.5,-5)
\psline(87.5,-5)(79.38,-13.12)
\psbezier(79.38,-13.12)(79.38,-13.12)(79.38,-13.12)
\psline(79.38,-13.12)(71.88,-13.13)
\psline(71.88,-13.13)(63.75,-4.99)
\psbezier(63.75,-4.99)(63.75,-4.99)(63.75,-4.99)
\psbezier(63.75,-4.99)(63.75,-4.99)(63.75,-4.99)
}
\psline[linewidth=0.35](75.62,-13.13)(75.62,-20)
\rput(61.25,8.75){$\oovee$}
\psline[linewidth=0.35](68.76,-5)(68.75,5)
\rput(75.62,-9.38){$\oowedge$}
\psline[linewidth=0.35](6.25,20.62)(6.25,13.12)
\psline[linewidth=0.35](6.25,-13.12)(6.25,-20.62)
\rput(11.88,9.38){$\oowedge$}
\end{pspicture}

\eeq
\vspace{1\baselineskip}
\end{definition}


The following lemma is proved by straightforward geometric transformations using the duality on $A$.

\begin{lemma}
For a convolution algebra $(A,\ovee,0,\owedge,1)$  over a self-dual object $A$ in a dagger-compact category $\CCc$, each of the following two equations is equivalent with the Frobenius condition.
\beq\label{eq:Frob-versions}
\newcommand{\oovee}{\ovee}
\newcommand{\oowedge}{\owedge}
\newcommand{\EQLS}{=}
\def\JPicScale{.47}
\ifx\JPicScale\undefined\def\JPicScale{1}\fi
\psset{unit=\JPicScale mm}
\psset{linewidth=0.3,dotsep=1,hatchwidth=0.3,hatchsep=1.5,shadowsize=1,dimen=middle}
\psset{dotsize=0.7 2.5,dotscale=1 1,fillcolor=black}
\psset{arrowsize=1 2,arrowlength=1,arrowinset=0.25,tbarsize=0.7 5,bracketlength=0.15,rbracketlength=0.15}
\begin{pspicture}(0,0)(301.26,20.62)
\rput(60,0){$\EQLS$}
\psline[linewidth=0.35](89.38,-3.75)
(89.38,-7.5)(108.75,-17.5)
\pscustom[linewidth=0.35]{\psline(93.14,-3.75)(69.37,-3.75)
\psline(69.37,-3.75)(77.49,4.37)
\psbezier(77.49,4.37)(77.49,4.37)(77.49,4.37)
\psline(77.49,4.37)(84.99,4.38)
\psline(84.99,4.38)(93.14,-3.75)
\psbezier(93.14,-3.75)(93.14,-3.75)(93.14,-3.75)
\psline(93.14,-3.75)(93.11,-3.75)
}
\rput(81.88,0){$\oovee$}
\pscustom[linewidth=0.35]{\psline(145.63,-3.75)(121.87,-3.75)
\psline(121.87,-3.75)(129.99,4.38)
\psbezier(129.99,4.38)(129.99,4.38)(129.99,4.38)
\psline(129.99,4.38)(137.49,4.38)
\psline(137.49,4.38)(145.63,-3.75)
\psbezier(145.63,-3.75)(145.63,-3.75)(145.63,-3.75)
\psline(145.63,-3.75)(145.61,-3.75)
}
\psline[linewidth=0.35](73.13,-3.75)
(73.12,-10)(73.12,-18.75)
\psline[linewidth=0.35](140.62,-3.74)
(140.62,-16.86)(140.62,-19.38)
\rput(133.75,0){$\oowedge$}
\psline[linewidth=0.35](126.25,-3.75)
(126.25,-8.12)(108.75,-17.5)
\psline[linewidth=0.35](3.12,-3.75)
(3.12,-19.38)(3.12,-13.74)
\rput(10,0){$\oovee$}
\psline[linewidth=0.35](45,-3.75)
(45,-18.75)(45,-15.62)
\rput(39.38,0){$\oowedge$}
\psline[linewidth=0.35,border=0.5](126.25,-20)
(126.25,-15.62)(88.75,13.12)
\pscustom[linewidth=0.35]{\psline(21.88,-3.77)(-1.88,-3.75)
\psline(-1.88,-3.75)(6.25,4.38)
\psbezier(6.25,4.38)(6.25,4.38)(6.25,4.38)
\psline(6.25,4.38)(13.75,4.38)
\psline(13.75,4.38)(21.88,-3.77)
\psbezier(21.88,-3.77)(21.88,-3.77)(21.88,-3.77)
\psbezier(21.88,-3.77)(21.88,-3.77)(21.88,-3.77)
}
\pscustom[linewidth=0.35]{\psline(50.64,-3.77)(26.88,-3.75)
\psline(26.88,-3.75)(35,4.38)
\psbezier(35,4.38)(35,4.38)(35,4.38)
\psline(35,4.38)(42.5,4.38)
\psline(42.5,4.38)(50.64,-3.77)
\psbezier(50.64,-3.77)(50.64,-3.77)(50.64,-3.77)
\psline(50.64,-3.77)(50.62,-3.77)
}
\psline[linewidth=0.35](25,15)
(39.38,8.14)(39.38,4.38)
\psline[linewidth=0.35](25,15)
(10,7.49)(10,4.38)
\psline[linewidth=0.35](16.25,-3.75)
(16.25,-19.38)(16.25,-13.74)
\psline[linewidth=0.35](32.5,-3.75)
(32.5,-19.38)(32.5,-15.63)
\psline[linewidth=0.35](88.75,13.12)
(81.25,10)(81.25,4.38)
\psline[linewidth=0.35,border=0.5](88.75,-19.38)
(88.75,-15.62)(126.25,13.12)
\psline[linewidth=0.35](126.25,13.12)
(133.75,10)(133.75,4.38)
\rput(235.63,0){$\EQLS$}
\psline[linewidth=0.35](265,-3.75)
(265,-7.5)(273.75,-11.88)
\pscustom[linewidth=0.35]{\psline(268.76,-3.75)(245,-3.75)
\psline(245,-3.75)(253.12,4.37)
\psbezier(253.12,4.37)(253.12,4.37)(253.12,4.37)
\psline(253.12,4.37)(260.62,4.38)
\psline(260.62,4.38)(268.76,-3.75)
\psbezier(268.76,-3.75)(268.76,-3.75)(268.76,-3.75)
\psline(268.76,-3.75)(268.74,-3.75)
}
\rput(257.5,0){$\oovee$}
\pscustom[linewidth=0.35]{\psline(301.26,-3.75)(277.5,-3.75)
\psline(277.5,-3.75)(285.62,4.38)
\psbezier(285.62,4.38)(285.62,4.38)(285.62,4.38)
\psline(285.62,4.38)(293.12,4.38)
\psline(293.12,4.38)(301.26,-3.75)
\psbezier(301.26,-3.75)(301.26,-3.75)(301.26,-3.75)
\psline(301.26,-3.75)(301.24,-3.75)
}
\psline[linewidth=0.35](248.76,-3.75)
(248.74,-10)(248.74,-18.75)
\psline[linewidth=0.35](296.24,-3.74)
(296.24,-16.86)(296.24,-19.38)
\rput(289.38,0){$\oowedge$}
\psline[linewidth=0.35](281.88,-3.75)
(281.88,-8.12)(273.75,-11.88)
\psline[linewidth=0.35](257.5,20)
(257.5,4.37)(257.5,10)
\rput(176.25,0){$\oovee$}
\psline[linewidth=0.35](290,20)
(290,5)(290,8.12)
\rput(215.63,0){$\oowedge$}
\pscustom[linewidth=0.35]{\psline(188.13,-3.77)(164.37,-3.75)
\psline(164.37,-3.75)(172.51,4.38)
\psbezier(172.51,4.38)(172.51,4.38)(172.51,4.38)
\psline(172.51,4.38)(180.01,4.38)
\psline(180.01,4.38)(188.13,-3.77)
\psbezier(188.13,-3.77)(188.13,-3.77)(188.13,-3.77)
\psbezier(188.13,-3.77)(188.13,-3.77)(188.13,-3.77)
}
\pscustom[linewidth=0.35]{\psline(226.89,-3.77)(203.13,-3.75)
\psline(203.13,-3.75)(211.25,4.38)
\psbezier(211.25,4.38)(211.25,4.38)(211.25,4.38)
\psline(211.25,4.38)(218.75,4.38)
\psline(218.75,4.38)(226.89,-3.77)
\psbezier(226.89,-3.77)(226.89,-3.77)(226.89,-3.77)
\psline(226.89,-3.77)(226.87,-3.77)
}
\psline[linewidth=0.35](195.63,18.75)
(215.63,9.38)(215.63,4.38)
\psline[linewidth=0.35](195.63,18.75)
(176.25,8.75)(176.25,4.38)
\psline[linewidth=0.35](168.75,-3.75)
(168.75,-19.38)(168.75,-13.74)
\psline[linewidth=0.35](221.87,-3.75)
(221.87,-19.38)(221.87,-15.63)
\psline[linewidth=0.35,border=0.5](215.63,20.62)
(215.63,15.62)(187.5,-12.5)
\psline[linewidth=0.35](187.5,-12.5)
(183.75,-8.75)(183.75,-3.75)
\psline[linewidth=0.35,border=0.5](176.25,20)
(176.25,15)(204.38,-13.12)
\psline[linewidth=0.35](204.38,-13.75)
(208.75,-9.38)(208.75,-3.75)
\end{pspicture}

\eeq
\vspace{1\baselineskip}
\end{lemma}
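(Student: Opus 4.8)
The plan is to derive both equations in \eqref{eq:Frob-versions} from the Frobenius condition \eqref{eq:Frob} by bending legs along the self-duality of $A$, exploiting that each such bending is a bijection on the relevant hom-set, so that all three conditions become equivalent. The elementary fact I would use is that, since $A$ is self-dual, the compact structure supplies a cup $I\tto\eta A\otimes A$ and a cap $A\otimes A\tto{\eta^{\ddag}}I$ satisfying the snake identities, and that these induce, for all objects $X,Y$, mutually inverse bijections $\CCc(X\otimes A,Y)\;\cong\;\CCc(X,Y\otimes A)$ bending an input leg on the right to an output leg on the right, together with its mirror images (bend on the left, or bend an output down to an input). Composing such bijections, any redistribution of the four legs of a morphism $A^{\otimes k}\to A^{\otimes(4-k)}$ between domain and codomain is realised by a bijection on the corresponding hom-sets.

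Next I would observe that both sides of \eqref{eq:Frob} are parallel morphisms $A\otimes A\to A\otimes A$ — the left-hand side is the comultiplication induced from $\owedge$, post-composed with $\ovee$, and the right-hand side is $(\ovee\otimes A)\circ(A\otimes\owedge)$ with one leg of $\owedge$ bent up — and then apply one fixed bending pattern to \emph{both} sides, the same legs on each side. A routine diagram chase — sliding the cups and caps through the $\ovee$- and $\owedge$-vertices, straightening the resulting kinks with the snake identities, and invoking coherence for symmetric monoidal categories — shows that this carries the left-hand side of \eqref{eq:Frob} exactly onto the left-hand side of the first equation of \eqref{eq:Frob-versions}, and likewise for the right-hand sides. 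Since the bending is a bijection, \eqref{eq:Frob} holds if and only if the first equation of \eqref{eq:Frob-versions} does; the second equation of \eqref{eq:Frob-versions} is obtained in exactly the same way from another fixed bending of the four legs.

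I expect the main obstacle to be combinatorial rather than conceptual: one has to be scrupulous that the \emph{same} legs are bent on the two sides of the equation, so that the procedure yields precisely the displayed equations of \eqref{eq:Frob-versions} rather than some other deformation of them, and one should keep track of the (trivial) commutativity of $\owedge$ that makes the induced comultiplication well-defined in the first place. Beyond that, no further property of $\ovee$ or $\owedge$ is needed, since a mere change of variance of the legs already accounts for all three presentations of the condition.
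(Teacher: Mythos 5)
Your proposal is correct and coincides with the paper's own argument: the paper disposes of this lemma with the single remark that it ``is proved by straightforward geometric transformations using the duality on $A$,'' which is precisely the leg-bending-via-cups-and-caps bijection you spell out. Your version simply makes explicit the standard facts (the snake identities, the resulting hom-set bijections, and the need to bend the same legs on both sides) that the paper leaves implicit.
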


\begin{lemma}
If the convolution algebra $(A,\ovee,0,\owedge,1)$  over a self-dual object $A$ in a dagger-compact category $\CCc$ consists of single-valued operations, then the Frobenius condition is also equivalent with equation \eqref{eq:Mod}.
\end{lemma}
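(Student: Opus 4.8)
The plan is to show that, under single-valuedness, both the Frobenius condition \eqref{eq:Frob} and equation \eqref{eq:Mod} encode one and the same diagrammatic identity, the \emph{compatibility} of $\ovee$ with $\owedge$: writing $f = \owedge\circ(\ovee\otimes\id_A)$ and $g = \ovee\circ(\id_A\otimes\owedge)$ for the two single-valued composites $A\otimes A\otimes A\to A$, this identity says that $f$ and $g$ agree wherever they are both defined. In $\Rel$ this is exactly \eqref{eq:modular}, since the hypotheses $x\le\ort y$ and $\ort y\le z$ there are precisely the definedness conditions for $x\ovee y$ and for $y\owedge z$; and it is what \eqref{eq:Mod} expresses once the bottom cap is read as ``quantify over $y$''. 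The bridge is Lemma~\ref{lemma:sveq}: applied to the single-valued maps $f$ and $g$, equation \eqref{eq:conv-eq} rewrites ``$f$ and $g$ agree on the overlap of their domains'' as a single string-diagram identity, and \eqref{eq:Mod} is precisely that identity with its free legs named $x$ and $z$.

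Concretely I would proceed in two moves. First, using Lemma~\ref{lemma:sveq}, replace \eqref{eq:Mod} by the cap-decorated compatibility identity for $f$ and $g$. Second, massage the Frobenius condition into the same shape: invoke the preceding lemma (which rewrites \eqref{eq:Frob} as either equation of \eqref{eq:Frob-versions}) and then observe that each form in \eqref{eq:Frob-versions} already exhibits the pattern ``split a wire and feed the halves through $\ovee$ and $\owedge$'' --- which is exactly what one obtains by bending one leg of the compatibility square with the self-duality $I\tto\eta A\otimes A$. Then slide $\ovee$ and $\owedge$ past the classical comultiplications appearing in \eqref{eq:Frob-versions} (legitimate since they are single-valued, by Proposition~\ref{prop:maps-char}), apply associativity and commutativity of the two monoids, and collapse the picture onto the identity produced in the first move. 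Reading every step backwards turns \eqref{eq:Mod} back into \eqref{eq:Frob}, giving both directions of the equivalence at once.

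The main obstacle is expected to be the clash of ``shapes'' rather than any conceptual difficulty: \eqref{eq:Mod} asserts only that $f$ and $g$ agree \emph{on the common part of their domains}, so one may not simply quote an unconditional associativity law, whereas \eqref{eq:Frob} is an equation between total-looking composites $A\otimes A\to A\otimes A$. Single-valuedness, funnelled through Lemma~\ref{lemma:sveq} and Proposition~\ref{prop:maps-char}, is exactly the hypothesis that reconciles the two --- it repackages ``agree on the overlap'' as one diagram and lets the comultiplications in \eqref{eq:Frob-versions} be pushed past $\ovee$ and $\owedge$ --- so the residual work is the careful bookkeeping of cups, caps and symmetries needed to make the two rotated diagrams literally coincide, together with the check that no step of the reduction was valid in only one direction.
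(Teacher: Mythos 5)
Your proposal is correct and follows essentially the same route as the paper: both arguments use Lemma~\ref{lemma:sveq} as the bridge that turns the ``agree where both are defined'' content of \eqref{eq:Mod} into a literal equation between partial maps, and both identify that equation with one of the rotated forms of the Frobenius condition supplied by the lemma on \eqref{eq:Frob-versions}. The only cosmetic difference is the direction of traversal (you start from \eqref{eq:Mod} and work toward \eqref{eq:Frob-versions}, the paper does the reverse) and your choice to apply Lemma~\ref{lemma:sveq} to the ternary composites $\owedge\circ(\ovee\otimes\id)$ and $\ovee\circ(\id\otimes\owedge)$, which if anything makes the single-valuedness hypothesis more transparent.
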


\begin{proof}
We use Lemma~\ref{lemma:sveq}. Let $f$ be the right-hand side of the second equation of \eqref{eq:Frob-versions}; let $g$ be the left-hand side of \eqref{eq:Frob-versions}. Lemma~\ref{lemma:sveq} says that $f=g$ if and only if $\cun\circ\left((g^\ddag \circ f) \convv \id\right) = f\convv g = \cun \circ f$. But it is easy to see that $\cun\circ\left((g^\ddag \circ f) \convv \id\right)$ reduces to the  left-hand side of \eqref{eq:Mod}, whereas $\cun \circ f$ is the right hand side of \eqref{eq:Mod}. Equation \eqref{eq:Mod} thus holds if and only if the second equation of \eqref{eq:Frob-versions} holds.
\end{proof}

\paragraph{Remark.} The correspondence between the modularity and the Frobenius condition is reflected in the geometry of the left-hand diagram of \eqref{eq:Mod}, as displayed on the next figure.

\vspace{1\baselineskip}
\[
\newcommand{\oovee}{\ovee}
\newcommand{\oowedge}{\owedge}
\newcommand{\Modularity}{\color{red} Modularity}
\newcommand{\Frobenius}{\color{red} Frobenius}
\newcommand{\EQLS}{=}
\def\JPicScale{.6}
\ifx\JPicScale\undefined\def\JPicScale{1}\fi
\psset{unit=\JPicScale mm}
\psset{linewidth=0.3,dotsep=1,hatchwidth=0.3,hatchsep=1.5,shadowsize=1,dimen=middle}
\psset{dotsize=0.7 2.5,dotscale=1 1,fillcolor=black}
\psset{arrowsize=1 2,arrowlength=1,arrowinset=0.25,tbarsize=0.7 5,bracketlength=0.15,rbracketlength=0.15}
\begin{pspicture}(0,0)(100,75)
\psline[linewidth=0.35](28.75,36.25)
(28.75,24.38)(28.75,26.26)
\rput(75,40){$\oovee$}
\pscustom[linewidth=0.35]{\psline(80.62,36.25)(80.62,24.38)
\psbezier(80.62,24.38)(80.62,24.38)(80.62,24.38)
}
\pscustom[linewidth=0.35]{\psline(40.64,16.24)(16.87,16.24)
\psline(16.87,16.24)(24.99,24.37)
\psbezier(24.99,24.37)(24.99,24.37)(24.99,24.37)
\psline(24.99,24.37)(32.49,24.38)
\psline(32.49,24.38)(40.64,16.24)
\psbezier(40.64,16.24)(40.64,16.24)(40.64,16.24)
\psline(40.64,16.24)(40.61,16.24)
}
\rput(29.38,20){$\oovee$}
\pscustom[linewidth=0.35]{\psline(93.13,16.24)(69.37,16.24)
\psline(69.37,16.24)(77.49,24.38)
\psbezier(77.49,24.38)(77.49,24.38)(77.49,24.38)
\psline(77.49,24.38)(84.99,24.38)
\psline(84.99,24.38)(93.13,16.24)
\psbezier(93.13,16.24)(93.13,16.24)(93.13,16.24)
\psline(93.13,16.24)(93.11,16.24)
}
\rput(35.62,40){$\oowedge$}
\psline[linewidth=0.35](20.63,16.26)
(20.62,-6.25)(20.62,-11.25)
\psline[linewidth=0.35](88.12,16.26)
(88.12,3.13)(88.12,-10.62)
\rput(81.25,20){$\oowedge$}
\psline[linewidth=0.35,border=0.45](73.12,16.26)
(73.12,3.12)(55,-7.5)
\psline[linewidth=0.35,border=0.5](43.12,36.25)
(43.12,23.76)(88.12,-2.5)
\rput{0}(88.12,-2.5){\psellipse[linewidth=0.35,fillstyle=solid](0,0)(1.48,-1.48)}
\pscustom[linewidth=0.35]{\psline(47.51,36.24)(23.75,36.25)
\psline(23.75,36.25)(31.88,44.38)
\psbezier(31.88,44.38)(31.88,44.38)(31.88,44.38)
\psline(31.88,44.38)(39.38,44.38)
\psline(39.38,44.38)(47.51,36.24)
\psbezier(47.51,36.24)(47.51,36.24)(47.51,36.24)
\psline(47.51,36.24)(47.5,36.24)
}
\pscustom[linewidth=0.35]{\psline(86.26,36.24)(62.5,36.25)
\psline(62.5,36.25)(70.63,44.38)
\psbezier(70.63,44.38)(70.63,44.38)(70.63,44.38)
\psline(70.63,44.38)(78.13,44.38)
\psline(78.13,44.38)(86.26,36.24)
\psbezier(86.26,36.24)(86.26,36.24)(86.26,36.24)
\psline(86.26,36.24)(86.25,36.24)
}
\psline[linewidth=0.35](36.88,16.26)
(36.88,3.12)(55,-7.5)
\psline[linewidth=0.35](55,63.12)
(75,53.76)(75,44.38)
\psline[linewidth=0.35](55,63.12)
(35.62,53.12)(35.62,44.38)
\psline[linewidth=0.35,border=0.3](68.12,35.62)
(68.12,23.12)(20.62,-3.12)
\rput{0}(20.62,-3.12){\psellipse[linewidth=0.35,fillstyle=solid](0,0)(1.48,-1.48)}
\newrgbcolor{userLineColour}{1 0.2 0.4}
\psline[linewidth=0.75,linecolor=userLineColour,linestyle=dashed,dash=2.8 4.2](10,30)(100,30)
\newrgbcolor{userLineColour}{1 0.2 0.4}
\psline[linewidth=0.75,linecolor=userLineColour,linestyle=dashed,dash=2.8 4.2](55,-15)(55,70)
\rput(55,75){\Modularity}
\rput{90}(5,30){\Frobenius}
\end{pspicture}

\]

\vspace{2\baselineskip}
\noindent The vertical line splits the diagram into two sides of the modularity condition. Mutatis mutandis, the horizontal line through the middle of this diagram splits it into two sides of the Frobenius condition.

\begin{corollary}\label{prop:modular}
A superspecial algebra $(A,\ovee,\owedge,0,1,\ortt)$ over a self-dual object $A$ in a dagger-compact category $\CCc$ satisfies the Frobenius condition if and only if it is modular.
\end{corollary}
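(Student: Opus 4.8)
The plan is to reduce the corollary to the lemma immediately preceding it, which asserts that for a convolution algebra with single-valued operations the Frobenius condition is equivalent to equation \eqref{eq:Mod}. So essentially all that remains is to check that, in a superspecial algebra, both $\ovee$ and $\owedge$ are single-valued; the self-duality needed even to state the Frobenius and modularity conditions is supplied automatically by the underlying classical structure on $A$, as recalled in Section \ref{Sec:cqm}.

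First I would unwind clause (a) of Definition \ref{def:eff}: it says the convolution algebra $(A,\ovee,0,\ovee^\ddag,0^\ddag)$ is special, and by Lemma \ref{lemma:sing-spec} this is precisely single-valuedness of $\ovee$ with respect to the classical structures. For $\owedge$, the equivalent half of clause (a) yields single-valuedness of $\owedge$ outright; alternatively, in an orthocomplemented algebra $\owedge$ is obtained from $\ovee$ by pre- and post-composing with the orthocomplement $\ortt$ (Proposition \ref{prop:demorgan}), and since $\ortt$ is a map — in particular single-valued — and single-valued morphisms are closed under composition, single-valuedness of $\owedge$ follows from that of $\ovee$.

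Having established that the convolution algebra $(A,\ovee,0,\owedge,1)$ underlying the superspecial algebra consists of single-valued operations, the preceding lemma applies verbatim: the Frobenius condition \eqref{eq:Frob} holds if and only if equation \eqref{eq:Mod} holds, i.e. if and only if the algebra is modular. This finishes the argument.

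I do not expect any genuine obstacle: the substantive work is all in Lemma \ref{lemma:sing-spec}, Proposition \ref{prop:maps-char}, and the two lemmas relating the Frobenius condition to \eqref{eq:Mod}. The only point deserving a line of care is the availability of single-valuedness for $\owedge$, handled as above. One could also close with a remark invoking Proposition \ref{prop:eff}: since superspecial algebras are exactly the general effect algebras, the corollary is precisely the statement that an effect algebra is modular if and only if the corresponding (Frobenius-style) algebra satisfies the Frobenius law — the bridge promised in the introduction.
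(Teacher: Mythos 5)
Your proposal is correct and follows exactly the route the paper intends: the corollary is an immediate consequence of the preceding lemma once single-valuedness of $\ovee$ and $\owedge$ is extracted from the specialty clause of Definition~\ref{def:eff} via Lemma~\ref{lemma:sing-spec}. The paper leaves this deduction implicit, and your careful handling of single-valuedness for $\owedge$ (either from the equivalent half of clause (a) or by conjugating $\ovee$ with the involution $\ortt$) fills in precisely the step the corollary takes for granted.
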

  

\section{Further work}\label{Sec:Conclusion}
The first task is to extend the correspondences between (modular) effect algebras and (Frobenius) superspecial algebras, spelled out in Propositions~\ref{prop:eff} and \ref{prop:modular} into functors between the corresponding categories. The different components were built into these different structures to capture different concepts. The fact that these different conceptual components, when combined, lead to equivalent categories suggests that there are underlying conceptual connections that may be of interest. What is the connection between the entanglement type of the $W$-state, realized by the antispecial law on one side, and the sharpness of the units of the effect algebra operations on the other side? 

Another immediate task is to lift the characterization of (modular) effect algebras as (Frobenius) superspecial algebras from the concrete category $\Rel$ of sets and relations, where effect algebras seem to normally live, to the abstract framework of dagger-compact categories, where the usual pointwise definition of effect algebras cannot be stated. If we \emph{define}\/ an effect algebra in a dagger-compact category to be a superspecial algebra, then the convenient and intuitive language of effect algebras (suitably extended by the scalar factors, which are trivial in $\Rel$) becomes available not only in the richer nonstandard models of quantum mechanics \cite{PavlovicD:QPL09}, but ironically even in the standard Hilbert space model, whose relevant features were originally intended to be separated from the irrelevant ones by the language of effect algebras.  

Last but not least, since every superspecial Frobenius algebra implements a GHZ/W-pair of \cite{Coecke-Kissinger:anti}, and every GHZ/W-pair implements a Z/X-pair of complementary observables, modular effect algebras in any of these frameworks may provide a useful new mathematical interface to complementary observables.


\bibliographystyle{eptcs}
\bibliography{PavlovicD,ref-antieffect,CT}

\begin{thebibliography}{10}
\providecommand{\bibitemdeclare}[2]{}
\providecommand{\surnamestart}{}
\providecommand{\surnameend}{}
\providecommand{\urlprefix}{Available at }
\providecommand{\url}[1]{\texttt{#1}}
\providecommand{\href}[2]{\texttt{#2}}
\providecommand{\urlalt}[2]{\href{#1}{#2}}
\providecommand{\doi}[1]{doi:\urlalt{http://dx.doi.org/#1}{#1}}
\providecommand{\bibinfo}[2]{#2}

\bibitemdeclare{article}{AbramskyS:BigToy}
\bibitem{AbramskyS:BigToy}
\bibinfo{author}{Samson \surnamestart Abramsky\surnameend}
  (\bibinfo{year}{2012}): \emph{\bibinfo{title}{Big toy models - Representing
  physical systems as {Chu} spaces}}.
\newblock {\sl \bibinfo{journal}{Synthese}}
  \bibinfo{volume}{186}(\bibinfo{number}{3}), pp. \bibinfo{pages}{697--718},
  \doi{10.1007/s11229-011-9912-x}.

\bibitemdeclare{inproceedings}{Abramsky-Coecke}
\bibitem{Abramsky-Coecke}
\bibinfo{author}{Samson \surnamestart Abramsky\surnameend} \&
  \bibinfo{author}{Bob \surnamestart Coecke\surnameend} (\bibinfo{year}{2004}):
  \emph{\bibinfo{title}{A Categorical Semantics of Quantum Protocols}}.
\newblock In: {\sl \bibinfo{booktitle}{Proceedings of {LICS} 2004}},
  \bibinfo{publisher}{{IEEE} Computer Society}, pp. \bibinfo{pages}{415--425},
  \doi{10.1109/LICS.2004.1}.

\bibitemdeclare{book}{AdamekJ:locpac}
\bibitem{AdamekJ:locpac}
\bibinfo{author}{Ji{\v r}i \surnamestart Ad{\'a}mek\surnameend} \&
  \bibinfo{author}{Ji{\v r}i \surnamestart Rosick{\'y}\surnameend}
  (\bibinfo{year}{1994}): \emph{\bibinfo{title}{Locally Presentable and
  Accessible Categories}}.
\newblock {\sl \bibinfo{series}{London Mathematical Society Lecture Notes}}
  \bibinfo{volume}{189}, \bibinfo{publisher}{Cambridge University Press},
  \doi{10.1017/CBO9780511600579}.

\bibitemdeclare{article}{Beltrametti-Bugajski}
\bibitem{Beltrametti-Bugajski}
\bibinfo{author}{Enrico~G. \surnamestart {Beltrametti}\surnameend} \&
  \bibinfo{author}{Slawomir \surnamestart {Bugajski}\surnameend}
  (\bibinfo{year}{1997}): \emph{\bibinfo{title}{{Effect algebras and
  statistical physical theories}}}.
\newblock {\sl \bibinfo{journal}{Journal of Mathematical Physics}}
  \bibinfo{volume}{38}, pp. \bibinfo{pages}{3020--3030},
  \doi{10.1063/1.532031}.

\bibitemdeclare{article}{Birkhoff-vonNeumann:LQM}
\bibitem{Birkhoff-vonNeumann:LQM}
\bibinfo{author}{Garrett \surnamestart Birkhoff\surnameend} \&
  \bibinfo{author}{John \surnamestart von Neumann\surnameend}
  (\bibinfo{year}{1936}): \emph{\bibinfo{title}{The logic of quantum
  mechanics}}.
\newblock {\sl \bibinfo{journal}{Annals of Mathematics}} \bibinfo{volume}{37},
  pp. \bibinfo{pages}{823--843}, \doi{10.2307/1968621}.

\bibitemdeclare{article}{Carboni-Walters}
\bibitem{Carboni-Walters}
\bibinfo{author}{Aurelio \surnamestart Carboni\surnameend} \&
  \bibinfo{author}{Robert~F.C. \surnamestart Walters\surnameend}
  (\bibinfo{year}{1987}): \emph{\bibinfo{title}{Cartesian bicategories, {I}}}.
\newblock {\sl \bibinfo{journal}{J. of Pure and Applied Algebra}}
  \bibinfo{volume}{49}, pp. \bibinfo{pages}{11--32},
  \doi{10.1016/0022-4049(87)90121-6}.

\bibitemdeclare{misc}{JacobsB:effectus}
\bibitem{JacobsB:effectus}
\bibinfo{author}{Kenta \surnamestart Cho\surnameend}, \bibinfo{author}{Bart
  \surnamestart Jacobs\surnameend}, \bibinfo{author}{Bas \surnamestart
  Westerbaan\surnameend} \& \bibinfo{author}{Abraham \surnamestart
  Westerbaan\surnameend} (\bibinfo{year}{2015}): \emph{\bibinfo{title}{An
  Introduction to Effectus Theory}}.
\newblock \urlprefix\url{http://arxiv.org/abs/1512.05813}.

\bibitemdeclare{article}{Coecke-Duncan:ZX}
\bibitem{Coecke-Duncan:ZX}
\bibinfo{author}{Bob \surnamestart Coecke\surnameend} \& \bibinfo{author}{Ross
  \surnamestart Duncan\surnameend} (\bibinfo{year}{2011}):
  \emph{\bibinfo{title}{Interacting quantum observables: categorical algebra
  and diagrammatics}}.
\newblock {\sl \bibinfo{journal}{New Journal of Physics}}
  \bibinfo{volume}{13}(\bibinfo{number}{4}), p. \bibinfo{pages}{80pp},
  \doi{10.1088/1367-2630/13/4/043016}.
\newblock \bibinfo{note}{Arxiv:0906.4725}.

\bibitemdeclare{inproceedings}{CoeckeB:ComQuantLog}
\bibitem{CoeckeB:ComQuantLog}
\bibinfo{author}{Bob \surnamestart Coecke\surnameend}, \bibinfo{author}{Chris
  \surnamestart Heunen\surnameend} \& \bibinfo{author}{Aleks \surnamestart
  Kissinger\surnameend} (\bibinfo{year}{2013}):
  \emph{\bibinfo{title}{Compositional quantum logic}}.
\newblock In \bibinfo{editor}{Bob \surnamestart Coecke\surnameend},
  \bibinfo{editor}{Luke \surnamestart Ong\surnameend} \&
  \bibinfo{editor}{Prakash \surnamestart Panangaden\surnameend}, editors: {\sl
  \bibinfo{booktitle}{Computation, Logic, Games, and Quantum Foundations}}, pp.
  \bibinfo{pages}{21--36}, \doi{10.1007/978-3-642-38164-5{\_}3}.

\bibitemdeclare{inproceedings}{Coecke-Kissinger:anti}
\bibitem{Coecke-Kissinger:anti}
\bibinfo{author}{Bob \surnamestart Coecke\surnameend} \& \bibinfo{author}{Aleks
  \surnamestart Kissinger\surnameend} (\bibinfo{year}{2010}):
  \emph{\bibinfo{title}{The Compositional Structure of Multipartite Quantum
  Entanglement}}.
\newblock In: {\sl \bibinfo{booktitle}{Proceedings of {ICALP} 2010, Part
  {II}}}, pp. \bibinfo{pages}{297--308}, \doi{10.1007/978-3-642-14162-1{\_}25}.

\bibitemdeclare{inproceedings}{CoeckeB:arithmetic}
\bibitem{CoeckeB:arithmetic}
\bibinfo{author}{Bob \surnamestart Coecke\surnameend}, \bibinfo{author}{Aleks
  \surnamestart Kissinger\surnameend}, \bibinfo{author}{Alex \surnamestart
  Merry\surnameend} \& \bibinfo{author}{Shibdas \surnamestart Roy\surnameend}
  (\bibinfo{year}{2010}): \emph{\bibinfo{title}{The GHZ/W-calculus contains
  rational arithmetic}}.
\newblock In \bibinfo{editor}{Farid~M. \surnamestart Ablayev\surnameend},
  \bibinfo{editor}{Bob \surnamestart Coecke\surnameend} \&
  \bibinfo{editor}{Alexander \surnamestart Vasiliev\surnameend}, editors: {\sl
  \bibinfo{booktitle}{{CSR} Workshop on High Productivity Computations, {HPC}
  2010}}, {\sl \bibinfo{series}{{EPTCS}}}~\bibinfo{volume}{52}, pp.
  \bibinfo{pages}{34--48}.

\bibitemdeclare{incollection}{PavlovicD:CQStruct}
\bibitem{PavlovicD:CQStruct}
\bibinfo{author}{Bob \surnamestart Coecke\surnameend},
  \bibinfo{author}{\surnamestart \'{E}ric Oliver~Paquette\surnameend} \&
  \bibinfo{author}{\surnamestart {Dusko Pavlovic}\surnameend}
  (\bibinfo{year}{2009}): \emph{\bibinfo{title}{Classical and quantum
  structuralism}}.
\newblock In \bibinfo{editor}{Simon \surnamestart Gay\surnameend} \&
  \bibinfo{editor}{Ian \surnamestart Mackie\surnameend}, editors: {\sl
  \bibinfo{booktitle}{Semantical Techniques in Quantum Computation}},
  \bibinfo{publisher}{Cambridge University Press}, pp. \bibinfo{pages}{29--69},
  \doi{10.1017/CBO9781139193313.003}.

\bibitemdeclare{incollection}{PavlovicD:QMWS}
\bibitem{PavlovicD:QMWS}
\bibinfo{author}{Bob \surnamestart Coecke\surnameend} \& \bibinfo{author}{Dusko
  \surnamestart Pavlovic\surnameend} (\bibinfo{year}{2007}):
  \emph{\bibinfo{title}{Quantum measurements without sums}}.
\newblock In \bibinfo{editor}{G.~\surnamestart Chen\surnameend},
  \bibinfo{editor}{L.~\surnamestart Kauffman\surnameend} \&
  \bibinfo{editor}{S.~\surnamestart Lamonaco\surnameend}, editors: {\sl
  \bibinfo{booktitle}{Mathematics of Quantum Computing and Technology}},
  \bibinfo{publisher}{Taylor and Francis}, p. \bibinfo{pages}{36pp},
  \doi{10.1201/9781584889007.ch16}.
\newblock \urlprefix\url{http://arxiv.org/abs/quant-ph/0608035}.

\bibitemdeclare{article}{PavlovicD:MSCS13}
\bibitem{PavlovicD:MSCS13}
\bibinfo{author}{Bob \surnamestart Coecke\surnameend}, \bibinfo{author}{Dusko
  \surnamestart Pavlovic\surnameend} \& \bibinfo{author}{Jamie \surnamestart
  Vicary\surnameend} (\bibinfo{year}{2013}): \emph{\bibinfo{title}{A new
  description of orthogonal bases}}.
\newblock {\sl \bibinfo{journal}{Math. Structures in Comp. Sci.}}
  \bibinfo{volume}{23}(\bibinfo{number}{3}), pp. \bibinfo{pages}{555--567},
  \doi{10.1017/S0960129512000047}.
\newblock \urlprefix\url{http://arxiv.org/abs/0810.0812}.

\bibitemdeclare{article}{Foulis-Bennett}
\bibitem{Foulis-Bennett}
\bibinfo{author}{David~J \surnamestart Foulis\surnameend} \&
  \bibinfo{author}{Mary~Katherine \surnamestart Bennett\surnameend}
  (\bibinfo{year}{1994}): \emph{\bibinfo{title}{Effect algebras and unsharp
  quantum logics}}.
\newblock {\sl \bibinfo{journal}{Foundations of Physics}}
  \bibinfo{volume}{24}(\bibinfo{number}{10}), pp. \bibinfo{pages}{1331--1352},
  \doi{10.1007/BF02283036}.

\bibitemdeclare{book}{Freyd-Scedrov:book}
\bibitem{Freyd-Scedrov:book}
\bibinfo{author}{Peter \surnamestart Freyd\surnameend} \&
  \bibinfo{author}{Andre \surnamestart Scedrov\surnameend}
  (\bibinfo{year}{1990}): \emph{\bibinfo{title}{Categories, Allegories}}.
\newblock {\sl \bibinfo{series}{Mathematical Library}}~\bibinfo{volume}{39},
  \bibinfo{publisher}{North-Holland}.

\bibitemdeclare{inproceedings}{GogiosoS:relations}
\bibitem{GogiosoS:relations}
\bibinfo{author}{Stefano \surnamestart Gogioso\surnameend}
  (\bibinfo{year}{2015}): \emph{\bibinfo{title}{A Bestiary of Sets and
  Relations}}.
\newblock In \bibinfo{editor}{Chris \surnamestart Heunen\surnameend},
  \bibinfo{editor}{Peter \surnamestart Selinger\surnameend} \&
  \bibinfo{editor}{Jamie \surnamestart Vicary\surnameend}, editors: {\sl
  \bibinfo{booktitle}{{\rm Proceedings QPL 2015}}}, {\sl
  \bibinfo{series}{Electronic Proceedings in Theoretical Computer Science}}
  \bibinfo{volume}{195}, \bibinfo{publisher}{Open Publishing Association}, pp.
  \bibinfo{pages}{208--227}, \doi{10.4204/EPTCS.195.16}.

\bibitemdeclare{article}{GudderS:effect}
\bibitem{GudderS:effect}
\bibinfo{author}{Stanley \surnamestart Gudder\surnameend}
  (\bibinfo{year}{1997}): \emph{\bibinfo{title}{Effect test spaces and effect
  algebras}}.
\newblock {\sl \bibinfo{journal}{Foundations of Physics}}
  \bibinfo{volume}{27}(\bibinfo{number}{2}), pp. \bibinfo{pages}{287--304},
  \doi{10.1007/BF02550455}.

\bibitemdeclare{inproceedings}{AmarH:LICS15}
\bibitem{AmarH:LICS15}
\bibinfo{author}{Amar \surnamestart Hadzihasanovic\surnameend}
  (\bibinfo{year}{2015}): \emph{\bibinfo{title}{A Diagrammatic Axiomatisation
  for Qubit Entanglement}}.
\newblock In: {\sl \bibinfo{booktitle}{Symposium on Logic in Computer Science
  {(LICS)} 2015}}, \bibinfo{publisher}{{IEEE} Computer Society}, pp.
  \bibinfo{pages}{573--584}, \doi{10.1109/LICS.2015.59}.

\bibitemdeclare{inproceedings}{HeunenC:relations}
\bibitem{HeunenC:relations}
\bibinfo{author}{Chris \surnamestart Heunen\surnameend} \&
  \bibinfo{author}{Sean \surnamestart Tull\surnameend} (\bibinfo{year}{2015}):
  \emph{\bibinfo{title}{Categories of relations as models of quantum theory}}.
\newblock In \bibinfo{editor}{Chris \surnamestart Heunen\surnameend},
  \bibinfo{editor}{Peter \surnamestart Selinger\surnameend} \&
  \bibinfo{editor}{Jamie \surnamestart Vicary\surnameend}, editors: {\sl
  \bibinfo{booktitle}{{\rm Proceedings of QPL 2015}}}, {\sl
  \bibinfo{series}{Electronic Proceedings in Theoretical Computer Science}}
  \bibinfo{volume}{195}, \bibinfo{publisher}{Open Publishing Association}, pp.
  \bibinfo{pages}{247--261}, \doi{10.4204/EPTCS.195.18}.

\bibitemdeclare{article}{JacobsB:NewDir}
\bibitem{JacobsB:NewDir}
\bibinfo{author}{Bart \surnamestart Jacobs\surnameend} (\bibinfo{year}{2015}):
  \emph{\bibinfo{title}{New Directions in Categorical Logic, for Classical,
  Probabilistic and Quantum Logic}}.
\newblock {\sl \bibinfo{journal}{Logical Methods in Computer Science}}
  \bibinfo{volume}{11}(\bibinfo{number}{3}), \doi{10.2168/LMCS-11(3:24)2015}.

\bibitemdeclare{article}{Kelly-Laplaza}
\bibitem{Kelly-Laplaza}
\bibinfo{author}{G.~Max \surnamestart Kelly\surnameend} \&
  \bibinfo{author}{Manuel~L. \surnamestart Laplaza\surnameend}
  (\bibinfo{year}{1980}): \emph{\bibinfo{title}{Coherence for compact closed
  categories}}.
\newblock {\sl \bibinfo{journal}{Journal of Pure and Applied Algebra}}
  \bibinfo{volume}{19}, pp. \bibinfo{pages}{193 -- 213},
  \doi{10.1016/0022-4049(80)90101-2}.

\bibitemdeclare{article}{Mermin:moon}
\bibitem{Mermin:moon}
\bibinfo{author}{{N. David} \surnamestart Mermin\surnameend}
  (\bibinfo{year}{1985}): \emph{\bibinfo{title}{Is the moon there when nobody
  looks? Reality and the quantum theory}}.
\newblock {\sl \bibinfo{journal}{Physics Today}}, pp. \bibinfo{pages}{38--47},
  \doi{10.1063/1.880968}.

\bibitemdeclare{article}{MilnerR:fully-abstract}
\bibitem{MilnerR:fully-abstract}
\bibinfo{author}{Robin \surnamestart Milner\surnameend} (\bibinfo{year}{1977}):
  \emph{\bibinfo{title}{Fully abstract models of typed $\lambda$-calculi}}.
\newblock {\sl \bibinfo{journal}{Theoretical Computer Science}}
  \bibinfo{volume}{4}(\bibinfo{number}{1}), pp. \bibinfo{pages}{1 -- 22},
  \doi{10.1016/0304-3975(77)90053-6}.

\bibitemdeclare{book}{Neumann:foundations}
\bibitem{Neumann:foundations}
\bibinfo{author}{John \surnamestart von Neumann\surnameend}
  (\bibinfo{year}{1955}): \emph{\bibinfo{title}{Mathematical Foundations of
  Quantum Mechanics}}.
\newblock \bibinfo{series}{Investigations in physics},
  \bibinfo{publisher}{Princeton University Press}.

\bibitemdeclare{book}{Neumann:continuous}
\bibitem{Neumann:continuous}
\bibinfo{author}{John \surnamestart von Neumann\surnameend}
  (\bibinfo{year}{1960}): \emph{\bibinfo{title}{Continuous Geometry}}.
\newblock \bibinfo{series}{Princeton Landmarks in Mathematics and Physics},
  \bibinfo{publisher}{Princeton University Press}.

\bibitemdeclare{article}{PavlovicD:mapsI}
\bibitem{PavlovicD:mapsI}
\bibinfo{author}{Dusko \surnamestart Pavlovic\surnameend}
  (\bibinfo{year}{1995}): \emph{\bibinfo{title}{Maps {I}: relative to a
  factorisation system}}.
\newblock {\sl \bibinfo{journal}{J.~Pure Appl. Algebra}} \bibinfo{volume}{99},
  pp. \bibinfo{pages}{9--34}, \doi{10.1016/0022-4049(94)00054-M}.

\bibitemdeclare{article}{PavlovicD:mapsII}
\bibitem{PavlovicD:mapsII}
\bibinfo{author}{Dusko \surnamestart Pavlovic\surnameend}
  (\bibinfo{year}{1996}): \emph{\bibinfo{title}{Maps {II}: Chasing diagrams in
  categorical proof theory}}.
\newblock {\sl \bibinfo{journal}{J. of the IGPL}}
  \bibinfo{volume}{4}(\bibinfo{number}{2}), pp. \bibinfo{pages}{1--36},
  \doi{10.1093/jigpal/4.2.159}.

\bibitemdeclare{inproceedings}{PavlovicD:QI09}
\bibitem{PavlovicD:QI09}
\bibinfo{author}{Dusko \surnamestart Pavlovic\surnameend}
  (\bibinfo{year}{2009}): \emph{\bibinfo{title}{Quantum and classical
  structures in nondeterministic computation}}.
\newblock In \bibinfo{editor}{Peter \surnamestart Bruza\surnameend},
  \bibinfo{editor}{Don \surnamestart Sofge\surnameend} \&
  \bibinfo{editor}{Keith \surnamestart {van Rijsbergen}\surnameend}, editors:
  {\sl \bibinfo{booktitle}{Proceedings of Quantum Interaction 2009}}, {\sl
  \bibinfo{series}{Lecture Notes in Artificial Intelligence}}
  \bibinfo{volume}{5494}, \bibinfo{publisher}{Springer Verlag}, pp.
  \bibinfo{pages}{143--158}, \doi{10.1007/978-3-642-00834-4{\_}13}.
\newblock \urlprefix\url{http://arxiv.org/abs/0812.2266}.

\bibitemdeclare{article}{PavlovicD:QPL09}
\bibitem{PavlovicD:QPL09}
\bibinfo{author}{Dusko \surnamestart Pavlovic\surnameend}
  (\bibinfo{year}{2011}): \emph{\bibinfo{title}{Relating toy models of quantum
  computation: comprehension, complementarity and dagger autonomous
  categories}}.
\newblock {\sl \bibinfo{journal}{E. Notes in Theor. Comp. Sci.}}
  \bibinfo{volume}{270}(\bibinfo{number}{2}), pp. \bibinfo{pages}{121--139},
  \doi{10.1016/j.entcs.2011.01.027}.
\newblock \urlprefix\url{http://arxiv.org/abs/1006.1011}.

\bibitemdeclare{article}{PavlovicD:Qabs12}
\bibitem{PavlovicD:Qabs12}
\bibinfo{author}{Dusko \surnamestart Pavlovic\surnameend}
  (\bibinfo{year}{2012}): \emph{\bibinfo{title}{Geometry of abstraction in
  quantum computation}}.
\newblock {\sl \bibinfo{journal}{Proceedings of Symposia in Applied
  Mathematics}} \bibinfo{volume}{71}, pp. \bibinfo{pages}{233--267},
  \doi{10.1090/psapm/071/607}.
\newblock \urlprefix\url{http://arxiv.org/abs/1006.1010}.

\bibitemdeclare{article}{Foulis-Randall:test}
\bibitem{Foulis-Randall:test}
\bibinfo{author}{C.~H. \surnamestart Randall\surnameend} \&
  \bibinfo{author}{D.~J. \surnamestart Foulis\surnameend}
  (\bibinfo{year}{1970}): \emph{\bibinfo{title}{An Approach to Empirical
  Logic}}.
\newblock {\sl \bibinfo{journal}{The American Mathematical Monthly}}
  \bibinfo{volume}{77}(\bibinfo{number}{4}), pp. \bibinfo{pages}{363--374},
  \doi{10.2307/2316143}.

\bibitemdeclare{article}{RedeiM:why}
\bibitem{RedeiM:why}
\bibinfo{author}{Mikl{\'o}s \surnamestart R{\'e}dei\surnameend}
  (\bibinfo{year}{1996}): \emph{\bibinfo{title}{Why {John von Neumann} did not
  like the {Hilbert Space} formalism of quantum mechanics (and what he liked
  instead)}}.
\newblock {\sl \bibinfo{journal}{Studies in History and Philosophy of Modern
  Physics}} \bibinfo{volume}{27}(\bibinfo{number}{4}), pp.
  \bibinfo{pages}{493--510}, \doi{10.1016/S1355-2198(96)00017-2}.

\bibitemdeclare{article}{SelingerP:CPM}
\bibitem{SelingerP:CPM}
\bibinfo{author}{Peter \surnamestart Selinger\surnameend}
  (\bibinfo{year}{2007}): \emph{\bibinfo{title}{Dagger Compact Closed
  Categories and Completely Positive Maps}}.
\newblock {\sl \bibinfo{journal}{Electron. Notes Theor. Comput. Sci.}}
  \bibinfo{volume}{170}, pp. \bibinfo{pages}{139--163},
  \doi{10.1016/j.entcs.2006.12.018}.

\bibitemdeclare{article}{SpekkensR:toy}
\bibitem{SpekkensR:toy}
\bibinfo{author}{Robert~W. \surnamestart Spekkens\surnameend}
  (\bibinfo{year}{2007}): \emph{\bibinfo{title}{In defense of the epistemic
  view of quantum states: a toy theory}}.
\newblock {\sl \bibinfo{journal}{Physical Review A}} \bibinfo{volume}{75}, p.
  \bibinfo{pages}{032110}, \doi{10.1103/PhysRevA.75.032110}.

\end{thebibliography}
\end{document}